\newtheorem{theorem}{Theorem}[section]
\newtheorem{lemma}[theorem]{Lemma}
\newtheorem{corollary}[theorem]{Corollary}
\newtheorem{proposition}[theorem]{Proposition}
\title{\LARGE \bf
Fuzzy Consensus and Synchronization: Theory and Application to Critical Infrastructure Protection Problems.
}
\author{Stefano Panzieri$^*$, Gabriele Oliva$^*$ and Roberto Setola$^\dagger$\\
$^*$ Dipartimento di Informatica e Automazione, University ``Roma TRE'',\\
Via della Vasca Navale, 79, 00146, Roma, Italy.\\
panzieri@uniroma3.it, oliva@dia.uniroma3.it\\
$^\dagger$ University Campus Biomedico of Rome, Italy\\
r.setola@unicampus.it
}
\begin{document}

\maketitle
\thispagestyle{empty}
\pagestyle{empty}

\begin{abstract}
In this paper the Distributed Consensus and Synchronization problems  with fuzzy-valued initial conditions are introduced, in order to obtain a shared estimation of the state of a system based on partial and distributed observations, in the case where such a state is affected by ambiguity and/or vagueness.
The {\em Discrete-Time Fuzzy Systems} (DFS) are introduced as an extension of scalar fuzzy difference equations and some conditions for their stability and representation are provided. 

The proposed framework is then applied in the field of Critical Infrastructures; the consensus framework is used to represent a scenario  where human operators, each able to observe directly the state of a given infrastructure (or of a given area considering vast and geographically dispersed infrastructures), reach an agreement on the overall situation, whose severity is expressed in a linguistic, fuzzy way; conversely synchronization is used to provide a distributed interdependency estimation system, where an array of interdependency models is synchronized via partial observation.

\end{abstract}


\section{INTRODUCTION}
\label{sec_introduction}

The mathematical modeling of real systems is generally subject to non trivial issues; in fact, in many cases, the phenomenon to be modeled is excessively complex, and the resulting model may either be a simplistic model or a too complex representation, which, in both cases, risks to be unsuitable for practical use.
Indeterminacy and vagueness often arise as a consequence of our inability to exactly distinguish events in real situations; classical methodologies fail to cope with such vagueness.
 
Notice that, if the nature of errors is random or probabilistic, then it is possible to adopt a stochastic framework; however if the underlying structure is not probabilistic, for instance due to subjective modeling choices, a different formalism is required.

The role of indeterminacy and vagueness is particularly relevant when the system being modeled directly involves humans; for instance in the case where the state of the system represents a subjective preference or belief.

In order to address such a problem and provide an adequate modeling tool, the fuzzy formalism appears the most natural choice.

In the literature the \emph{synchronization} of identical distributed systems has been widely investigated \cite{Synchro2,Synchro4,Tuna1,Tuna2}. \emph{Synchronization} is intended as the convergence of the solutions of an array of identical systems to a common trajectory; the synchronization approach is said to be \emph{distributed} if each system receives data only by a subset of the other systems (i.e. only from its \emph{neighborhood}). When the trajectory is a stationary point (or a double integrator), the problem reduces to a \emph{consensus} problem \cite{Olfati1,Olfati2}. However a framework able to handle the synchronization of systems in the presence of uncertain and vague values has not yet been introduced.
Indeed \emph{uncertainty} and \emph{vagueness} is often present in many applicative contexts and, in particular, is necessarily introduced when human experts are involved.
This is especially true in the field of \emph{Critical Infrastructures} (CI) interdependency modeling, where models are often tuned by means of the information provided by human stakeholders and operators \cite{Lewis,SetolaMarino,Kelly:2001} or where, during crisis scenarios, human operators are in charge to estimate the effects of outages in widely dispersed and geographically distributed infrastructures.

In order to address the distributed consensus and synchronization problems under uncertainty, in this paper the \emph{Discrete-Time Fuzzy Systems} (DFS) are introduced and their stability is studied; moreover it is proven that, under rather general hypotheses, if the initial conditions of a crisp (i.e., non-fuzzy) system are fuzzyfied, then its stability properties are preserved.
The distributed consensus and synchronization problem are then extended in the fuzzy fashion, considering fuzzy initial conditions.

The paper is organized as follows: 
some preliminary definitions are collected in Section \ref{Preliminaries}, while the Discrete-time Fuzzy Systems are introduced in Section \ref{fuzzysystems};
Sections \ref{consensus}  reviews the distributed consensus problem and its fuzzy extension, while Section and \ref{synchronization} reviews  the distributed synchronization problem and its fuzzy extension; the application of such methodologies in the field of Critical Infrastructure Protection is discussed in Section \ref{application}, providing also some simulative case studies; finally some conclusive remarks are discussed in Section \ref{conclusions}, while, in order to ease the reading, some proofs are collected in the Appendix.

\section{Preliminaries}
\label{Preliminaries}
In the following, 
${ x}$ will denote a vector with fuzzy entries, while {\em crisp} (i.e., non-fuzzy) vectors will be denoted by ${ z}$.

Let $I_p$ denote the $p\times p$ identity matrix and let ${ c}_p$ be a vector of $p$ components, each equal to $c$.

Let $A \otimes B$ denote the Kronecker product of two square matrices $A$ ($n\times n$) and $B$ ($m\times m$), that is the $nm \times nm$ matrix:
\begin{equation}
\label{kronecker}
A \otimes B=\begin{bmatrix}
a_{11}B&\cdots&a_{1n}B\\
\vdots&\ddots&\vdots\\
a_{n1}B&\cdots&a_{nn}B\\
\end{bmatrix}
\end{equation}

Let $\mathbb{R},\mathbb{N}$ be the set of reals and integers, respectively and $\mathbb{R}^+,\mathbb{N}^+$ be the set of nonnegative real and integer numbers, respectively.
Let $\mathbb{K}_C^n$ be the space of nonempty compact convex subsets of $\mathbb{R}^n$.
Let $\mathbb{B}^n$ be the open unit ball in $\mathbb{R}^n$.

Given a space $\mathbb{X}$ and a particular distance $d_{\mathbb{X}}$ defined over such a space, the Hausdorff separation and the Hausdorff metric for two sets $A,B \subset \mathbb{X}$ are given by: 
\begin{eqnarray}
\rho^*_{d,\mathbb{X}}(A,B)=sup\{d_{\mathbb{X}}(a,b) : a\in A, b \in B\} \\
\rho_{d,\mathbb{X}}(A,B)=\max \{\rho^*_{d,\mathbb{X}}(A,B), \rho^*_{d,\mathbb{X}}(B,A) \}
\end{eqnarray}


Consider the following discrete-time crisp (i.e., non-fuzzy) system:
\begin{equation}
\label{Ieq:crispdiffsist0}
\begin{matrix}
{ z}(k+1)={ G}({ z}(k),k), & { z}(0)= { z}_0
\end{matrix}
\end{equation}
where $k\in \mathbb{N}^+$ represents the discrete time step, ${ G}: \mathbb{R}^n \times \mathbb{N}^+ \rightarrow \mathbb{R}^n$ is continuous and ${ z},{ z}_0 \in \mathbb{R}^n$.

Let ${ 0}_n=[0,\ldots,0]^T\in \mathbb{R}^n$; ${ 0}_n$ is a stable solution for System (\ref{Ieq:crispdiffsist0}) if, for each $\epsilon >0$  there exists a positive function $\delta(\epsilon)$ such that
\begin{equation}
\label{dds}
d_{\mathbb{R}^n}({ z}_0, { 0}_n)< \delta(\epsilon) \mbox{ implies } d_{\mathbb{R}^n}({ z}(k), { 0}_n) < \epsilon, \forall k\geq 0
\end{equation}

Moreover if further than (\ref{dds}) it is also verified that 
\begin{equation}
\lim_{k\rightarrow +\infty}d_{\mathbb{R}^n}({ z}(k), { 0})= 0
\end{equation}
 then ${ 0}_n$ is said to be a \emph{asymptotically stable} solution of (\ref{Ieq:crispdiffsist0}).

If a matrix M is non-negative (non-positive), i.e., it has only non-negative (non-positive) entries, write $M\geq 0$ ($M\leq 0$).
If $B-A$ is a nonnegative matrix, write $B\geq A$.

Let $g_i({ z}(k),k):\mathbb{R}^n \times \mathbb{N}^+  \rightarrow \mathbb{R}$ be the $i-th$ component of ${ G}({ z}(k),k)$ (i.e., ${ G}({ z}(k),k)$ is the column-wise concatenation of the $g_i({ z}(k),k)$, for all $i=1,\ldots n$).

Let ${ z}_a, { z}_b \in \mathbb{R}^n$; define the partial orderings $\geq$ as follows
\begin{equation}
\label{pord}
{ z}_a \geq { z}_b \Leftrightarrow z_{ai} \geq z_{bi}, \forall i=1,\ldots, n
\end{equation}

The definition of $\leq$ is analogous.

A continuous function $g_i({ z},k): \mathbb{R}^n \times \mathbb{N}^+ \rightarrow \mathbb{R}$ is said to be \emph{monotone nondecreasing} in ${ z}$ if:
\begin{equation}
{ z}_a \geq { z}_b \Rightarrow g_i({ z}_a,k) \geq g_i({ z}_b,k)\quad \forall k\in \mathbb{N}^+,\quad \forall { z}_a,{ z}_b\in \mathbb{R}^n
\end{equation}

Conversely, it is \emph{monotone nonincreasing} if:
\begin{equation}{ z}_a \leq { z}_b \Rightarrow g_i({ z}_a,k) \leq g_i({ z}_b,k)
\end{equation}
or in other terms if $g_i(\cdot,k)$ preserves the partial ordering $\leq$ (or $\geq$).

The above definition can be easily  extended to ${ G}({ z},k)$ if it preserves the partial ordering $\leq$ ($\geq$) defined in (\ref{pord}).

Notice that the monotone nondecreasing assumption for system \eqref{Ieq:crispdiffsist0} is equivalent to require that the system is {\em non-negative}, that is, its state ${ z}(k)$ remains nonnegative for each $k\in \mathbb{N}^+$ if the initial conditions ${ z}_0$ are nonnegative.
%
%
%

Let $\mathcal{G}=\{\mathcal{V},\mathcal{E},\Gamma\}$ be a {\em weighted graph} with $p$ nodes, where  the set $\mathcal{V}$ denotes the nodes ${\it v}_1, \ldots, {\it v}_p$; $\mathcal{E}$ is the set of edges $({\it v}_i,{\it v}_j)$.
Matrix ${ \Gamma}=\{\gamma_{ij}\}$ is the {\em weighted adjacency matrix} describing the network topology; it is composed of non-negative entries and  $\gamma_{ij} > 0 \Leftrightarrow ({\it v}_i,{\it v}_j) \in \mathcal{E} $,
i.e., there exists an arc that starts form node $v_i$ and reaches node $v_j$. The value of $\gamma_{ij}$ represents the weight of the arc.

The graph $\mathcal{G}$ is said to be \emph{undirected} if $({\it v}_j,{\it v}_i)\in \mathcal{E}$ whenever $({\it v}_i,{\it v}_j)\in \mathcal{E}$ (the weights are $\gamma_{ij}=\gamma_{ji}$ in this case); otherwise the graph $\mathcal{G}$ is said to be \emph{directed}.

An undirected  graph $\mathcal{G}$ contains a {\em spanning tree} if there is a tree composed of a subset of the edges that connects each node $v_i$ to each node $v_j$; in this case the undirected graph is said to be {\em connected}.
A  directed graph $\mathcal{G}$ contains a {\em directed spanning tree} for a node $v_i$ if there is a tree composed of a subset of the (oriented) edges that connects $v_i$ to each other node $v_j$; in this case the directed graph is said to be {\em simply connected}.

A directed graph $\mathcal{G}$ is {\em strongly connected} if for each couple of nodes ${\it v}_i$ and ${\it v}_j$ there is a path composed of edges in $\mathcal{E}$ that connects ${\it v}_i$ and ${\it v}_j$, respecting the orientation of the edges.
Clearly a connected undirected graph is also strongly connected.

The set of the {\em neighbors} of a node ${\it v}_i$ is denoted by $\mathcal{N}_i=\{{\it v}_j\in\mathcal{V} : ({\it v}_i,{\it v}_j)\in \mathcal{E}\} $. 

Let $\mathrm{deg}^{out}_{i}=\sum_{j=1}^p \gamma_{ij}$ be the {\it out}-degree of node ${\it v}_i$ (i.e., the number of outgoing links, weighted by the $\gamma$ coefficients) and let $\mathrm{deg}^{in}_{i}=\sum_{j=1}^p \gamma_{ji}$ be the {\it in}-degree of node ${\it v}_i$ (i.e., the number of incoming links, again, weighted by the $\gamma$ coefficients). Clearly, if the graph $\mathcal{G}$ is undirected the in-degree and the out-degree coincide for each node.
Generally speaking, if the in-degree equals the out-degree, the graph $\mathcal{G}$ is said to be {\em balanced}.

Let ${L}$ be the {\em Laplacian matrix} induced by the adjacency matrix $\Gamma$ of a graph $\mathcal{G}$, whose elements $\{l_{ij}\}$ are in the form:
\begin{equation}
\label{eq:Laplacian}
l_{ij}=
\begin{cases}
\sum_{k=1}^p \gamma_{ik}, & j=i\\
-\gamma_{ij}, & j\neq i
\end{cases}
\end{equation}

From this definition, it follows that $L=D-\Gamma$, where $D$ is a diagonal matrix containing the out-degrees of each node.
Note that, by construction, $L$ has zero row-sum, i.e., $\sum_{j=1}^p l_{ij}=0$ for all $i=1,\ldots, p$.
It is a well known result \cite{Olfati1,Wu:2005} that the eigenvalues of L are all non-negative, $\lambda=0$ being its smallest eigenvalue, whose multiplicity is equal to the number of connected components of the corresponding graph.
Moreover if the graph is strongly connected, $\lambda=0$ is a simple eigenvalue of $L$ with eigenvector $1_p$ (i.e., $L 1_p$ = 0). Moreover, if the graph is also balanced (or undirected), then $1_p^TL=0$ (i.e., $L$ has also zero column-sum).

\section{Discrete-Time Fuzzy Systems}
\label{fuzzysystems}

\begin{figure}[!ht]
\begin{center}
\includegraphics[width=3.5in]{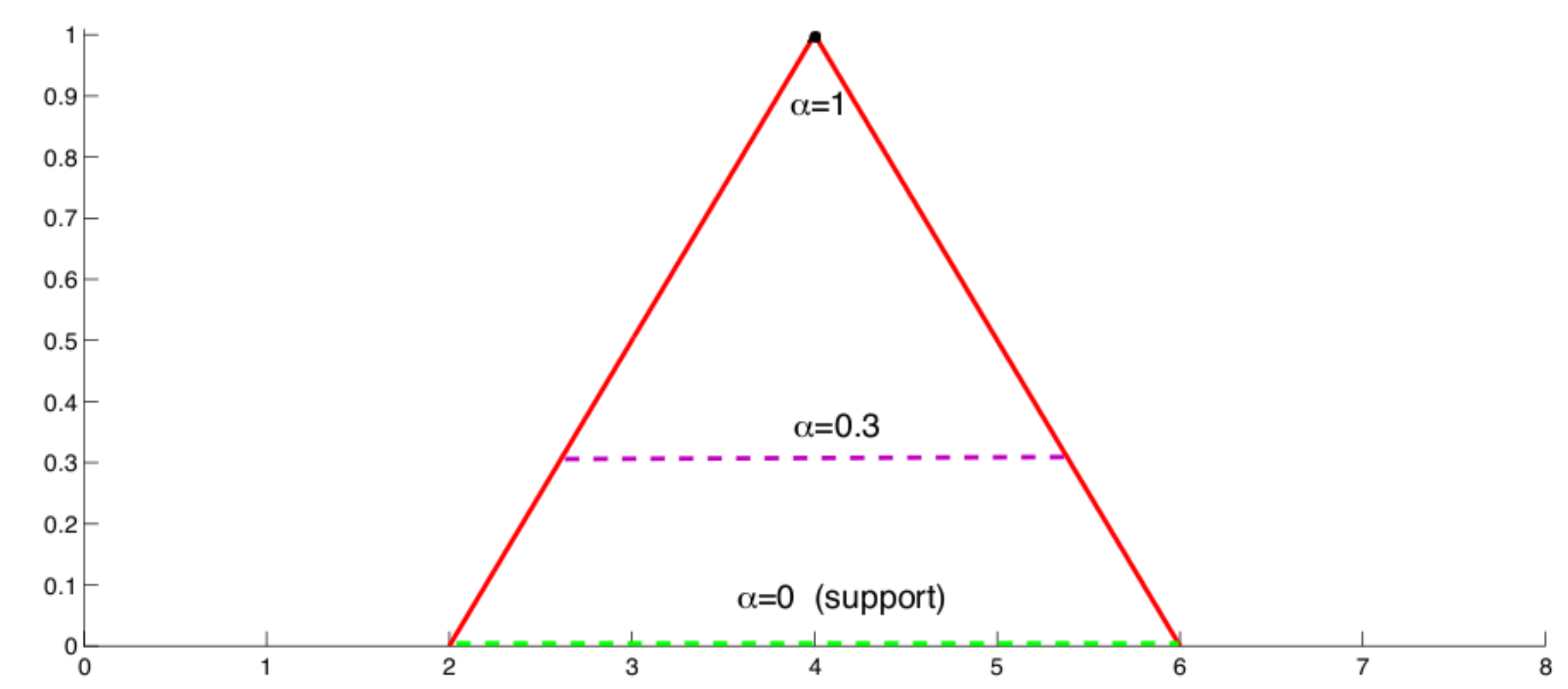}
\caption{Different $\alpha$-levels of a triangular-shaped fuzzy membership function. In particular the support ($\alpha=0$) coincides with the base of the triangle (i.e., the interval of real numbers on the abscyssae between the two endpoints), while for $\alpha=1$, due to the particular shape considered, a single (i.e., crisp) value is obtained. }
 \label{fig:alfa}
\end{center}
\end{figure}

\begin{figure}[!ht]
\begin{center}
\includegraphics[width=3.5in]{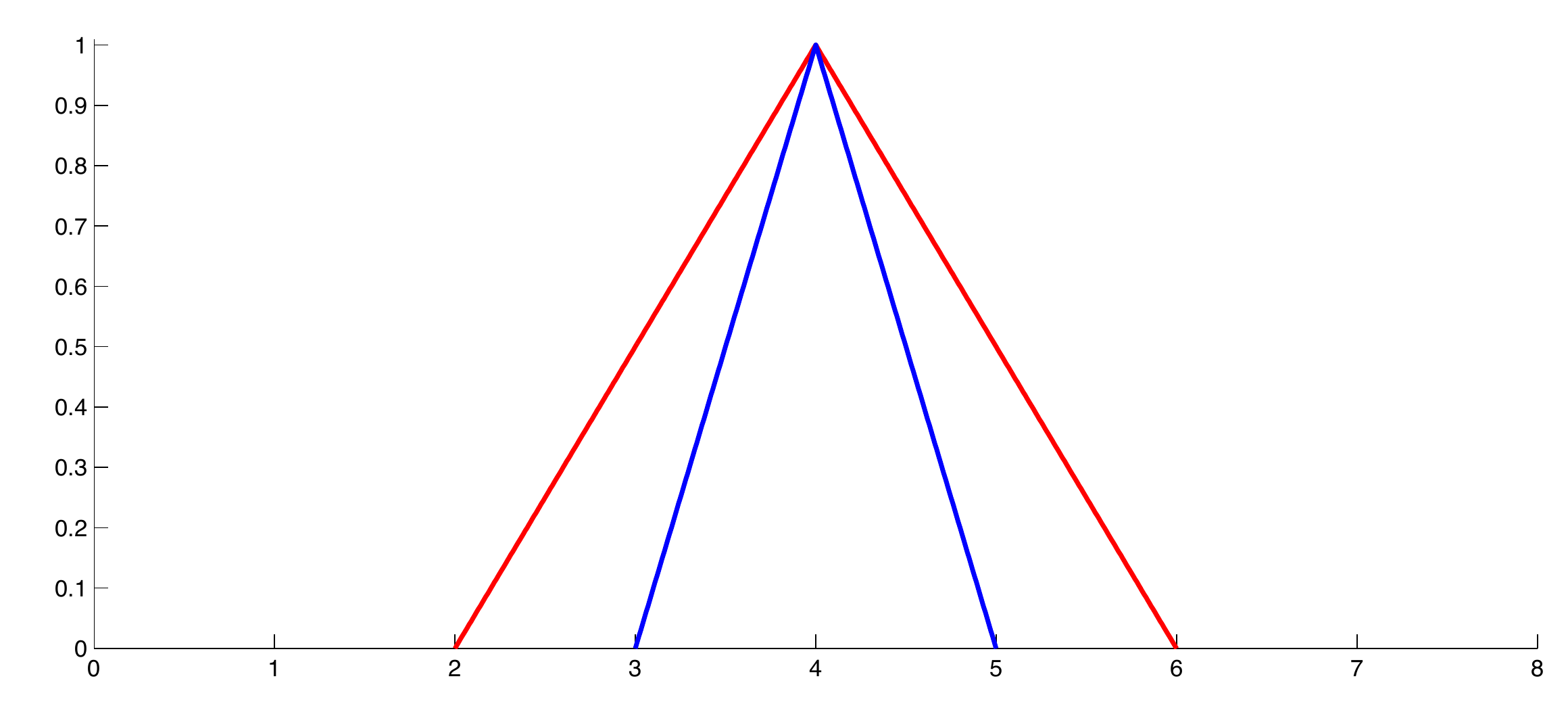}
\caption{ The two triangular fuzzy numbers depicted represent the same value ``about 4"; however the smaller one in blue is characterized by less uncertainty.}
 \label{fig:incertezza}
\end{center}
\end{figure}

\begin{figure}[!ht]
\begin{center}
\includegraphics[width=3.5in]{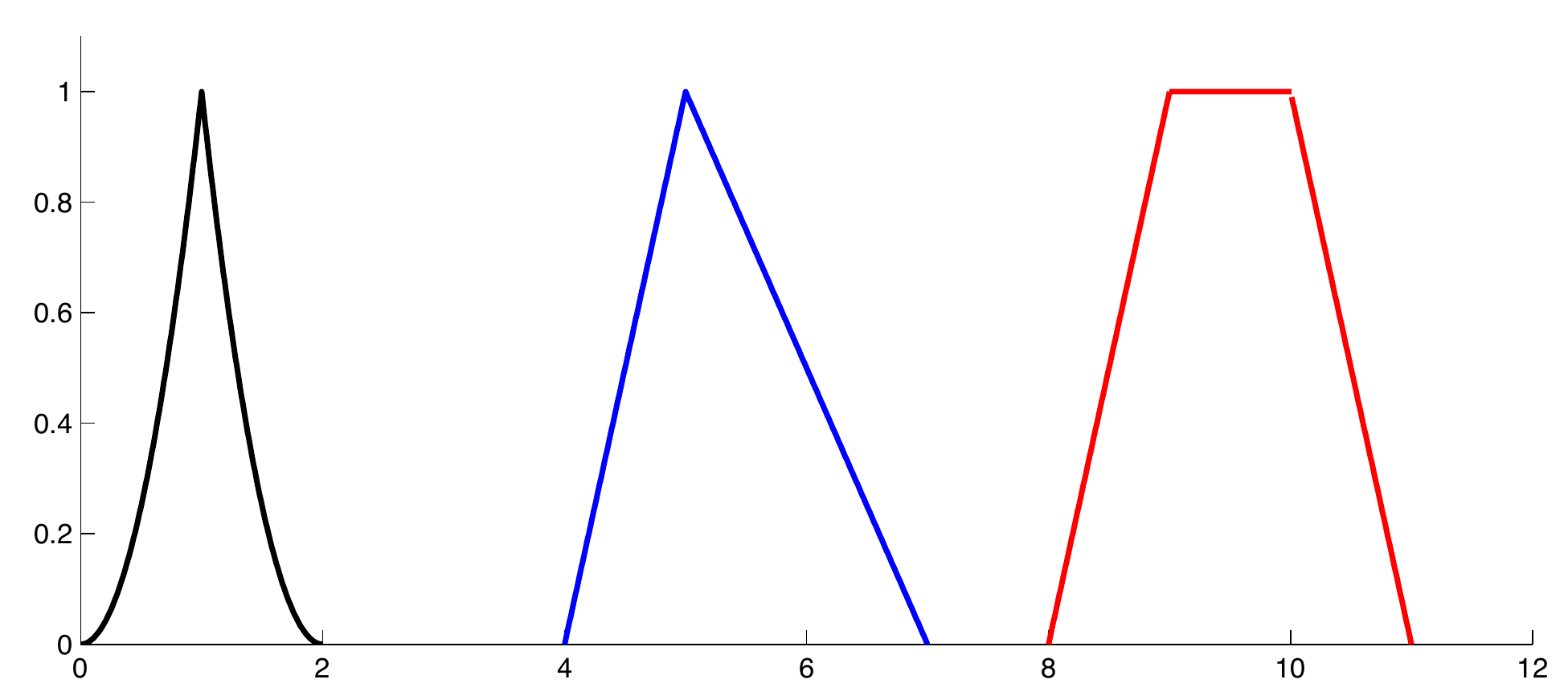}
\caption{Examples of fuzzy numbers $\mu \in \mathbb{E}$; many shapes are possible, although the Triangular Fuzzy Numbers (in blue) are the most used, because they can be described by the triple of the abscissae of their vertices ($\{4,5,7\}$ in this case). More complex shapes, however, allow to characterize better the uncertainty; the leftmost Fuzzy Number (in black) represents a case where uncertainty rapidly decreases while approaching the central value;  the rightmost Fuzzy Number, due to its trapezoidal shape, models the case where a single value with maximum belief can not be found.
Notice further that the shape of a FN needs not to be symmetric, thus allowing to represents different beliefs on the left and right spread of uncertainty with respect to the value associated with the maximum belief.    }
 \label{fig:shape}
\end{center}
\end{figure}

Let a fuzzy subset of $\mathbb{R}$ be defined in terms of a membership function $\mu:\mathbb{R} \rightarrow (0,1]$ which assigns to each point $p \in \mathbb{R}$ a grade of membership in the fuzzy set; such function is used to denote the corresponding fuzzy set.
Let the $p$-membership $\mu(p)$ be defined as the grade of membership of $p\in \mathbb{R}$ in the set $\mu$.

For each $\alpha\in[0,1]$, the $\alpha$-level set $[\mu]^\alpha$ of a fuzzy set is the subset of points $p \in\mathbb{R}$ with membership grade $\mu(p)\geq \alpha$.
The support $[\mu]^0$ of a fuzzy set is defined as the closure of the union of all its $\alpha$-level sets.
 
Let $\mathbb{E}$ be the space of all fuzzy subsets $\mu$ of $\mathbb{R}$ such that:
\begin{enumerate}
\item $\mu$ maps $\mathbb{R}$ onto $[0,1]$;
\item $[\mu]^0$ is a bounded subset of $\mathbb{R}$;
\item $[\mu]^\alpha$ is a compact subset of $\mathbb{R}$ for all $\alpha \in (0,1]$;
\item $\mu$ is \emph{fuzzy convex}, that is: $\mu(\phi p + (1-\phi)q) \geq min [\mu(p),\mu(q)]$ for all $p,q \in \mathbb{R}$
\end{enumerate}
the fuzzy sets of $\mathbb{E}$ are often called \emph{Fuzzy Numbers} (FN).
Indeed, as shown in figure (\ref{fig:alfa}), the use of $\alpha$-levels allows to address fuzzy numbers as a set of real intervals, as will be explained in Section \ref{levelwise}.

A \emph{triangular} fuzzy number (TFN) $\mu\in \mathbb{E}$, in particular, is described by an ordered triple $\{\mu_l,\mu_c,\mu_r\} \in\mathbb{R}^3$ with $\mu_l\leq \mu_c\leq \mu_r$ and such that $[\mu]^0=[\mu_l,\mu_r]$ and $[\mu]^1=\{\mu_c\}$, while in general the $\alpha$-level set is given, for any $\alpha \in [0,1]$ by:
\begin{equation}
[\mu]^\alpha = [\mu_c - (1-\alpha)(\mu_c -\mu_l), \mu_c + (1-\alpha)(\mu_r-\mu_c)]
\end{equation}

Figure (\ref{fig:incertezza}) shows two TFNs that represent the fuzzyfication of the same \emph{crisp} (i.e., non-fuzzy) number 4; notice that the width of the base of the triangle is a measure of the uncertainty associated to the TFN. Triangular representation is not the sole available alternative; as depicted in Figure (\ref{fig:shape}) many other shapes are possible, and the more complex is the shape, the more descriptive is the resulting fuzzy number (i.e., the uncertainty is  better characterized). For instance the existence of a plateau for a given interval represents complete indeterminacy for that interval, or, in the case of
 risk impact analysis, an asymmetry with respect to the peak may represents different beliefs for the best and worst cases.

The space  $\mathbb{E}$ is typically \cite{Laksh:2003} equipped with the following metric
\begin{eqnarray}
\label{hausdist}
d_\mathbb{E}(\mu,\nu)=\sup_{\alpha >0} \{\rho_\mathbb{R}([\mu]^\alpha,[\nu]^\alpha)\} & \mu,\nu \in \mathbb{E}
\end{eqnarray}
The {\em level-wise convergence} (i.e., the convergence of the $\alpha$-levels of the system) is defined as follows.
Let $\{\mu_n\}$ be a sequence on $\mathbb{E}$, then $\{\mu_n\}$ converges level-wise to $\mu \in \mathbb{E}$ if, for all $\alpha \in (0,1]$:
\begin{eqnarray}
\rho_{\mathbb{R}}([\mu_n]^\alpha,[\mu]^\alpha)\rightarrow 0 & \mbox{ as } & n\rightarrow \infty
\end{eqnarray}
Define  
\begin{equation}
\label{eq:psieq}
\Psi =  \{ \mu \in \mathbb{E}: \mu(\phi p + (1-\phi) q) \geq \phi \mu(p) + (1-\phi)\mu(q)\}
\end{equation}
for each $p, q \in [\mu]^0, \phi \in [0,1]$.
In  \cite{Laksh:2003} it is proved that convergence in $(\mathbb{E},d_{\mathbb{E}})$ implies level-wise convergence; moreover limiting to sequences in $\Psi$, the implication among convergence in $(\mathbb{E},d_{\mathbb{E}})$ and level-wise convergence can be revised.
The following theorem extends the above result to TFNs.
\begin{theorem}
\label{lem:tfn}
Limiting to sequences in the set of TFNs, level-wise convergence implies convergence in $(\mathbb{E},d_{\mathbb{E}})$.
\end{theorem}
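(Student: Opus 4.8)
The plan is to exploit the piecewise-linear structure of triangular fuzzy numbers, which collapses the supremum defining $d_\mathbb{E}$ into a finite maximum over three scalar quantities. First I would rewrite the level-set formula in a more convenient form: for a TFN $\mu=\{\mu_l,\mu_c,\mu_r\}$, elementary algebra recasts its $\alpha$-level endpoints as the convex combinations
\[
L_\mu(\alpha)=\alpha\mu_c+(1-\alpha)\mu_l,\qquad R_\mu(\alpha)=\alpha\mu_c+(1-\alpha)\mu_r,
\]
both \emph{affine} in $\alpha$ and interpolating between the support endpoints (at $\alpha=0$) and the peak $\mu_c$ (at $\alpha=1$). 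Since the Hausdorff distance between two real intervals equals the maximum of the absolute differences of their endpoints, for two TFNs $\mu_n$ and $\mu$ I would write, with $\Delta_l=\mu_{n,l}-\mu_l$, $\Delta_c=\mu_{n,c}-\mu_c$ and $\Delta_r=\mu_{n,r}-\mu_r$,
\[
\rho_\mathbb{R}([\mu_n]^\alpha,[\mu]^\alpha)=\max\{\,|\alpha\Delta_c+(1-\alpha)\Delta_l|,\ |\alpha\Delta_c+(1-\alpha)\Delta_r|\,\}.
\]

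The central observation is that each argument of this maximum is the absolute value of an affine function of $\alpha$, hence convex on $[0,1]$; a convex function attains its maximum at an endpoint of the interval. Taking the supremum over $\alpha\in(0,1]$ therefore recovers only the $\alpha=0$ and $\alpha=1$ values and yields the identity
\[
d_\mathbb{E}(\mu_n,\mu)=\max\{\,|\Delta_l|,\ |\Delta_c|,\ |\Delta_r|\,\},
\]
so that the $\mathbb{E}$-distance between two triangular fuzzy numbers is exactly the $\ell^\infty$ distance between their defining triples. This identity is the crux of the proof and the step I expect to require the most care: one must justify that the convex envelope is maximized at the two extreme levels, and that the $\alpha=0$ contribution $|\Delta_l|,|\Delta_r|$ is still captured as a limit even though $\alpha=0$ is excluded from the supremum.

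With this identity in hand the implication closes quickly. Evaluating level-wise convergence at $\alpha=1$ gives $\rho_\mathbb{R}([\mu_n]^1,[\mu]^1)=|\Delta_c|\to 0$. Fixing any $\alpha\in(0,1)$ and using $\Delta_c\to 0$, the hypothesis $|\alpha\Delta_c+(1-\alpha)\Delta_l|\to 0$ forces $(1-\alpha)\Delta_l\to 0$, hence $\Delta_l\to 0$ because $1-\alpha$ is a fixed positive constant; the same reasoning applied to the right endpoint gives $\Delta_r\to 0$. Substituting these three limits into the distance identity yields $d_\mathbb{E}(\mu_n,\mu)=\max\{|\Delta_l|,|\Delta_c|,|\Delta_r|\}\to 0$, which is precisely convergence in $(\mathbb{E},d_\mathbb{E})$, completing the argument.
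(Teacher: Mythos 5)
Your argument is correct, but it follows a genuinely different route from the paper's. The paper proves the theorem indirectly: it verifies, by a case analysis on the position of $p$ and $q$ relative to the peak $\mu_c$, that every TFN satisfies the concavity condition defining the set $\Psi$ of Eq.~(\ref{eq:psieq}), and then invokes the cited result of \cite{Laksh:2003} that for sequences in $\Psi$ level-wise convergence and convergence in $(\mathbb{E},d_{\mathbb{E}})$ coincide. You instead compute everything explicitly: writing the $\alpha$-level endpoints as the affine functions $\alpha\mu_c+(1-\alpha)\mu_l$ and $\alpha\mu_c+(1-\alpha)\mu_r$ (which is exactly the paper's level-set formula rearranged), observing that the Hausdorff distance between two compact intervals is the maximum of the endpoint differences, and using convexity of $|\text{affine}|$ to collapse the supremum onto $\alpha\in\{0,1\}$, you obtain the clean identity
\begin{equation*}
d_{\mathbb{E}}(\mu_n,\mu)=\max\{|\Delta_l|,|\Delta_c|,|\Delta_r|\},
\end{equation*}
from which the implication (and in fact the full equivalence of the two convergence notions on TFNs, since the reverse direction is immediate from the same identity) follows by testing level-wise convergence at $\alpha=1$ and at one interior level. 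Your handling of the excluded level $\alpha=0$ via continuity of the affine endpoints is the right fix for the $\sup_{\alpha>0}$ in the definition of $d_{\mathbb{E}}$. What each approach buys: the paper's route is shorter given the external machinery and extends verbatim to any concave-membership fuzzy numbers in $\Psi$ (trapezoidal, etc.), whereas yours is self-contained, avoids reliance on the quoted equivalence for $\Psi$, and yields the quantitative by-product that $d_{\mathbb{E}}$ restricted to TFNs is exactly the $\ell^\infty$ distance on the defining triples --- a fact that would also simplify several of the level-wise arguments later in the paper.
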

\begin{proof}
See 	Appendix.
\end{proof}\vspace{2mm}

In order to consider vectors of $N$ components, each being a FN, the space $\mathbb{E}^{N}$ has to be characterized; to this end we chose to equip $\mathbb{E}^{N}$ with the following metric:
\begin{equation}
\label{dizero}
d_{\mathbb{E}^N}(x,y) = \sum_{i=1}^N d_\mathbb{E}(x_i,y_i)
\end{equation}
where $x=[x_1, \ldots,x_n]^T$ and $y=[y_1, \ldots,y_n]^T$, $x,y \in \mathbb{E}^{N}$.

Define the $\alpha$-level of a vector of FNs $\nu\in \mathbb{E}^N$ as the set of vectors $z\in \mathbb{R}^N$ such that, $\forall i =1,\ldots, N$ $z_i $ belongs to the $\alpha$-level of $i$-th component $\nu_i$.

Let $\{\nu_n\}$ be a sequence on $\mathbb{E}^N$, then $\{\nu_n\}$ converges level-wise to $\nu \in \mathbb{E}^N$ if, for all $\alpha \in (0,1]$:
\begin{eqnarray}
\rho_{\mathbb{R}^N}([\nu_n]^\alpha,[\nu]^\alpha)\rightarrow 0 & \mbox{ as } & n\rightarrow \infty
\end{eqnarray}
Define  
\begin{equation}
\label{eq:psieq}
\begin{matrix}
\Psi^N =  \{ \nu \in \mathbb{E}^N:  \nu_i(\phi p + (1-\phi) q) \geq\\
\geq \phi \nu_i(p) + (1-\phi)\nu_i(q), \forall i=1,\ldots, N\}
\end{matrix}
\end{equation}
for each $p, q \in [\nu_i]^0, \phi \in [0,1]$.
Notice that, with the above definition of $\alpha$-level of a vector of FNs, we have that
\begin{equation}
\lim_{n\rightarrow \infty} \rho_{\mathbb{R}^N}([\nu_n]^\alpha,[\nu]^\alpha)= 0
\Leftrightarrow
\lim_{n\rightarrow \infty} \rho_{\mathbb{R}}([\nu_{ni}]^\alpha,[\nu_i]^\alpha)= 0 
\end{equation}
$\forall i=1,\ldots, N$, therefore the scalar results on level-wise convergence are extended to the vectorial case.

Define a {\em Discrete-Time Fuzzy System } (DFS) as follows: 
\begin{equation}
\label{fuzzysystem0}
\begin{matrix}
x(k+1) = F(x(k),k);& x(0)=x_0
\end{matrix}
\end{equation}
where $x,x_0\in \mathbb{E}^N$ and
$F \in \mathbb{F}_c[\mathbb{N}^+\times \mathbb{E}^{N}, \mathbb{E}^{N}]$.

In the following Section the Stability of DFS will be addressed.

\subsection{Stability of DFS}
\label{stabilityDFS}
Let a DFS system in the form 
\begin{equation}
\label{fuzzysystem}
\begin{matrix}
x(k+1) = F(x(k),k);& x(0)=x_0
\end{matrix}
\end{equation}
where $x,x_0\in \mathbb{E}^N$
and let a crisp system in the form
\begin{equation}
\label{eq:crispdiffsist} 
\begin{matrix}
z(k+1)=G(z(k),k), & z(0)= z_0
\end{matrix}
\end{equation}
where $z,z_0 \in \mathbb{R}^N$ and $G\in \mathbb{F}_c[\mathbb{N}^+\times \mathbb{R}^N,\mathbb{R}^N]$.

Analogously to the crisp case, let $\hat 0\in\mathbb{E}^{N}$ denote the trivial solution of Eq. (\ref{fuzzysystem}), which we assume to exist.
The trivial solution $\hat 0$ of  System (\ref{fuzzysystem}) is stable if, for each $\epsilon >0$  there exists a positive function $\delta(\epsilon)$ such that
\begin{equation}
d_{\mathbb{E}^N}[x_0, \hat 0]< \delta(\epsilon) \mbox{ implies } d_{\mathbb{E}^N}[x(k), \hat 0] < \epsilon, \forall k\geq 0
\end{equation}
If $d_{\mathbb{E}^N}[x(k), \hat 0]\rightarrow 0$, as $k\rightarrow +\infty$, then the trivial solution $\hat 0$ of  System (\ref{fuzzysystem}) is said to be \emph{asymptotically stable}.

Let us now discuss the solution of (\ref{fuzzysystem}) in terms of solutions
of the crisp system (\ref{eq:crispdiffsist}), extending the results reported in \cite{Trig,Laksh:2003}.

\begin{theorem}
\label{lem:stab3}
Let a DFS system (\ref{fuzzysystem}) and let a crisp (\ref{eq:crispdiffsist})
where $G(z(k),k)$ is a continuous function,  monotone-nondecreasing with respect to  $z(k)$ for $k\geq 0$. 

Suppose that exists a continuous and positive valued \emph{defuzzyfication function} 
$V(x(k),k): \mathbb{E}^{N}\times \mathbb{N}_+  \rightarrow \mathbb{R}_+^N$ such that, posing $z(0)=V(x(0),0)$, for each $k\geq 0$
\begin{equation}
\label{dpiuineq}
V(x(k+1),k+1) \leq G(V(x(k),k),k)
\end{equation}
Suppose further that there exists a continuous, monotone non-decreasing function $a(\cdot)$  defined in $\mathbb{R}\rightarrow \mathbb{R}_+$ such that
\begin{equation}
\label{bound_condition}
a(d_{\mathbb{E}^N}[x(k),\hat 0])\leq V_0(x(k),k)
\end{equation}
where $d_{\mathbb{E}^N}$ is the distance in $\mathbb{E}^N$ defined in (\ref{dizero}) and $V_0(x(k),k)$ is defined as 
\begin{equation}
V_0(x(k),k)=\sum_{j=1}^N V_i(x(k),k),\quad  \forall k\geq 0
\end{equation} 

Then the stability properties of the trivial solution of Eq.(\ref{eq:crispdiffsist}) imply the corresponding stability properties of the trivial solution of Eq.(\ref{fuzzysystem}).
\end{theorem}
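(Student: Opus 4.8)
The plan is to prove a discrete comparison principle and then transport the $\epsilon$--$\delta$ characterization of stability from the crisp system to the fuzzy one. The central object is the componentwise inequality $V(x(k),k)\le z(k)$, understood through the partial ordering (\ref{pord}), where $z(k)$ is the solution of the crisp system (\ref{eq:crispdiffsist}) started from the seed $z(0)=V(x(0),0)$. I would establish it by induction on $k$. The base case $k=0$ holds with equality by the choice of $z(0)$. For the inductive step, assuming $V(x(k),k)\le z(k)$, the hypothesis (\ref{dpiuineq}) gives $V(x(k+1),k+1)\le G(V(x(k),k),k)$, and since $G$ is monotone nondecreasing the induction hypothesis yields $G(V(x(k),k),k)\le G(z(k),k)=z(k+1)$, so $V(x(k+1),k+1)\le z(k+1)$. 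This monotonicity step is what couples the two dynamics and is the only place the nondecreasing assumption on $G$ is genuinely used.

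Next I would collapse the vector comparison to a scalar estimate compatible with the two metrics. Summing the componentwise inequalities $V_i(x(k),k)\le z_i(k)$ over $i$ yields $V_0(x(k),k)\le 1_N^{T}z(k)$. Because $V$ is positive valued, $z(0)\ge 0_N$, and since $G$ is nonnegative (equivalently, monotone nondecreasing, as noted after (\ref{pord})) the comparison trajectory stays nonnegative; hence $1_N^{T}z(k)=\|z(k)\|_1\le c\, d_{\mathbb{R}^N}(z(k),0_N)$ for a norm-equivalence constant $c>0$. Combining this with the lower bound (\ref{bound_condition}) gives the key estimate
\begin{equation}
\label{keyscalar}
a\big(d_{\mathbb{E}^N}[x(k),\hat 0]\big)\le V_0(x(k),k)\le 1_N^{T}z(k),
\end{equation}
valid for all $k\ge 0$, which is exactly the bridge between the fuzzy distance on the left and the crisp norm on the right.

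From (\ref{keyscalar}) I would read off both stability statements. Treating $a$ as a class-$\mathcal{K}$ function (continuous, nondecreasing, vanishing only at $0$), fix $\epsilon>0$: whenever $d_{\mathbb{R}^N}(z(k),0_N)<a(\epsilon)/c$ one has $a(d_{\mathbb{E}^N}[x(k),\hat 0])<a(\epsilon)$, so $d_{\mathbb{E}^N}[x(k),\hat 0]<\epsilon$ by monotonicity of $a$. On the initial-condition side, continuity of $V$ together with $V(\hat 0,0)=0$ produces a $\hat\delta>0$ such that $d_{\mathbb{E}^N}[x_0,\hat 0]<\hat\delta$ forces $d_{\mathbb{R}^N}(z(0),0_N)<\delta$, where $\delta$ is the radius furnished by stability of the crisp trivial solution for the tolerance $a(\epsilon)/c$. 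Chaining these implications yields $d_{\mathbb{E}^N}[x(k),\hat 0]<\epsilon$ for all $k$, i.e. stability. For asymptotic stability I would additionally invoke $d_{\mathbb{R}^N}(z(k),0_N)\to 0$; then (\ref{keyscalar}) forces $a(d_{\mathbb{E}^N}[x(k),\hat 0])\to 0$, and invertibility of the class-$\mathcal{K}$ function $a$ near the origin gives $d_{\mathbb{E}^N}[x(k),\hat 0]\to 0$.

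The routine parts are the induction and the norm bookkeeping. The main obstacle, and the point where latent hypotheses must be pinned down, is the initial-condition direction: transferring smallness of $d_{\mathbb{E}^N}[x_0,\hat 0]$ to smallness of the seed $z(0)=V(x(0),0)$ requires $V(\hat 0,\cdot)=0$ and continuity of $V$ at the trivial solution, while inverting (\ref{bound_condition}) to recover a distance estimate requires $a$ to be positive definite (vanishing only at $0$), not merely monotone nondecreasing as literally stated. I would therefore make these two requirements explicit at the outset, since without them the lower bound $a$ cannot be used to control the fuzzy distance and the $\epsilon$--$\delta$ transfer breaks down.
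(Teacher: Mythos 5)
Your proof follows essentially the same route as the paper's: a discrete comparison principle giving $V(x(k),k)\le z(k)$ (the paper invokes Theorem 5.2.1 of Lakshmikantham et al.\ where you carry out the induction directly, but the content is identical), followed by summing to $V_0(x(k),k)\le \sum_j z_j(k)$ and transferring the $\epsilon$--$\delta$ estimates through (\ref{bound_condition}). Your explicit flagging of the latent hypotheses --- that $a$ must be positive definite for the inversion $a(d)<a(\epsilon)\Rightarrow d<\epsilon$ and for the asymptotic limit, and that $V$ must vanish at $\hat 0$ and be continuous there so that smallness of $d_{\mathbb{E}^N}[x_0,\hat 0]$ controls the seed $z(0)=V(x(0),0)$ --- is correct and makes precise two steps that the paper's own proof leaves implicit.
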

\begin{proof}
See Appendix.
 \end{proof}\vspace{3mm}

The above comparison theorem is very useful, since the stability of fuzzy systems can be derived from the stability of a non-fuzzy system.
The following Corollary provides a useful parallelism between the stability of a fuzzy system and the stability of a crisp system obtained by defuzzyfication (i.e., the case in which $G(z(k),k)= F(z(k),k)$).

\begin{corollary}
\label{teolin}
Let a DFS in the form of Eq. (\ref{fuzzysystem}) and let the following crisp systems
\begin{eqnarray}
\label{xxxpippo}
z(k+1)= F(z(k),k)&z(0)=V(x(0),0)
\end{eqnarray}
such that $F(z,k)$ is monotone nondecreasing in $z$ for each $k\in \mathbb{N}^+$; then the stability properties of crisp system (\ref{xxxpippo}) imply the corresponding stability properties of  the DFS (\ref{fuzzysystem}).
\end{corollary}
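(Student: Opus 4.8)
The plan is to read the Corollary as the special instance of Theorem~\ref{lem:stab3} in which the comparison map coincides with the fuzzy dynamics, i.e. $G\equiv F$. Under this identification the continuity and monotone--nondecreasing hypotheses imposed on $G$ in Theorem~\ref{lem:stab3} are exactly the hypotheses assumed on $F$ in the Corollary (recall $F\in\mathbb{F}_c$). Hence the only thing left to supply is a defuzzyfication function $V$ and a comparison function $a(\cdot)$ for which conditions (\ref{dpiuineq}) and (\ref{bound_condition}) hold; once these are in place, Theorem~\ref{lem:stab3} transfers the stability properties of (\ref{xxxpippo}) to those of (\ref{fuzzysystem}) verbatim.

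First I would fix the defuzzyfication function componentwise by setting $V_i(x,k)=d_{\mathbb{E}}(x_i,\hat 0)$, the scalar distance of the $i$-th fuzzy component from the (crisp) fuzzy zero. This $V$ is nonnegative and continuous, hence a legitimate map $\mathbb{E}^N\times\mathbb{N}_+\to\mathbb{R}_+^N$. With this choice condition (\ref{bound_condition}) is immediate: by the definition (\ref{dizero}) of the product metric one has $V_0(x,k)=\sum_{i}d_{\mathbb{E}}(x_i,\hat 0)=d_{\mathbb{E}^N}[x,\hat 0]$, so (\ref{bound_condition}) holds, even with equality, taking $a(\cdot)$ to be the identity, which is continuous and monotone nondecreasing.

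The substantive step is verifying the comparison inequality (\ref{dpiuineq}) with $G=F$, i.e. $V_i\big(F(x(k),k),k+1\big)\le f_i\big(V(x(k),k),k\big)$ for each $i$. Here I would use the level-wise action of a monotone--nondecreasing fuzzy map: via the extension principle each $\alpha$-level $[F(x)_i]^\alpha$ is the image under $f_i$ of the $\alpha$-levels of the components of $x$, and because $f_i$ is monotone nondecreasing the endpoints of $[F(x)_i]^\alpha$ are obtained by evaluating $f_i$ at the corresponding endpoints of the $[x_j]^\alpha$. Recalling that for the nonnegative states at hand the quantity $d_{\mathbb{E}}(\mu,\hat 0)=\sup_{\alpha}\rho_{\mathbb{R}}([\mu]^\alpha,\{0\})$ is attained at the right endpoint of the support, and that the right endpoints are nonincreasing in $\alpha$, taking the supremum over $\alpha$ (which, by monotonicity and continuity of $f_i$, commutes with $f_i$) yields $V_i(F(x(k),k),k+1)=f_i(V(x(k),k),k)$, so (\ref{dpiuineq}) holds, indeed with equality.

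With $V$ and $a(\cdot)$ as above all the hypotheses of Theorem~\ref{lem:stab3} are met for $G\equiv F$, and its conclusion is exactly the assertion of the Corollary. The main obstacle is the third step: one must argue carefully that the monotone fuzzy map acts level-wise, so that the extremal endpoints of the output are governed by $f_i$ evaluated at the extremal endpoints of the input. This is precisely where the monotone--nondecreasing (equivalently, nonnegativity) assumption is essential, since it guarantees that the endpoint realizing the distance $d_{\mathbb{E}}(\cdot,\hat 0)$ is the right support endpoint and that no cancellation between the left and right spreads can spoil the estimate; tracking the supremum over $\alpha$ simultaneously for all components is the remaining technical point.
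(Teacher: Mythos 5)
Your proposal is correct and follows essentially the same route as the paper: the same choice $V_i(x,k)=d_{\mathbb{E}}(x_i,\hat 0)$, a linear (in your case identity) comparison function $a(\cdot)$ making (\ref{bound_condition}) trivial, and a level-wise monotonicity argument for (\ref{dpiuineq}) based on the fact that $d_{\mathbb{E}}(\cdot,\hat 0)$ is realized at an endpoint of the support. Your verification of (\ref{dpiuineq}) is in fact somewhat more explicit than the paper's (which asserts the inequality directly from monotonicity), the only minor caveat being that your claimed equality relies on nonnegativity of the states, whereas only the inequality $\leq$ is needed and holds in general.
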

\begin{proof}
See Appendix.
\end{proof}\vspace{3mm}

\subsection{Levelwise Representation}
\label{levelwise}

Each state variable $x_i$ is such that, at any time $k$ its $\alpha$-level is given by
\begin{equation}
\label{fuzzyintervals}
[x_i(k)]^\alpha = [\underline x_i^\alpha(k),\bar x_i^\alpha(k)], \quad \forall i=1, \ldots, N
\end{equation}

In \cite{Pearson,Seikkala,Kay}, it is shown that, for each time $k$ and for each $\alpha$-level , the evolution of the system can be described by $2N$ crisp difference equations for the endpoints of the intervals of eq. (\ref{fuzzyintervals}). 

In the linear and stationary case where $F(x,k)=Fx$, (i.e., $F$ is a $N\times N$ matrix), matrix $F$ is decomposed into a monotone nondecreasing $F^+$ and a monotone nonincreasing $F^-$.
The evolution for each $\alpha$-level is given by

\begin{equation}
\label{POSNEG}
\begin{bmatrix}
 \underline x^\alpha(k+1)\\
\bar x^\alpha(k+1)\\
\end{bmatrix}
\mbox{ }=
\mbox{ }
\begin{bmatrix}
F^+ & F^-\\
F^- & F^+
\end{bmatrix}
\begin{bmatrix}
 \underline x^\alpha(k)\\
\bar x^\alpha(k)\\
\end{bmatrix}
\end{equation} 

Indeed, under the monotonicity assumption it follows that $F^+=F$ and $F^-=0$; therefore the stability of System (\ref{POSNEG}) is assured by the stability of the crisp system $z(k+1)=Fz(k)$, i.e., the above stacked system is composed of two isolated replica of the original model. 

Dropping monotonicity assumption ($\geq$), however, $F^-\neq 0$.

Let the transform matrix $T=\begin{bmatrix}I&-I\\I&I\end{bmatrix}$; changing coordinates by means of this transformation we have that 
\begin{equation}
\begin{matrix}
T\begin{bmatrix}
F^+ & F^-\\
F^- & F^+
\end{bmatrix}T^{-1}=\begin{bmatrix}
F^+-F^- & 0\\
0& F^+ + F^-
\end{bmatrix}
=\begin{bmatrix}
|F| & 0\\
0& F
\end{bmatrix}
\end{matrix}
\end{equation}
Since the transformed matrix is block diagonal, the stability, for each $\alpha$-level is obtained if both $F$ and $|F|$ are stable in the discrete-time sense or, in other terms, the stability of the absolute valued dynamic crisp system is required to assure the stability of the fuzzy system.

\section{Consensus}
\label{consensus}
Let $\mathcal{G}=\{\mathcal{V},\mathcal{E},\Gamma\}$ be a graph with $p$ nodes, and assume without loss of generality that the graph has unitary weights.

Let us consider without loss of generality the scalar consensus, and let ${z}_i\in\mathbb{R}$ be the state variable associated to the $i$-th node.

Nodes $i$ and $j$ are said to {\em agree} if ${z}_i={z}_j$ and consequently the graph $\mathcal{G}$ {\em agrees} if each couple of nodes $v_i$ and $v_j$ agrees, for all $v_i,v_j\in\mathcal{V}$. Whenever all the nodes of a network are in agreement, the common value of all nodes is called the {\em group decision value}.

Let each node in the network be modeled as a discrete-time {\em dynamic agent}, whose dynamics is in the form:
\begin{equation}
\label{eq:dynamicagent}
\begin{matrix}
{z}_i(k+1)=g({z}_i(k),{e}_i(k)),& {z}_i(0)={z}_{i0}
\end{matrix}
\end{equation}
where ${e}_i(k)\in\mathbb{R}$ represents the input for $i$-th agent and ${z}_{i0}$ is the initial condition vector for $i$-th agent.
The stacked dynamics for all the $p$ agents is given by:
\begin{equation}
\label{stack_dyn_agent}
{ z}(k+1)=G({ z}(k),{ e}(k)) \quad { z}(0)=\begin{bmatrix}{z}_{1}(0), \cdots, {z}_{p}(0)\end{bmatrix}^T
\end{equation}
where ${ z}, { e} \in \mathbb{R}^{p}$ are stack vectors composed of the state variables and inputs of each agent, respectively, i.e., ${ z}=\begin{bmatrix}{z}_1&\cdots&{z}_p\end{bmatrix}^T$ and ${ e}=\begin{bmatrix}{e}_1&\cdots&{e}_p\end{bmatrix}^T$; $G({ z}(k),{ e}(k))$  is the column-wise concatenation of the elements $g({z}_i(k),{e}_i(k))$.

Let a function $\chi:\mathbb{R}^{p} \rightarrow \mathbb{R}$; the {\em $\chi$-consensus problem} can be interpreted as a distributed way to calculate $\chi({ z}(0))$ by using as inputs for each node only information depending on the values of its neighbors $\mathcal{N}_i$.

Define a {\em protocol}, i.e.,
\begin{equation}
{e}_i(k)=f_i({z}_{j_1}(k),\ldots,{z}_{jm_i}(k))
\end{equation}
with $j_1,\ldots,j_{m_i } \in   \mathcal{N}_i  \cup \{i\}$ and, obviously, $m_i< p$. A protocol asymptotically solves the $\chi$-consensus problem if and only if there exists an asymptotically stable equilibrium ${z}^*$ of (\ref{stack_dyn_agent}) such that for each node ${z}^*_i=\chi({ z}(0))$ for all $i\in[1,p]$.

In the literature different typologies of consensus have been addressed; in the following we will review the 
average consensus problem for networks of discrete-time single and double integrators.
\subsection{Single Integrators}
Consider a network composed of $p$ dynamic agents, each one described by an integrator \cite{Olfati1}:
\begin{equation}
\label{eq:dynamicagent_firstorder_cont}
\begin{matrix}
\dot z_i(t)=e_i(t),&z_i(0)=z_{i0}&\forall v_i\in\mathcal{V}
\end{matrix}
\end{equation}
where $ z_i\in \mathbb{R}$, $ e_i\in \mathbb{R}$.
In the discrete-time fashion the system becomes:
\begin{equation}
\label{eq:dynamicagent_firstorder}
\begin{matrix}
z_i(k+1)=z_i(k)+\tau e_i(k),&z_i(0)=z_{i0}&\forall v_i\in\mathcal{V}
\end{matrix}
\end{equation}
where $\tau >0$ represents the sampling time.
In \cite{Olfati1} the following protocol is used to solve the continuous-time average consensus problem:
\begin{equation}
\label{eq:protocol_firstorder}
\begin{matrix}
e_i(t)=\sum_{j\in \mathcal{N}_i}\mbox{ }a_{ij}[z_j(t)-z_i(t)]
\end{matrix}
\end{equation}
where $a_{ij}$ are the coefficients of the adjacency matrix of the considered graph.
The resulting stacked dynamic system for the $p$ agents is given by
\begin{equation}
\label{eq:overalldynamic_firstorder}
\begin{matrix}
\dot { z}(t)=-L{ z}(t), & { z}(0)={ z}_0
\end{matrix}
\end{equation}
where ${ z}\in \mathbb{R}^p$ and $L$ is the {\em graph Laplacian} induced by $\Gamma$.

In \cite{WeiRen} it is proved that, if the graph $\mathcal{G}$ is simply connected (i.e., contains at least a directed spanning tree) a consensus equal to a linear combination of the initial conditions of the agents is reached, while the consensus coincides with the actual average of the initial conditions if the graph is strongly connected and balanced.

When the protocol is applied in the discrete-time fashion, the resulting stacked dynamics is in the form:
\begin{equation}
\label{eq:overalldynamic_firstorder_discrete}
\begin{matrix}
{ z}(k+1)=P_\tau { z}(k)
\end{matrix}
\end{equation}
where $P_\tau=I_{p}-\tau L$ is called the {\em Perron matrix} \cite{Olfati1}. 

The following Lemma \cite{Olfati1} provides a stability condition for the discrete time first order average consensus problem.
\begin{lemma}
\label{lemma_fo}
Let $\tau_1^*=1/l^*$; then choosing $\tau < \tau_1^*$, protocol (\ref{eq:protocol_firstorder}) solves the consensus problem for a network of discrete-time single integrator agents if the graph $\mathcal{G}$ contains at least a directed spanning tree.
If the graph $\mathcal{G}$ is also strongly connected and balanced, then the average consensus is achieved.
\end{lemma}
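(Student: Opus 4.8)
The plan is to analyze the spectrum of the Perron matrix $P_\tau=I_p-\tau L$, exploiting the relation between its eigenvalues and those of the Laplacian $L$ induced by $\Gamma$. First I would observe that if $\mu$ is an eigenvalue of $L$ then $1-\tau\mu$ is an eigenvalue of $P_\tau$ with the same eigenvector. Since $L1_p=0$, the value $\mu=0$ yields the eigenvalue $1$ of $P_\tau$ with right eigenvector $1_p$; this is precisely the consensus direction, so every vector $c\,1_p$ is an equilibrium of (\ref{eq:overalldynamic_firstorder_discrete}). The whole problem therefore reduces to proving that, under the stated hypotheses, the eigenvalue $1$ is simple and all remaining eigenvalues of $P_\tau$ lie strictly inside the open unit disc, so that $P_\tau^{k}$ converges to a rank-one projector onto the consensus direction and the consensus equilibrium is asymptotically stable.

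The core of the argument is a Gershgorin disc estimate. Each eigenvalue $\mu$ of $L$ lies in some disc centred at $l_{ii}=\mathrm{deg}^{out}_i$ with radius $\sum_{j\neq i}|l_{ij}|=\mathrm{deg}^{out}_i$, since the off-diagonal entries are $-\gamma_{ij}\le 0$. Under the affine map $\mu\mapsto 1-\tau\mu$ this disc becomes a disc centred at $1-\tau\,\mathrm{deg}^{out}_i$ with radius $\tau\,\mathrm{deg}^{out}_i$, whose rightmost point equals exactly $1$ for every $i$. Choosing $\tau<\tau_1^*=1/l^*$ with $l^*=\max_i \mathrm{deg}^{out}_i$ guarantees that each such disc is contained in the closed unit disc and is tangent to the unit circle only at the point $1$, because its leftmost point $1-2\tau\,\mathrm{deg}^{out}_i$ stays strictly above $-1$. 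Consequently every eigenvalue of $P_\tau$ arising from a nonzero eigenvalue of $L$ falls strictly inside the open unit disc.

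Next I would invoke the connectivity hypothesis. As recalled in Section \ref{Preliminaries}, when $\mathcal{G}$ contains a directed spanning tree the eigenvalue $\mu=0$ of $L$ is simple, so $1$ is a simple eigenvalue of $P_\tau$ and $1_p$ spans its one-dimensional eigenspace. Combining this with the disc estimate, $1$ is the unique eigenvalue of $P_\tau$ on the unit circle and it is simple; hence $\lim_{k\to\infty}P_\tau^{k}=1_p w^T$, where $w$ is the left eigenvector of $P_\tau$ (equivalently of $L$ for the eigenvalue $0$) normalised so that $w^T1_p=1$. Therefore $z(k)=P_\tau^{k}z_0\to (w^T z_0)\,1_p$, i.e. the network reaches the asymptotically stable consensus equilibrium $z_i^*=w^T z_0=\chi(z(0))$, which proves the first claim.

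For the second claim I would specialise $w$. If $\mathcal{G}$ is balanced then $1_p^T L=0$, so $1_p^T$ is itself a left eigenvector of $L$ (and of $P_\tau$) for the eigenvalue $0$ (resp.\ $1$); strong connectedness guarantees this eigenvalue is simple, so $w$ must be proportional to $1_p$, and the normalisation $w^T1_p=1$ forces $w=\tfrac{1}{p}1_p$. The limit then becomes $z(k)\to \big(\tfrac{1}{p}\sum_{i=1}^p z_{i0}\big)1_p$, the exact average of the initial conditions. The main technical obstacle is the Gershgorin step: because the digraph Laplacian generally has complex eigenvalues, one must rule out eigenvalues of $P_\tau$ lying on the unit circle other than $1$, and the clean tangency of every transformed disc at the point $1$---valid precisely when $\tau<1/l^*$---is what makes this work; the simplicity of the eigenvalue $1$, needed for a rank-one rather than a higher-rank limit, likewise rests entirely on the spanning-tree assumption.
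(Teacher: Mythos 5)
Your argument is correct. Note that the paper itself does not prove this lemma; it is stated as a known result imported from the cited reference on consensus for networks of discrete-time integrators, so there is no in-paper proof to compare against. Your spectral route --- mapping the Gershgorin discs of $L$ (each centred at $\mathrm{deg}^{out}_i$ with radius $\mathrm{deg}^{out}_i$) through $\mu\mapsto 1-\tau\mu$ so that every transformed disc sits inside the closed unit disc and meets the unit circle only at $1$ when $\tau<1/l^*$, then using simplicity of the zero Laplacian eigenvalue under the directed-spanning-tree hypothesis to get $P_\tau^k\to 1_p w^T$ and specialising $w=\tfrac{1}{p}1_p$ in the balanced, strongly connected case --- is precisely the standard argument behind the cited result, and all the steps (containment of the discs, simplicity of the eigenvalue $1$, identification of the left eigenvector) are justified as you state them.
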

\vspace{3mm}

Notice that, as illustrated in the following, the condition on the sampling rate $\tau<\tau_1^*$ is a sufficient and conservative estimation of the maximum sampling rate that guarantees the convergence of the consensus problem.

\subsection{Double Integrators}
Consider a network composed of $p$ dynamic agents, each one described by a {\emph double integrator} \cite{WeiRen0,WeiRen01,Olfati12}:
\begin{equation}
\label{eq:dynamicagent_secondorder}
\begin{matrix}
\ddot z_i(t)=e_i(t),& \dot z_i(0)=\dot z_{i0},& z_i(0)=z_{i0},& \forall v_i\in\mathcal{V}
\end{matrix}
\end{equation}
where $z_i\in \mathbb{R}$ and $e_i\in \mathbb{R}$.

In \cite{Olfati12} the protocol used to solve the continuous-time average consensus problem is:
\begin{equation}
\label{eq:protocol_firstorder_continuous}
\begin{matrix}
e_i(t)=\sum_{j\in \mathcal{N}_i}a_{ij}[\dot z_i(t)-\dot z_i(t)]+\sum_{j\in \mathcal{N}_i}a_{ij}[z_i(t)-z_j(t)]
\end{matrix}
\end{equation}
Defining ${ z}_a={ z} \in \mathbb{R}^p$ and ${ z}_b=\dot { z} \in \mathbb{R}^p$, the resulting dynamic for the $p$ agents can be posed in the form
\begin{equation}
\label{eq:dynamicagent_secondorder}
\begin{cases}
\dot { z}_{a}(t)={ z}_{b}(t)\\
\dot { z}_{b}(t)=-L{ z}_{a}(t) - { z}_{b}(t)+{ e}(t)
\end{cases}
\end{equation}
where ${ z}_a(0)={ z}_{0}=\begin{bmatrix}z_{1}^0, \cdots, z_{p}^0\end{bmatrix}^T$ and ${ z}_b(0)=\dot { z}_{0}=\begin{bmatrix}\dot z_{1}^0, \cdots, \dot z_{p}^0\end{bmatrix}^T$.

In order to obtain a discrete-time representation of an agent described by a double integrator, in \cite{WeiRen1,WeiRen2} the above model is sampled with sample time $\tau$, and the following protocol is adopted
\begin{equation}
\label{eq:protocol_secondorder_weiren}
\begin{matrix}
e_i(k)=\sum_{j\in \mathcal{N}_i}a_{ij}[(z_{ai}(k)-z_{aj}(k))]+\\ 	\\
+\sum_{j\in \mathcal{N}_i}a_{ij}[z_{bi}(k)-z_{bj}(k))]
\end{matrix}
\end{equation}
providing a condition for the stability of the resulting system.

In this paper, however, we will adopt a simpler formulation, considering two inputs $e_{ai}$ and $e_{bi}$. Hence the model of the $i$-th agent, in the proposed formulation, becomes:
\begin{equation}
\label{eq:dynamicagent_secondorder}
\begin{cases}
\dot z_{ai}(t)=z_{bi}(t)+e_{ai}(t)\\
\dot z_{bi}(t)=e_{bi}(t)
\end{cases}
\end{equation}
and consequently its discretization with sample time $\tau$ is
\begin{equation}
\label{eq:dynamicagent_secondorder}
\begin{cases}
z_{ai}(k+1)=z_{ai}(k)+\tau z_{bi}(k)+\tau e_{ai}(k)\\
z_{bi}(k+1)=z_{bi}(k)+\tau e_{bi}(k)
\end{cases}
\end{equation}

\begin{theorem}
\label{teo_so}
Let $p$ systems in the form of Eq. (\ref{eq:dynamicagent_secondorder}), such that their graph $\mathcal{G}$ contains at least a directed spanning tree. 
Then if $\tau<\frac{1}{l^*}$ the agents reach a consensus using the protocol
\begin{equation}
\label{eq:protocold_secondorder}
\begin{cases}
e_{ai}(k)=\sum_{j\in \mathcal{N}_i}a_{ij}[(z_{ai}-z_{aj})]\\
e_{bi}(k)=\sum_{j\in \mathcal{N}_i}a_{ij}[(z_{bi}-z_{bj})]
\end{cases}
\end{equation}

Moreover if the graph $\mathcal{G}$ is also strongly connected and balanced, then the consensus reaches the average of the initial conditions.
\end{theorem}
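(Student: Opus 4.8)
The plan is to reduce the double-integrator consensus to the single-integrator result of Lemma~\ref{lemma_fo} by exploiting the cascade (block-triangular) structure of the closed loop. First I would substitute protocol (\ref{eq:protocold_secondorder}) into the sampled agent model and stack the $p$ equations. Writing the feedback through the graph Laplacian with the consensus-stabilizing sign, exactly as in the first-order protocol (\ref{eq:protocol_firstorder})--(\ref{eq:overalldynamic_firstorder_discrete}), the closed loop takes the form
\begin{equation}
\begin{bmatrix} z_a(k+1)\\ z_b(k+1)\end{bmatrix}=\begin{bmatrix} P_\tau & \tau I_p\\ 0 & P_\tau\end{bmatrix}\begin{bmatrix} z_a(k)\\ z_b(k)\end{bmatrix},
\end{equation}
where $P_\tau=I_p-\tau L$ is the Perron matrix. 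The crucial observation is that the velocity block obeys the \emph{decoupled} recursion $z_b(k+1)=P_\tau z_b(k)$, i.e.\ precisely the single-integrator dynamics (\ref{eq:overalldynamic_firstorder_discrete}), while the position block is a single-integrator consensus driven by the input $\tau z_b(k)$.

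Then I would pass to disagreement coordinates. Since the graph contains a directed spanning tree, $\lambda=0$ is a simple eigenvalue of $L$ with right eigenvector $1_p$, and the remaining $\lambda_2,\dots,\lambda_p$ have strictly positive real part; correspondingly $P_\tau$ has the simple eigenvalue $1$ on the unit circle (eigenvector $1_p$) and eigenvalues $1-\tau\lambda_i$ for $i\geq2$. The step that carries the weight is showing that, for $\tau<1/l^*$, each $1-\tau\lambda_i$ with $i\geq2$ lies strictly inside the unit disk: this is exactly the Perron-matrix bound of Lemma~\ref{lemma_fo}, so I would invoke it rather than re-derive it. Letting $\tilde P_\tau$ be the restriction of $P_\tau$ to the disagreement subspace $1_p^{\perp}$ (equivalently, the quotient by $\mathrm{span}\{1_p\}$), the disagreement components $\delta_a,\delta_b$ satisfy the cascade
\begin{equation}
\begin{cases}
\delta_b(k+1)=\tilde P_\tau\,\delta_b(k)\\
\delta_a(k+1)=\tilde P_\tau\,\delta_a(k)+\tau\,\delta_b(k),
\end{cases}
\end{equation}
whose system matrix is block-triangular with both diagonal blocks equal to the Schur-stable $\tilde P_\tau$. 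Hence $\delta_b(k)\to0$, and since a convergent input drives a Schur-stable linear recursion to the origin, $\delta_a(k)\to0$ as well. Therefore $z_{ai}-z_{aj}\to0$ and $z_{bi}-z_{bj}\to0$ for all $i,j$, which is the claimed consensus.

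For the balanced (or undirected) case I would use the zero column-sum $1_p^{T}L=0$. This makes $1_p^{T}z_b(k)$ invariant along the velocity recursion, so the common velocity equals the initial average $\frac{1}{p}\sum_i z_{bi}(0)$; substituting this into the position block, $1_p^{T}z_a(k)$ evolves affinely, so the agents agree on a position whose offset is fixed by the initial position average while the velocities settle on their initial average, as claimed.

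The main obstacle is conceptual rather than computational: the stacked matrix is only marginally stable --- the eigenvalue $1$ sits on the unit circle and, because of the $\tau I_p$ coupling, carries a size-$2$ Jordan block --- so one cannot assert asymptotic stability of the origin. The correct object of study is the disagreement dynamics, and the care lies in verifying that the single non-Schur mode lives entirely in the agreement subspace $\mathrm{span}\{1_p\}$, so that the unbounded (linearly drifting) component is common to all agents and thus invisible to the consensus error; equivalently, that the driven block $\delta_a$ inherits convergence from the autonomous block $\delta_b$. Making this passage to quotient coordinates rigorous is the delicate part of the argument.
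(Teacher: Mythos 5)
Your proposal is correct and follows essentially the same route as the paper: stack the closed loop into the block-triangular form $\begin{bmatrix} P_\tau & \tau I\\ 0 & P_\tau\end{bmatrix}$, observe that the velocity block is exactly the single-integrator recursion, and invoke Lemma~\ref{lemma_fo}. The one place you go beyond the paper's argument is in passing to disagreement (quotient) coordinates to handle the marginally stable eigenvalue $1$ with its size-$2$ Jordan block --- the paper simply asserts that block-triangularity plus $\tau<1/l^*$ ``is sufficient for the stability,'' so your explicit cascade argument for $\delta_a$ driven by the vanishing $\delta_b$ actually closes a step the paper leaves implicit.
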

\begin{proof}
The overall dynamic for the $p$ systems, considering the protocol (\ref{eq:protocold_secondorder}) becomes
\begin{equation}
\label{eq:protocol_secondorder_weiren}
\begin{bmatrix}
{ z}_{a}(k+1)\\{ z}_{b}(k+1)
\end{bmatrix}
=
\begin{bmatrix}
I-\tau L & \tau I \\
0 & I-\tau L
\end{bmatrix}
\begin{bmatrix}
{ z}_{a}(k)\\{ z}_{b}(k)
\end{bmatrix}
\end{equation}
Note that, in this formulation, the dynamics of ${ z}_b$ is decoupled from ${ z}_a$, and has the standard form of a single integrator system. Therefore an agreement on ${ z}_b$ is reached if $\tau< 1/l^*$.
Since the dynamic matrix is block triangular, the condition $\tau< 1/l^*$ is sufficient for the stability and the consensus is reached asymptotically. And from Lemma \eqref{lemma_fo} if $\mathcal{G}$ contains at least a directed spanning tree a consensus is reached, while if the graph is strongly connected and balanced, the average consensus is reached.

\end{proof}
\vspace{3mm}

Note that, in this case the average consensus is an agreement on the ``velocity" which becomes constant, while the ``position", although being asymptotically the same for all the agents, varies with constant velocity.

Note that also in this case the condition on the maximum value of the sampling rate has to be considered as a conservative estimation.

\subsection{Fuzzy Consensus}
\label{consensus_fuzzy}
In this Section a framework for the discrete-time first order and second order average consensus problem with fuzzy initial condition is provided extending the results for crisp systems \cite{Olfati1}.
Let a  graph $\mathcal{G}$ composed of $p$ agents, 
where the dynamics of each agent is described by the DFS
\begin{equation}
\label{fuzzy_consensus_agent}
{ x}_i(k+1)=F({ x}_i(k),{ e}_i(k)),\quad { x}_i(0)={ x}_{i0}
\end{equation} 
Where ${ x}_i, { x}_{i0},{ e}_i \in \mathbb{E}^{q}$ ${ e}_i$ depends only on the state of agent $i$ and his neighbors $N_i$, according to the topology described by the matrix $\Gamma$.  The array of fuzzy agents reaches consensus if
\begin{eqnarray}
\label{inen2}
\lim_{k\rightarrow \infty} d_{\mathbb{E}^q}[ { x}_i(k), { x}_j(k)] =  0,  & \forall i,j=1,\ldots,p
\end{eqnarray}
where $d_{\mathbb{E}^q}$ is the metric defined in Eq. (\ref{dizero}).

Conversely, the array of fuzzy agents reaches consensus {\em level-wise} if, for all $\alpha \in [0,1]$ and $\forall i=1,\ldots,p$:
\begin{eqnarray}
\label{lwisecon}
\lim_{k\rightarrow \infty} \rho_{d_{\mathbb{R}^q},\mathbb{R}^q}({ x}_{i}^\alpha(k),{ x}_{j}^\alpha(k)) =  { 0}\end{eqnarray}

The following result correlates consensus  and level-wise consensus of fuzzy systems:
\begin{proposition}
\label{lwise}
The consensus of fuzzy agents in the sense of (\ref{inen2}) implies level-wise consensus (\ref{lwisecon}).
\end{proposition}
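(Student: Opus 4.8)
The plan is to exploit the fact that the metric $d_{\mathbb{E}^q}$ on vectors of fuzzy numbers is built as a finite sum of scalar $\mathbb{E}$-distances, each of which is in turn a supremum of $\alpha$-level Hausdorff separations. Consequently, convergence to zero in $d_{\mathbb{E}^q}$ dominates, and forces to zero every individual $\alpha$-level separation, which is exactly what level-wise consensus requires. Concretely, I would chain three observations: (i) a vanishing finite sum of non-negative terms forces each term to vanish; (ii) a supremum over $\alpha$ dominates each fixed-$\alpha$ slice; and (iii) the equivalence, established in Section \ref{fuzzysystems}, between the level-wise convergence of a vector of fuzzy numbers and the simultaneous level-wise convergence of all its scalar components.

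First I would rewrite the hypothesis (\ref{inen2}) by means of the definition (\ref{dizero}) of $d_{\mathbb{E}^q}$, obtaining, for all $i,j$,
\[
\lim_{k\rightarrow\infty}\sum_{l=1}^{q} d_\mathbb{E}[x_{il}(k),x_{jl}(k)]=0,
\]
where $x_{il}$ denotes the $l$-th scalar component of the agent state $x_i\in\mathbb{E}^q$. Since each summand is non-negative, the sum can tend to zero only if every summand does, so $d_\mathbb{E}[x_{il}(k),x_{jl}(k)]\to 0$ for each fixed $l$ and each pair $i,j$. Then, invoking the definition (\ref{hausdist}), for any fixed level $\alpha>0$ one has the pointwise domination
\[
\begin{aligned}
\rho_\mathbb{R}([x_{il}(k)]^\alpha,[x_{jl}(k)]^\alpha) &\leq \sup_{\beta>0}\rho_\mathbb{R}([x_{il}(k)]^\beta,[x_{jl}(k)]^\beta)\\
&= d_\mathbb{E}[x_{il}(k),x_{jl}(k)],
\end{aligned}
\]
whose right-hand side vanishes as $k\to\infty$. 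Hence $\rho_\mathbb{R}([x_{il}(k)]^\alpha,[x_{jl}(k)]^\alpha)\to 0$ for every component $l$ and every $\alpha>0$.

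To conclude, I would apply the equivalence recalled in Section \ref{fuzzysystems}, by which the Hausdorff separation of the $\alpha$-levels of the two vectors $x_i(k)$ and $x_j(k)$ in $\mathbb{R}^q$ tends to zero if and only if the scalar separations of the $\alpha$-levels of all $q$ components tend to zero. This immediately yields $\rho_{d_{\mathbb{R}^q},\mathbb{R}^q}(x_i^\alpha(k),x_j^\alpha(k))\to 0$, i.e. the level-wise consensus (\ref{lwisecon}), for every $\alpha>0$. Note that no converse implication, and in particular no membership in $\Psi$, is needed here, since we only pass from the stronger (metric) notion to the weaker (level-wise) one.

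The only delicate point, and the step I expect to require the most care, is the endpoint $\alpha=0$ (the support), which appears in (\ref{lwisecon}) but is excluded from the supremum in (\ref{hausdist}). I would handle it through the monotonicity of the $\alpha$-levels: since the levels are nested and compact, the endpoints of the support $[\mu]^0$ (the closure of the union of the $\alpha$-levels) are the limits as $\alpha\to 0^+$ of the endpoints of $[\mu]^\alpha$. Passing the displayed bound to this limit gives $\rho_\mathbb{R}([x_{il}(k)]^0,[x_{jl}(k)]^0)\leq d_\mathbb{E}[x_{il}(k),x_{jl}(k)]$ as well, so the very same argument covers $\alpha=0$ and the proof is complete.
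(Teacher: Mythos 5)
Your proposal is correct and follows essentially the same route as the paper, which simply observes that the implication ``trivially derives from the definition of $d_{\mathbb{E}}$'': you have just made explicit the two dominations (finite sum of non-negative terms, and supremum over $\alpha$) that the paper leaves implicit. Your extra care with the $\alpha=0$ endpoint via nestedness of the level sets is a sound refinement the paper does not bother to spell out, but it does not change the nature of the argument.
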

\begin{proof}
The proof of such a statement trivially derives from the definition of $d_{\mathbb{E}}$ \end{proof}

Note that, limiting each state variable to the set $\Psi$, defined in Eq. (\ref{eq:psieq}), the implication can be reversed; hence this is true for systems with initial conditions described by TFNs.

The following theorem correlates the crisp and fuzzy consensus in the case of single and double integrators. To avoid confusion the following notation for the fuzzy extension of single and double discrete-time integrators will be adopted, considering the state arranged by type: $y_i$ (or $[y_i , v_i ]^T$) will denote the state of the {\it i-th} single (or double) integrator; ${ y}$ (or $[{ y}, { v} ]^T$) will denote the stack array composed of the state variables of all the $p$ agents of the whole system.
 The consensus value reached by $i$-th agent (which, obviously, assumes the same value for each agent) is denoted by $y_i^*$ (or $[y_i^* , v_i ^*]^T$). We will also consider the vector of the consensus states for the array of agents, denoted by ${ y}^*$ (or $[{ y}^* , { v}^* ]^T$ ). Note that in the case of the double integrators the agents reach a constant value for ${ v}^*$ (e.g., constant velocity), while ${ y}^*$, although shared by the agents, changes (e.g., position); in this case, then, we will denote the consensus reached at {\it k-th} step as $[{ y}^*(k), { v}^*]$.
\begin{theorem}
\label{syncfz}
Consider a fuzzy consensus problem with $p$ interconnected discrete time agents with fuzzy dynamic described by Eq. (\ref{fuzzy_consensus_agent}), where the graph $\mathcal{G}$ contains at least a directed spanning tree.
Then both the  following statements hold true:
\begin{enumerate}
\item if the dynamic of each agent is a discrete-time single integrator in the form of Eq. (\ref{eq:dynamicagent_firstorder}), protocol (\ref{eq:protocol_firstorder}) solves the problem for $\tau < 1/ l^*$ and the agents reach an agreement;
\item if the dynamic of each agent is a discrete-time double integrator in the form of Eq. (\ref{eq:dynamicagent_secondorder}), protocol (\ref{eq:protocold_secondorder}) solves the problem for $\tau < \frac{1}{ l^*}$ and the agents reach an agreement.
\end{enumerate}
Moreover if the graph $\mathcal{G}$ is also strongly connected and balanced, the average consensus is achieved.
\end{theorem}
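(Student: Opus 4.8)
The plan is to reduce the fuzzy consensus entirely to the crisp consensus results of Lemma~\ref{lemma_fo} and Theorem~\ref{teo_so} by passing to the level-wise representation of Section~\ref{levelwise}, thereby avoiding any direct use of fuzzy arithmetic. First I would stack the $p$ agents of Eq.~(\ref{fuzzy_consensus_agent}) under the prescribed protocol and identify the resulting linear, stationary transition operator: for the single integrator this is multiplication by the Perron matrix $P_\tau=I_p-\tau L$ acting on the fuzzy state, while for the double integrator it is multiplication by the block-triangular matrix appearing in the proof of Theorem~\ref{teo_so}, whose diagonal blocks are $P_\tau$ and whose only nonzero off-diagonal block is $\tau I_p$.

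The key observation is that the condition $\tau<1/l^*$ makes this operator \emph{monotone nondecreasing}. Indeed the diagonal entries of $P_\tau$ are $1-\tau l_{ii}\geq 1-\tau l^*>0$ and its off-diagonal entries are $\tau\gamma_{ij}\geq 0$, so $P_\tau\geq 0$; in the double-integrator case the extra block $\tau I_p\geq 0$ together with the lower-left zero block leave the whole $2p\times 2p$ matrix non-negative as well. Hence in the decomposition of Eq.~(\ref{POSNEG}) one has $F^+=F$ and $F^-=0$, and the coupled dynamics of the interval endpoints splits into two isolated replicas of the crisp transition. Since all coefficients are non-negative, the $\alpha$-level of each updated component is the Minkowski combination of the $\alpha$-levels of the neighbours, so that for every $\alpha\in(0,1]$ the lower endpoints $\underline x^\alpha(k)$ and the upper endpoints $\bar x^\alpha(k)$ each evolve exactly as the crisp stacked system.

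At this point the crisp results apply verbatim to each family of endpoints. By Lemma~\ref{lemma_fo} (single integrator) and Theorem~\ref{teo_so} (double integrator), whenever $\mathcal{G}$ contains a directed spanning tree and $\tau<1/l^*$ the endpoint sequences reach consensus, and when $\mathcal{G}$ is strongly connected and balanced they converge to the average of the corresponding initial endpoints. Therefore the endpoint vectors of agents $i$ and $j$ coincide in the limit, so $\rho_{\mathbb{R}^q}([x_i(k)]^\alpha,[x_j(k)]^\alpha)\to 0$ for every $\alpha$, which is precisely level-wise consensus in the sense of Eq.~(\ref{lwisecon}); in the balanced case the common $\alpha$-levels are the averages of the initial $\alpha$-levels, giving fuzzy average consensus. (In the double-integrator case this is agreement of the velocity endpoints together with convergence of the position differences, the common position still drifting at constant velocity.)

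The remaining, and most delicate, step is to upgrade level-wise consensus to consensus in $(\mathbb{E}^q,d_{\mathbb{E}^q})$ as required by Eq.~(\ref{inen2}). By Proposition~\ref{lwise} these two notions are not equivalent in general; the reverse implication is available only on the set $\Psi^N$, and in particular for triangular fuzzy numbers. Here I would note that a non-negative linear combination of TFNs is again a TFN---its $\alpha$-level endpoints depend affinely on $\alpha$---so under the monotone linear dynamics the whole trajectory remains in the TFN class once $x_0$ does. Invoking the reverse implication then turns the level-wise convergence just obtained into the $d_{\mathbb{E}^q}$-convergence of Eq.~(\ref{inen2}), completing the argument. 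The essential obstacle is precisely this gap: because fuzzy subtraction does not cancel, one cannot cast the disagreement dynamics as the trivial solution of a DFS and apply Corollary~\ref{teolin} directly, which is why the proof is routed through the real-valued endpoint dynamics, and why closing the gap forces the TFN (equivalently $\Psi^N$) restriction.
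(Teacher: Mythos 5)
Your proof is correct and reaches the same conclusion, but it takes a genuinely different route from the paper's. The paper's own proof is three lines: it cites Lemma~\ref{lemma_fo} and Theorem~\ref{teo_so} for the crisp consensus, observes that the stacked dynamic matrix is nonnegative (hence monotone nondecreasing) when $\tau<1/l^*$, and then invokes the comparison result of Corollary~\ref{teolin} to transfer the stability property to the fuzzy system; it does not pass through the level-wise endpoint dynamics at all. You instead work directly with the representation of Eq.~(\ref{POSNEG}): the same nonnegativity observation gives $F^+=F$, $F^-=0$, the lower and upper endpoints of each $\alpha$-level evolve as two decoupled copies of the crisp stacked system, the crisp results apply to each copy, and level-wise consensus follows; the metric consensus of Eq.~(\ref{inen2}) is then recovered via Theorem~\ref{lem:tfn} after checking that nonnegative linear dynamics preserve the TFN class. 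What your route buys is precision on two points the paper glosses over: first, Corollary~\ref{teolin} concerns (asymptotic) stability of the trivial solution, whereas the consensus dynamics is only marginally stable along the agreement direction, and, as you note, the disagreement cannot be recast as a fuzzy system converging to $\hat 0$ because fuzzy subtraction does not cancel --- so the paper's appeal to Corollary~\ref{teolin} is really shorthand for the endpoint argument you spell out. Second, you make explicit that the conclusion in the $d_{\mathbb{E}^q}$ sense (rather than merely level-wise) requires the initial conditions to lie in the TFN class or in $\Psi^N$; the paper states this restriction only in the remark following its level-wise consensus proposition and in the application section, not in the theorem's hypotheses. The only cost of your approach is length, and the honest caveat that for general elements of $\mathbb{E}^q$ you obtain level-wise consensus only --- a caveat that in fact applies equally to the paper's argument.
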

\begin{proof}
From Lemma \ref{lemma_fo} system (\ref{eq:dynamicagent_firstorder})  reaches consensus for $\tau<1/l^*$; moreover  the overall dynamic matrix has only nonnegative entries and is monotone nondecreasing. Hence by Theorem \ref{teolin} it follows that system (\ref{eq:dynamicagent_firstorder})  reaches consensus also in the fuzzy fashion. Analogously, by Theorem \ref{teo_so}, system (\ref{eq:protocol_secondorder_weiren}) reaches consensus in the fuzzy fashion for $\tau < \frac{1}{l^*}$. The same considerations apply for average consensus.
\end{proof}

\begin{corollary}
\label{pippocorollary}
Let $r$ be a vector in $\mathbb{R}^p$, with $\sum_{j=1}^p r_j =1$ and $r_j\geq 0$ for each $ j=1,\ldots,p$.
Under the hypotheses of Theorem (\ref{syncfz}) assuming that $\Gamma$ contains at least a directed spanning tree, then the following statements hold true for each $\alpha$-level:
\begin{enumerate}
\item in the case of discrete-time single integrators  
\begin{equation}
\label{pippo}
\begin{cases}
\underline{y_i}^{*\alpha} = \sum_{j=1}^p r_j \underline{y}_{i0}^{\alpha}\\
\bar{y_i}^{*\alpha} = \sum_{j=1}^p r_j \underline{y}_{i0}^{\alpha}
\end{cases} 
\end{equation}

\item in the case of discrete-time double integrators  
\begin{equation}
\begin{cases}
\underline{v_i}^{*\alpha} = \sum_{j=1}^p r_j \underline{v}_{j0}^{\alpha}\\
\bar{v_i}^{*\alpha} = \sum_{j=1}^p r_j \bar{v}_{j0}^{\alpha}\\
\underline{y_i}^{*\alpha}(k) =  \sum_{j=1}^p r_j \underline{x}_{j0}^{\alpha}+ k\tau \underline{v}^{*\alpha}\\
\bar{y_i}^{*\alpha} (k)= \sum_{j=1}^p r_j \bar{x}_{j0}^{\alpha}+k\tau \bar{v}^{*\alpha}
\end{cases} 
\end{equation}

for each $i=1,\ldots,p$
\end{enumerate}
\end{corollary}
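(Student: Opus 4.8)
The plan is to reduce the fuzzy statement to a pair of crisp consensus problems by means of the level-wise representation of Section~\ref{levelwise}, and then to invoke the crisp convergence facts already established. First I would note that the hypothesis $\tau<1/l^*$ makes the Perron matrix $P_\tau=I_p-\tau L$ entrywise nonnegative, so the single-integrator update is monotone nondecreasing. By the decomposition in~\eqref{POSNEG} (with $F^-=0$ under monotonicity), for every $\alpha\in[0,1]$ the lower and upper endpoints $\underline x^\alpha(k)$ and $\bar x^\alpha(k)$ of each $\alpha$-level evolve as two \emph{independent} copies of the crisp recursion $z(k+1)=P_\tau z(k)$. Hence it suffices to describe the limit of the crisp dynamics once and then read it off separately on each endpoint, using the \emph{same} weight vector $r$ for both since both endpoints are driven by the same matrix $P_\tau$.

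For statement~1 I would quote the crisp result of~\cite{WeiRen}: when $\mathcal{G}$ contains a directed spanning tree, the crisp consensus value is $z_i^*=\sum_{j=1}^p r_j z_{j0}$, where $r$ is the nonnegative left eigenvector of $P_\tau$ for the eigenvalue $1$ (equivalently the left null vector of $L$), normalized so that $\sum_{j=1}^p r_j=1$, with $r_j=1/p$ in the balanced case. Applying this identity to $\underline x^\alpha$ and to $\bar x^\alpha$ separately gives $\underline{y_i}^{*\alpha}=\sum_j r_j\underline{y}_{j0}^{\alpha}$ and $\bar{y_i}^{*\alpha}=\sum_j r_j\bar{y}_{j0}^{\alpha}$, i.e.\ the endpoint relations of~\eqref{pippo}.

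For statement~2 the crisp limit cannot be quoted verbatim, because the position does not converge to a fixed point but drifts at constant velocity; identifying this drift is the main obstacle. The plan is to project the block recursion appearing in the proof of Theorem~\ref{teo_so} onto the consensus coordinate by left-multiplying with $r^T$, exploiting $r^T P_\tau=r^T$ (from $r^T L=0$). Writing $\xi_a(k)=r^T z_a(k)$ and $\xi_b(k)=r^T z_b(k)$, the velocity block yields $\xi_b(k+1)=\xi_b(k)$, so $\xi_b(k)\equiv\sum_j r_j v_{j0}=:v^*$, while the position block yields $\xi_a(k+1)=\xi_a(k)+\tau\,\xi_b(k)$, whence $\xi_a(k)=\sum_j r_j x_{j0}+k\tau v^*$. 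Since $\tau<1/l^*$ forces all off-consensus modes to decay (Theorem~\ref{teo_so}), each agent asymptotically agrees on $\xi_b$ and on $\xi_a(k)$; reading these two identities on the lower and upper endpoints of each $\alpha$-level, which again decouple by~\eqref{POSNEG}, produces the four formulas of statement~2. The only delicate points are that the two endpoint systems share the weight vector $r$ (true because they share $P_\tau$) and that the agreement on position is an agreement on a \emph{time-varying} common value rather than on a stationary one, exactly as noted after Theorem~\ref{teo_so}.
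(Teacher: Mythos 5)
Your proof is correct and takes essentially the same route as the paper's: both reduce the fuzzy claim to crisp endpoint dynamics via the level-wise representation \eqref{POSNEG} with $F^-=0$ (nonnegativity of $P_\tau=I_p-\tau L$ for $\tau<1/l^*$), and both identify the consensus value through the normalized left null vector $r$ of $L$ applied identically to the lower and upper endpoints. The only divergence is in the double-integrator drift, where you project the block recursion onto the consensus coordinate using $r^TP_\tau=r^T$ while the paper splits $y(k+1)=P_\tau y(k)+\tau v(k)$ by superposition and evaluates the two contributions separately; this is a cosmetic difference, and your derivation of the $k\tau v^*$ term is, if anything, the more explicit of the two.
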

\begin{proof}
\noindent Let $R=[{ r},\ldots,{ r}]^T$ be a $p\times p$ matrix. From Theorem \ref{syncfz} it follows that the array of single integrators reach consensus. The value reached is given by ${ x}^*=R{ x}_0$. Since R has only non-negative entires, for each $\alpha$-level the consensus reached assumes the structure of Eq. (\ref{POSNEG}), where $R^+=R$ and $R^-=0$, proving statement 1.
The proof for ${ v}^*$ in the case of double integrator is analogous, since the evolution of ${ v}$ coincides with the evolution of ${ y}$ in the case of single integrator (i.e., ${ v}$ does not depend on ${ y}$).
For what concerns ${ y}^*$ we have that:
\begin{equation}
{ y}(k+1)=P_\tau { y}(k)+\tau { v}(k)={ y}^a(k+1)+{ y}^b(k+1)
\end{equation}
Since the system is linear, the effects of the two terms ${ y}^a$ and ${ y}^b$ can be evaluated separately:
\begin{eqnarray}
{ y}^{a*}(k)=R { y}_0\\
{ y}^{b*}(k)=k \tau R { y}_0
\end{eqnarray}
Analogously to previous statement, using the level-wise representation of Eq. (\ref{POSNEG}), the statement is proved.
\end{proof}
\begin{corollary}
For $\alpha=1$ fuzzy consensus coincides with crisp consensus for both single and double integrators.
\end{corollary}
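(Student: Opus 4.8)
The plan is to deduce the corollary directly from the level-wise consensus values already established in Corollary \ref{pippocorollary}, specialized to $\alpha = 1$, using the defining property of triangular fuzzy numbers that their unit level set degenerates to a single point. First I would recall that for a TFN $\mu$ with triple $\{\mu_l,\mu_c,\mu_r\}$ one has $[\mu]^1 = \{\mu_c\}$; hence, writing the initial condition of each agent in the level-wise form of Eq. (\ref{fuzzyintervals}), the lower and upper endpoints coincide at $\alpha = 1$, i.e. $\underline{x}_{i0}^1 = \bar{x}_{i0}^1 = x_{i0}^c$, where $x_{i0}^c$ is precisely the crisp datum recovered by defuzzification.

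Next I would invoke the monotonicity structure underlying the consensus protocols. For the single-integrator case the stacked matrix is the Perron matrix $P_\tau = I_p - \tau L$, and for the double-integrator case it is the block-triangular matrix of Theorem \ref{teo_so}; for $\tau < 1/l^*$ both matrices have only nonnegative entries, so the hypothesis $F^- = 0$ of the level-wise representation (\ref{POSNEG}) holds and the endpoint dynamics decouple into two identical copies of the crisp recursion. Consequently, coinciding endpoints at $k=0$ propagate: $\underline{x}_i^1(k) = \bar{x}_i^1(k)$ for all $k$, and this common trajectory obeys the same crisp equation as the crisp single- and double-integrator systems initialized at $x_{i0}^c$.

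Then I would substitute $\alpha = 1$ into the consensus formulas of Corollary \ref{pippocorollary}. For single integrators this collapses the interval to the single value $\underline{y}_i^{*1} = \bar{y}_i^{*1} = \sum_{j=1}^p r_j\, y_{j0}^c$, and for double integrators it yields $\underline{v}_i^{*1} = \bar{v}_i^{*1} = \sum_{j=1}^p r_j\, v_{j0}^c$ together with the position consensus whose endpoints both equal $\sum_{j=1}^p r_j\, x_{j0}^c + k\tau \sum_{j=1}^p r_j\, v_{j0}^c$. In every case the degenerate unit-level interval is exactly the crisp consensus value produced by the corresponding crisp protocol of Lemma \ref{lemma_fo} and Theorem \ref{teo_so}, which proves the claim.

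The step requiring the most care --- rather than genuine difficulty --- is the collapse of the unit level set to a point: this is guaranteed for triangular fuzzy numbers by definition, but would fail for, e.g., trapezoidal fuzzy numbers, whose $[\mu]^1$ is a nondegenerate interval. The statement is therefore to be read within the TFN setting that supports the preceding consensus results, the same setting in which, by the remark following Proposition \ref{lwise}, the level-wise and metric notions of consensus coincide.
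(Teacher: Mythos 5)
Your proposal is correct and follows essentially the same route as the paper: observe that $[\mu]^1$ degenerates to a point so $\underline y^1(0)=\bar y^1(0)$, then substitute into the level-wise consensus formulas of Corollary \ref{pippocorollary} to conclude $\underline y^1(k)=\bar y^1(k)$ for all $k$, with the double-integrator case handled analogously. The extra remarks on nonnegativity of the stacked matrices and the failure for trapezoidal fuzzy numbers are sensible clarifications but not a different argument.
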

\begin{proof}
For $\alpha=1$ it follows that $\underline y^1(0)=\bar y^1(0)$ (i.e., the interval collapses into a single point).
Substituting inside system (\ref{pippo}), it follows that $\underline y^1(k)=\bar y^1(k)$, for all $k\geq 0$.
The proof is analogous in the case of double integrators.
\end{proof}

\section{Synchronization}
\label{synchronization}

In the literature the \emph{synchronization} of identical distributed systems has been widely investigated \cite{Synchro2,Wu:2005,Synchro4,Tuna1,Tuna2}. \emph{Synchronization} is intended as the convergence of the solutions of an array of identical systems to a common trajectory; the synchronization approach is said to be \emph{distributed} if each system receives data only from a subset of the other systems (i.e., only from its \emph{neighborhood}). When the trajectory is a stationary point, the problem reduces to a \emph{consensus} problem \cite{Olfati1,Olfati2}. However a framework able to handle the synchronization of systems in the presence of uncertain and vague values has not yet been introduced.
Indeed \emph{uncertainty} and \emph{vagueness}  are often present in many applicative contexts, in particular when human experts are involved.
This is especially true in the field of \emph{Critical Infrastructures} (CI) interdependency modeling, where models are often tuned by means of the information provided by human stakeholders and operators \cite{Kelly:2001,Lewis,SetolaMarino} or where, during crisis scenarios, human operators are in charge to estimate the effects of outages in widely dispersed and geographically distributed infrastructures.

Let $p$ identical linear systems, where the $i$-th system is in the form:
\begin{equation}
\label{systems}
\begin{cases}z_i(k+1)=Az_i(k)+u_i(k)\\ \\
y(k)=Cz(k)\end{cases}
\end{equation}
with $A\in \mathbb{R}^{n\times n}$, $C\in \mathbb{R}^{m\times n}$ and $m\leq n$, $z_i,u_i\in \mathbb{R}^n$, and consider the case in which the only information available for each system is given by 
\begin{equation}
\label{input1}
\begin{matrix}e_i(k)= \sum_{j=1}^p \gamma_{ij} (y_j(k)-y_i(k));\end{matrix}
\end{equation}
where the coefficients $\gamma_{ij}$ allow $i$-th system to communicate only with its neighbors, according to the $p \times p$ adjacency matrix $\Gamma$.
Set $u_i(k)=\Omega e_i(k)$; in other terms
\begin{equation}
\label{input}
\begin{matrix}u_i(k)=\Omega C \sum_{j=1}^p \gamma_{ij} (z_j(k)-z_i(k));\end{matrix}
\end{equation}

where $\Omega$ is an $n\times m$ matrix.

The above systems in the form of Eq. (\ref{systems}) are said to {\it synchronize} if
\begin{equation}
\begin{matrix}
\lim_{k \to +\infty} ||z_i(k)-z_j(k)||=0 \mbox{  ; } & \forall i,j=1,\ldots,p; & i\neq j
\end{matrix}
\end{equation}
i.e., if all the $n$ components of the different $p$ systems assume homologous values.

Let us now provide a single equation that summarizes the dynamics of the considered $p$ systems \eqref{systems}. Note that, using the input \eqref{input}, the dynamics of the single system is given by:
\begin{equation}
\label{equcomp}
z_i(k+1)=Az_i(k)+\Omega C \sum_{j=1}^p \gamma_{ij} (z_j(k)-z_i(k))
\end{equation}
Note further that in the above equation the state $z_i$ of the $i$-th system depends on the states $z_j$ of the other systems, according to the topology defined by $\Gamma$.
Let ${ z}(k)=[z_1(k)^T,\ldots, z_p(k)^T]^T$ be a stack vector containing the states of all the $p$ systems; the overall dynamics will be in the form ${ z}(k+1)=\hat A { z}(k)$, where $\hat A$ is a $np \times np$ matrix summarizing the dynamics of the $p$ agents, i.e.:
\begin{equation}
\label{stack}
\begin{matrix}
{ z}(k+1)=\hat A{ z}(k)=[I\otimes A - L \otimes \Omega C]{ z}(k)
\end{matrix}
\end{equation}
in fact the terms $Az_i(k)$ in Eq. \eqref{equcomp} are dependent only on $z_i(k)$ and therefore the stacked matrix contains a term 
$$I\otimes A=\begin{bmatrix}A&&\\ &\ddots&\\ &&A\end{bmatrix}$$

For the second part of matrix $\hat A$, note that in the summation of Eq. \eqref{equcomp}, the term $-z_i(k)$ appears $\sum_{j=1}^p \gamma_{ij}=l_{ii}$ times, while exactly $l_{ii}$ neighbors $z_j(k)$ are considered; therefore the term $\Omega C$ has to be pre-multiplied (in the kronecker sense), by $-L$, where $L$ is the Laplacian matrix.

 It is immediate to recognize that the $p$ systems (\ref{systems}) synchronize if the stacked system (\ref{stack}) is stable.
 Hence the synchronization of $p$ identical systems can be granted by choosing a ``control" matrix $\Omega$ that stabilizes the closed loop system (\ref{stack}).
 Note that the stability of the single systems does not guarantee by itself the synchronization of the trajectories of the different systems. 
 
In \cite{Wu:2005} a Lyapunov-based approach is adopted to grant the stability and then the synchronization, but it requires some hypotheses on all the eigenvalues of matrix L.
On the other hand in \cite{Tuna2}, under the hypothesis of a stable matrix $A$ and observable pair $(A,C)$, a sophisticated algorithm is used for the choice of matrix $\Omega$.
The following theorem proves that, under some additional hypotheses, such as the  positivity of the systems, the complexity of choosing $\Omega$ can be greatly reduced. 

\begin{theorem}
\label{teosynchro}
Suppose that matrix $A$ is such that: $\forall i,j=1,\ldots,n$ $a_{ij}\geq0$ and for all $i=1,\ldots,n$
\begin{equation}
\label{aijgeq0}
\sum_{j=1}^n a_{ij}\leq 1
\end{equation}
and suppose that $C^TC$ is non-singular. Let $K=\Omega C$.
If it is possible to find $\Omega$ such that $K$ is diagonal and
\begin{equation}
\label{kcond}
0\leq k_{ii}  \leq \frac{a_{ii}}{l_m}, \quad \forall i=1, \ldots n
\end{equation}
where $l_m$ is the minimum diagonal element of the Laplacian matrix induced by the adjacency matrix $\Gamma$; then System (\ref{stack}) is stable and the array of systems (\ref{systems}) synchronize.
\end{theorem}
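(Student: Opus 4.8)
The plan is to reduce the stability of the $np\times np$ stacked matrix $\hat A = I\otimes A - L\otimes K$ to the Schur stability of a family of $n\times n$ matrices indexed by the spectrum of $L$, and then read off synchronization from the spectral structure forced by the Laplacian's zero eigenvalue. First I would bring $L$ to its (complex) Jordan form $L = S J S^{-1}$, with $J$ upper triangular and the eigenvalue $\lambda_1=0$ (which exists since $L 1_p = 0$) placed first. Conjugating by $S\otimes I$ gives $(S^{-1}\otimes I)\hat A(S\otimes I) = I\otimes A - J\otimes K$, a block upper-triangular matrix whose diagonal blocks are exactly $A-\lambda_i K$ for the eigenvalues $\lambda_i$ of $L$. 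Hence $\mathrm{spec}(\hat A)=\bigcup_i \mathrm{spec}(A-\lambda_i K)$, and it suffices to control each block separately.

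The block attached to $\lambda_1=0$ is $A$ itself. Here I would invoke the hypothesis that $A\ge 0$ with row sums $\sum_j a_{ij}\le 1$: by Perron--Frobenius (or Gershgorin) this yields $\rho(A)\le 1$, so this block is at worst marginally stable. It corresponds precisely to the synchronization manifold $\{z : z_1=\cdots=z_p\}=\mathrm{range}(1_p\otimes I)$, which is invariant because $(L\otimes K)(1_p\otimes v)=(L1_p)\otimes(Kv)=0$; on it the common trajectory merely evolves under $A$. Synchronization does not require this mode to decay, only the transverse ones, so marginal stability here is exactly what is needed (and if $0$ is simple, i.e.\ the graph carries a spanning tree, the common trajectory is unique).

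For every nonzero eigenvalue $\lambda_i$ of $L$ I would show that $A-\lambda_i K$ is strictly Schur stable. Since $K=\mathrm{diag}(k_{11},\dots,k_{nn})$, the matrix $A-\lambda_i K$ differs from $A$ only on the diagonal, the $r$-th entry becoming $a_{rr}-\lambda_i k_{rr}$ while the off-diagonal entries $a_{rs}\ge 0$ are unchanged. Applying Gershgorin, the $r$-th disk is centred at $a_{rr}-\lambda_i k_{rr}$ with radius $\sum_{s\ne r}a_{rs}\le 1-a_{rr}$, and it lies strictly inside the unit circle as soon as $|a_{rr}-\lambda_i k_{rr}|<a_{rr}$. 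Using $0\le k_{rr}\le a_{rr}/l_m$ together with the Gershgorin bounds $\mathrm{Re}\,\lambda_i\ge 0$ and $|\lambda_i|\le 2\max_r l_{rr}$ for the Laplacian, I would verify this inequality for all nonzero $\lambda_i$, so that every transverse block is a strict contraction.

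Concluding, the transverse modes then decay geometrically while the common mode is preserved, giving $\|z_i(k)-z_j(k)\|\to 0$ for all $i,j$, i.e.\ synchronization; and when moreover $\rho(A)<1$ the stacked system is asymptotically stable outright. The main obstacle is the uniform estimate in the third step: one must secure $|a_{rr}-\lambda_i k_{rr}|<a_{rr}$ simultaneously over the \emph{whole} nonzero spectrum of $L$, and for a directed graph these $\lambda_i$ may be complex, so the scalar inequality becomes the requirement that $\lambda_i k_{rr}$ lie in the open disk of radius $a_{rr}$ about $a_{rr}$. Making the bound $k_{rr}\le a_{rr}/l_m$, phrased through the \emph{minimum} diagonal $l_m$, control the \emph{largest} eigenvalue of $L$ is the delicate point, and it is essentially tight only when the degrees are balanced; I would therefore expect the cleanest argument to assume a balanced (or regular) topology, or, equivalently, to recast the whole estimate as the non-negativity and row-substochasticity of $\hat A$ itself, the diagonal blocks $A-l_{ii}K$ being where the hypothesis on $k_{rr}$ is spent.
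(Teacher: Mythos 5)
Your reduction of $\hat A = I\otimes A - L\otimes K$ to the family of blocks $A-\lambda_i K$ via the Jordan form of $L$, and your identification of the $\lambda=0$ block with the (invariant, marginally stable) synchronization manifold, are sound and constitute a genuinely different route from the paper's. But the argument has a real gap exactly where you flag it: the claim that $A-\lambda_i K$ is Schur for every nonzero eigenvalue $\lambda_i$ of $L$ does not follow from the stated hypotheses by the estimate you propose. Gershgorin applied to $A-\lambda_i K$ requires, for the $r$-th row, that $|a_{rr}-\lambda_i k_{rr}|\le a_{rr}$, i.e.\ that $\lambda_i k_{rr}$ lie in the disk of radius $a_{rr}$ centred at $a_{rr}$; for real $\lambda_i$ this forces $\lambda_i k_{rr}\le 2a_{rr}$. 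Since $|\lambda_i|$ can be as large as $2\max_q l_{qq}$ while the hypothesis only gives $k_{rr}\le a_{rr}/l_m$ with $l_m=\min_q l_{qq}$, on an unbalanced graph the product $\lambda_i k_{rr}$ can be of order $2a_{rr}\,\max_q l_{qq}/l_m \gg 2a_{rr}$ and the disk condition fails. So your third step cannot be completed as described, and it is the heart of the theorem, not a technicality.

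The paper closes this by applying Gershgorin to the full $np\times np$ stacked matrix rather than to the reduced blocks. There, the row of index $i$ in block row $q$ has centre $\chi(q,i)=a_{ii}-l_{qq}k_{ii}$ and radius $\rho(q,i)=\sum_{j\ne i}a_{ij}+l_{qq}k_{ii}$: the off-diagonal coupling entries $\gamma_{qg}k_{ii}$ enter the radius and exactly cancel the shift of the centre, so $\chi+\rho=\sum_j a_{ij}\le 1$ holds with no bound on the spectrum of $L$ whatsoever; the hypothesis $k_{ii}\le a_{ii}/l_m$ is spent only on keeping the centre nonnegative (and, strictly, that step needs the \emph{maximum} diagonal entry of $L$, so the paper's statement of $l_m$ as the minimum appears to be a slip). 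This cancellation is invisible after you diagonalize $L$, because the coupling then sits entirely on the diagonal of each block. Your own closing remark --- recasting the estimate as nonnegativity plus row-substochasticity of $\hat A$ itself --- is precisely the paper's argument; had you carried it out you would have recovered the proof. Note finally that even the full-matrix Gershgorin bound only yields containment in the closed unit disk, i.e.\ marginal stability, consistent with the persistent common mode; your demand for strict contraction of the transverse blocks is the right target for synchronization but would require more than these hypotheses deliver.
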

\begin{proof}
The dynamic matrix of System (\ref{stack}) has the following structure:
\begin{equation}
\begin{small}
\begin{bmatrix}
A-l_{11}K &l_{12}K&\cdots&l_{1p}K\\
l_{21}K& \ddots&\ddots &\vdots\\
\vdots &\ddots &\ddots &l_{p-1,p}K\\
l_{p1}K&\cdots &l_{p,p-1}K &A-l_{pp}K \\
\end{bmatrix}
\end{small}
\end{equation}
From Gershgorin Circle Theorem \cite{CIRCLE} the eigenvalues of an $n \times n$ matrix M lie, in the complex plane, in the union of circles centered in $\chi(i)=m_{ii}$ with radius equal to $\rho(i)=\sum_{j=1; j\neq i}^n |m_{ij}|$.
Since $C^TC$ is non-singular, it is possible to set $\Omega=KC^\dag$; in this case, focusing on the $i$-th row of the $q$-th block row; the center is given by
\begin{equation}
\chi(q,i)=a_{ii}-\sum_{g=1}^p \gamma_{qg}k_{ii}=a_{ii}-l_{qq}k_{ii}
\end{equation}
By Condition (\ref{kcond}), it follows that $0 \leq \chi(q,i) \leq 1$.
The off diagonal elements for row $i$ of block row $q$ are all positive, hence the $|\cdot|$ is not required for the radius, which is given by:
\begin{equation}
\begin{matrix}
\rho(q,i)=\sum_{j=1,j\neq i}^n a_{ij}+ l_{qq}k_{ii}
\end{matrix}
\end{equation}
Since $\chi(q,i) \geq 0$, the system is stable in the discrete-time sense if $\chi(q,i)+\rho(q,i)\leq 1$ for each $q=1,\ldots p$ and $i=1,\ldots, n$, or in other terms if $\sum_{j=1}^n a_{ij} \leq 1$, which is verified from the hypotheses. 
\end{proof}	

Such a positivity assumption is in many cases justified; for instance, as will be shown in the case study, the class of interdependency models for critical infrastructures, where the percentage of malfunctioning (or inoperability) is intrinsically positive if no recovery action is considered.

Notice that, if $C^TC$ is non-singular, in order to obtain the desired diagonal matrix $K$ it is sufficient to set $\Omega= KC^\dag $, where $C^\dag=(C^TC)^{-1}C^T$ is the left pseudo-inverse of $C$.

In \cite{Tuna2} it is proved that, for a connected topology, the $p$ systems converge to the following shared state value:
\begin{equation}
\label{resultt}
{ z}^* (k)= (A^k\otimes r^T)\begin{bmatrix}z_{10}\\ \vdots \\ z_{p0}\end{bmatrix}
\end{equation}
\noindent where $r\in \mathbb{R}^p$ is a vector such that $r^TL=0$ and $\sum_{h=1}^p r_h=1$.

It is possible to specify some conditions, in order to further characterize the synchronization reached.
\begin{corollary}
\label{corzero}
If the graph is strongly connected and balanced, then the $p$ systems synchronize to the {\it average} evolution.
\end{corollary}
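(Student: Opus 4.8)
The plan is to pin down the weighting vector $r$ that appears in the general synchronization formula \eqref{resultt} and show that, under strong connectivity and balance, it collapses to the uniform vector $\frac{1}{p}1_p$, whence the common trajectory becomes the free evolution of the \emph{average} of the initial states. First I would take as given that the array synchronizes: this is guaranteed by Theorem \ref{teosynchro} (or by any stabilizing choice of $\Omega$), so by the result of \cite{Tuna2} recalled in \eqref{resultt} the shared state is
\[
z^*(k) = (A^k \otimes r^T)\begin{bmatrix} z_{10} \\ \vdots \\ z_{p0} \end{bmatrix},
\]
where $r\in\mathbb{R}^p$ is any vector with $r^T L = 0$ and $\sum_{h=1}^p r_h = 1$. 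The whole argument therefore reduces to identifying $r$ explicitly.

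Next I would invoke the spectral properties of the Laplacian collected in Section \ref{Preliminaries}. Because the graph is strongly connected, $\lambda = 0$ is a \emph{simple} eigenvalue of $L$, so the left null space $\{w : w^T L = 0\}$ is one-dimensional; hence, once the direction of $r$ is fixed, the normalization $\sum_h r_h = 1$ determines $r$ uniquely. Because the graph is additionally balanced, Section \ref{Preliminaries} gives $1_p^T L = 0$, i.e., $1_p$ already lies in that one-dimensional left null space. The two facts together force $r$ to be a scalar multiple of $1_p$, and imposing $\sum_h r_h = 1$ yields $r = \frac{1}{p}1_p$.

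Finally I would substitute $r^T = \frac{1}{p}1_p^T$ back into the formula. Exploiting the block (mixed-product) structure of the Kronecker product, $(A^k \otimes \frac{1}{p}1_p^T)$ applied to the stacked initial condition produces $A^k$ acting on the $r$-weighted combination of the $z_{j0}$, namely
\[
z^*(k) = A^k\!\left(\frac{1}{p}\sum_{j=1}^p z_{j0}\right),
\]
which is precisely the free evolution under $A$ of the average initial condition --- the average evolution asserted in the statement. I do not expect a computational obstacle; the only subtle point is the uniqueness of $r$, which requires \emph{both} hypotheses simultaneously: strong connectivity guarantees that the zero eigenvalue is simple (so the left null space is one-dimensional), while balance guarantees that $1_p$ spans that null space. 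Neither property alone pins down $r$, so the crux of the proof is recognizing that the two assumptions are exactly what is needed to force $r = \frac{1}{p}1_p$.
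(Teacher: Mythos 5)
Your proof is correct and follows essentially the same route as the paper: identify the left-null vector $r$ from \eqref{resultt}, use balance to place $1_p$ in the left null space of $L$, and conclude $r=\frac{1}{p}1_p$. You are in fact slightly more careful than the paper, which asserts uniqueness of $r$ without explicitly invoking the simplicity of the zero eigenvalue under strong connectivity; your observation that both hypotheses are needed to pin down $r$ is exactly the point the paper leaves implicit.
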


\paragraph{{ Proof}} Since systems synchronize, it follows that
\begin{equation}\label{ass_r}\begin{matrix}r^TL=0 &\mbox{ and } r^T1_p=1\end{matrix}\end{equation}
Moreover, a balanced graph ensures that $1_p^TL=0$.
Therefore the only $r$ that satisfies (\ref{ass_r}) is such that $r_j=\frac{1}{p}$ for each $j=1,\ldots,p$, proving the statement.
\vspace{3mm}

Note that it is also possible to obtain the synchronization to a weighted average or to the sum of the evolutions. It is sufficient to use the synchronization algorithm with modified initial conditions $\hat z_{i0}$ obtained from the real initial condition detected $z_{i0}$.
For example if $\hat z_{i0}= p z_{i0}$ for each system  $i$, the sum of the evolutions is obtained. Note that, to achieve this result, each system needs to know the number of systems involved in the synchronization.

\subsection{Synchronization of Fuzzy Systems}
Let us now discuss the synchronization of linear and stationary DFS systems with partial state coupling is provided, extending the results for crisp systems \cite{Tuna2}.
Let an array of $p$ DFS, each with $n$ state variables; the state of the $i$-th system is denoted by $x_i \in \mathbb{E}^{n}$, while the $j$-th component of  the $i$-th system is denoted by $x_{ij} \in \mathbb{E}$.

The array of $p$ identical DFS \emph{synchronizes} if
\begin{eqnarray}
\label{inen}
\lim_{k\rightarrow \infty} d_{\mathbb{E}^n}[ x_i(k), x_j(k)] =  0,  & \forall i,j=1,\ldots, p
\end{eqnarray}
The array of identical DFS synchronizes {\em level-wise} if, for all $\alpha \in (0,1]$:
\begin{eqnarray}
\begin{matrix}
\lim_{k\rightarrow \infty} \rho^*_{\mathbb{R}}([x_{ij}]^\alpha(k),[x_{iq}]^\alpha(k)) =  0 \\
\\
 \forall j,q =1,\ldots, n,\mbox{ and } \forall i=1,\ldots, p
\end{matrix}
\end{eqnarray}
The following proposition correlates synchronization and level-wise synchronization:
\begin{proposition}
\label{lwise}
The synchronization of an array of identical DFS implies their level-wise synchronization 
\end{proposition}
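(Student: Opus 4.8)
The plan is to read the level-wise conclusion directly off the definition of the metric $d_{\mathbb{E}^n}$ that appears in the synchronization hypothesis, exactly as in the consensus analogue. The whole argument rests on a single chain of domination: $d_{\mathbb{E}^n}$ controls each componentwise distance $d_\mathbb{E}$, which in turn controls the Hausdorff metric $\rho_\mathbb{R}$ at every fixed $\alpha$-level, which finally dominates the Hausdorff separation $\rho^*_\mathbb{R}$.

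First I would unfold the definition (\ref{dizero}) and rewrite the hypothesis (\ref{inen}) componentwise. Since
\[
d_{\mathbb{E}^n}[x_i(k),x_j(k)] = \sum_{l=1}^n d_\mathbb{E}(x_{il}(k),x_{jl}(k))
\]
is a finite sum of non-negative terms tending to zero, each summand must vanish separately, giving $d_\mathbb{E}(x_{il}(k),x_{jl}(k)) \to 0$ as $k\to\infty$ for every component $l=1,\ldots,n$ and every pair of systems $i,j=1,\ldots,p$.

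Next I would invoke the definition (\ref{hausdist}) of $d_\mathbb{E}$ as a supremum over the $\alpha$-levels. For each fixed $\alpha \in (0,1]$ the single level can contribute no more than the supremum, so
\[
\rho_\mathbb{R}([x_{il}]^\alpha(k),[x_{jl}]^\alpha(k)) \;\leq\; d_\mathbb{E}(x_{il}(k),x_{jl}(k)) \;\longrightarrow\; 0 .
\]
Finally, since the Hausdorff separation is by construction bounded above by the Hausdorff metric, $\rho^*_\mathbb{R}\leq\rho_\mathbb{R}$, the same limit holds with $\rho^*_\mathbb{R}$ in place of $\rho_\mathbb{R}$, which is precisely the level-wise synchronization statement.

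There is no genuinely hard step: the implication is a direct consequence of the definitions, and the only point needing mild care is the bookkeeping — splitting the vector metric into its $n$ components and observing that, because $d_\mathbb{E}$ is itself the supremum over $\alpha$, the level-wise vanishing is obtained for every $\alpha$ simultaneously. As in the consensus case the converse fails in general, and can be recovered only by restricting the state variables to the set $\Psi$, equivalently to triangular fuzzy numbers via Theorem \ref{lem:tfn}.
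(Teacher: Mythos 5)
Your proposal is correct and takes exactly the route the paper intends: the paper's proof is the single sentence ``This fact derives from the definition of $d_{\mathbb{E}}$,'' and your chain $d_{\mathbb{E}^n}\Rightarrow d_{\mathbb{E}}\Rightarrow \rho_{\mathbb{R}}\Rightarrow \rho^*_{\mathbb{R}}$ is simply that one-liner written out in full. No gap; your version is the more complete of the two.
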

\paragraph{{ Proof}}
This fact derives from the definition of $d_{\mathbb{E}}$.

\vspace{3mm}

Note that, limiting each state variable to the set $\Psi$, defined in Eq. (\ref{eq:psieq}), the implication can be revised; according to Theorem \ref{lem:tfn}, this is true for linear systems with initial conditions described by triangular fuzzy numbers.

The following theorem correlates the synchronization of an array of linear and stationary crisp systems with the synchronization reached by the same systems in the fuzzy fashion.
\begin{theorem}
\label{syncfz}
Let the conditions required by Theorem \ref{teosynchro} hold for an array of $p$ crisp systems in the form of Eq.(\ref{systems}), where $\Omega$ is chosen according to Theorem \ref{teosynchro}; then the array of DFS, where each systems has the same dynamics of (\ref{systems}) and fuzzy initial condition, synchronize in the sense specified by (\ref{inen}). 
\end{theorem}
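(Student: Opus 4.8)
The plan is to reduce the fuzzy synchronization to the already-established crisp synchronization by passing to the level-wise (endpoint) representation of Section \ref{levelwise}, and then to upgrade the resulting level-wise synchronization to convergence in the metric $d_{\mathbb{E}^n}$. First I would stack the $p$ fuzzy systems into a single DFS $x(k+1)=\hat A x(k)$ with $x\in\mathbb{E}^{np}$ and $\hat A = I\otimes A - L\otimes\Omega C$ exactly as in Eq. (\ref{stack}), and observe that, under the hypotheses of Theorem \ref{teosynchro}, $\hat A$ is entrywise nonnegative: the off-diagonal blocks equal $\gamma_{ij}K\ge 0$ (since $K=\Omega C\ge 0$ is diagonal and $\gamma_{ij}\ge 0$), while the diagonal blocks $A-l_{ii}K$ are nonnegative by condition (\ref{kcond}). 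Hence $\hat A$ is monotone nondecreasing.

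Next I would apply the level-wise decomposition of Eq. (\ref{POSNEG}) to this stacked DFS. Because $\hat A\ge 0$, its monotone parts satisfy $\hat A^+=\hat A$ and $\hat A^-=0$, so the $2np$-dimensional endpoint system decouples into two independent copies of the crisp stacked system, one driven by the lower endpoints and one by the upper endpoints: for every $\alpha\in[0,1]$ one has $\underline x^\alpha(k)=\hat A^k\,\underline x^\alpha(0)$ and $\bar x^\alpha(k)=\hat A^k\,\bar x^\alpha(0)$. Since $\hat A$ synchronizes the crisp array by Theorem \ref{teosynchro}, each copy synchronizes, i.e. for every $\alpha$ the block differences $\underline x_i^\alpha(k)-\underline x_j^\alpha(k)$ and $\bar x_i^\alpha(k)-\bar x_j^\alpha(k)$ tend to $0$; this is precisely level-wise synchronization in the sense of Proposition \ref{lwise}. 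Note that the nonnegativity of $\hat A$ is used twice here: it makes $\hat A^-$ vanish, and it preserves the ordering $\underline x^\alpha(k)\le\bar x^\alpha(k)$, so the endpoint pairs remain legitimate $\alpha$-levels of a vector of fuzzy numbers.

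The main obstacle is the final passage from level-wise synchronization to synchronization in $d_{\mathbb{E}^n}$, because by definition $d_{\mathbb{E}}(\mu,\nu)=\sup_{\alpha>0}\rho_{\mathbb{R}}([\mu]^\alpha,[\nu]^\alpha)$ involves a supremum over all levels, so pointwise-in-$\alpha$ convergence is not a priori sufficient. I would close this gap using linearity together with the spectral structure produced by Theorem \ref{teosynchro}: the Gershgorin bound places the transverse eigenvalues of $\hat A$ strictly inside the unit disk, so the disagreement of any trajectory decays as $\|\hat A^k w\| \le C\lambda^k\|w\|$ with $\lambda<1$ and $C$ independent of the initial vector $w$. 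The crucial point is that the \emph{same} matrix $\hat A$ governs every $\alpha$-level, and, since each fuzzy initial condition has bounded support, all the endpoint vectors $\underline x^\alpha(0),\bar x^\alpha(0)$ are contained in a fixed compact set, uniformly in $\alpha$. Hence the disagreement bound $C\lambda^k\,\|\underline x^\alpha(0)\|$, and its upper-endpoint analogue, is uniform in $\alpha$; taking the supremum over $\alpha$ and summing over the $n$ components then yields $d_{\mathbb{E}^n}[x_i(k),x_j(k)]\to 0$, which is exactly (\ref{inen}). It is worth emphasizing that the positivity hypotheses are essential: dropping them makes $\hat A^-\neq 0$, the endpoint systems couple, and one would additionally need $|\hat A|$ to be Schur-stable, as noted at the end of Section \ref{levelwise}.
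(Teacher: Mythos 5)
Your proof is correct, but it takes a genuinely different route from the paper's. The paper's own argument is two lines: it notes that the stacked matrix $\hat A = I\otimes A - L\otimes \Omega C$ has only nonnegative entries, hence is monotone nondecreasing, and then invokes Corollary \ref{teolin} (the comparison result built on Theorem \ref{lem:stab3} with the defuzzification function $V_i(x,k)=d_{\mathbb{E}}(x_i,\hat 0_i)$), which transfers the asymptotic behaviour of the crisp stacked system directly into convergence in the metric $d_{\mathbb{E}^n}$, with no passage through $\alpha$-levels. You instead use the endpoint decomposition of Eq. (\ref{POSNEG}): nonnegativity of $\hat A$ forces $\hat A^-=0$, each endpoint system is a copy of the crisp array and synchronizes by Theorem \ref{teosynchro}, giving level-wise synchronization, which you then upgrade to synchronization in $d_{\mathbb{E}^n}$ by a uniformity-in-$\alpha$ argument. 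Both routes hinge on the same nonnegativity of $\hat A$. What your route buys is an honest treatment of the final metric step: the paper only establishes the implication ``level-wise $\Rightarrow$ metric'' for TFNs (Theorem \ref{lem:tfn}), while the theorem statement does not restrict the initial conditions to TFNs, and your observation that the \emph{same} matrix governs every level with all endpoint vectors confined to the bounded support is precisely what makes the supremum over $\alpha$ go to zero. The cost is that your geometric-decay bound needs the transverse spectrum of $\hat A$ to lie strictly inside the unit disk, which the Gershgorin estimate in Theorem \ref{teosynchro} does not literally deliver (it only yields $|\lambda|\le 1$); the paper's comparison-theorem route inherits exactly the same caveat, since it too rests on the asymptotic stability asserted there, so this is a weakness of the cited result rather than of your argument.
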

\paragraph{{ Proof}}
From Theorem \ref{teosynchro} it follows that the crisp overall stacked system (\ref{stack}) is stable and the array of $p$ systems synchronize.
Since $\Omega$ is such that $[A\otimes I - L\otimes K\Omega]$ has only nonnegative entries, it follows that the condition required by Theorem  \ref{teolin} is verified, and the theorem is proved.  
\vspace{1mm}

The following corollary provides a characterization of the synchronized evolution:
\begin{corollary}
For each $\alpha$-level the synchronized evolution is given by:
\begin{equation}
\label{synchroalfa}
\begin{matrix}
\underline x^{\alpha *}(k)=[r\otimes A^k] \begin{bmatrix}\underline x^\alpha_1(0)\\\vdots \\ \underline x^\alpha_p(0)\end{bmatrix};&
\bar x^{\alpha *}(k)=
[r\otimes A^k] \begin{bmatrix}\bar x^\alpha_1(0)\\\vdots \\ \bar x^\alpha_p(0)\end{bmatrix} 
\end{matrix}
\end{equation}
\end{corollary}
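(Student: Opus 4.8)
The plan is to reduce the fuzzy synchronized evolution to the crisp one via the level-wise representation of Section \ref{levelwise}. First I would recall that the overall fuzzy array is a linear stationary DFS driven by the stacked matrix $\hat A = I\otimes A - L\otimes K$ of Eq. (\ref{stack}). For such a system the endpoints of each $\alpha$-level evolve according to the stacked dynamics of Eq. (\ref{POSNEG}), with $\hat A$ decomposed into its monotone nondecreasing part $\hat A^+$ and its monotone nonincreasing part $\hat A^-$.

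The decisive observation---already established in the proof of Theorem \ref{syncfz}---is that under the hypotheses of Theorem \ref{teosynchro} the matrix $\hat A$ has only nonnegative entries. Hence $\hat A^+ = \hat A$ and $\hat A^- = 0$, so the coupling block of Eq. (\ref{POSNEG}) vanishes and the recursion degenerates into two decoupled copies of the crisp dynamics:
\begin{equation}
\underline x^\alpha(k+1) = \hat A\,\underline x^\alpha(k), \qquad \bar x^\alpha(k+1) = \hat A\,\bar x^\alpha(k).
\end{equation}
In other words, at every level $\alpha$ the lower-endpoint vector and the upper-endpoint vector each obey exactly the crisp stacked dynamics of Eq. (\ref{stack}), started respectively from the stacks $[\underline x_1^\alpha(0)^T,\ldots,\underline x_p^\alpha(0)^T]^T$ and $[\bar x_1^\alpha(0)^T,\ldots,\bar x_p^\alpha(0)^T]^T$.

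It then remains only to invoke the crisp synchronization result of Eq. (\ref{resultt}): since Theorem \ref{syncfz} guarantees that the crisp stacked system synchronizes, each endpoint sequence converges to the shared evolution obtained by applying the operator built from $A^k$ and the left null-vector $r$ (with $r^TL=0$, $\sum_h r_h=1$) to its own initial stack. Applying this separately to the lower and the upper endpoints yields precisely the two expressions in Eq. (\ref{synchroalfa}), completing the proof.

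The main obstacle to watch is the bookkeeping of the Kronecker ordering: the crisp statement (\ref{resultt}) is written with $A^k\otimes r^T$ acting on the stacked initial conditions, whereas (\ref{synchroalfa}) displays $r\otimes A^k$; I would verify the dimensions and reassemble the synchronized stacked vector (all $p$ identical blocks) from the common synchronized state, so that each level-wise endpoint is matched to the correct crisp formula. Beyond this indexing care, every step is forced by the nonnegativity of $\hat A$, so no further estimate is required.
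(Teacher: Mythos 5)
Your proof is correct and follows essentially the same route as the paper: both arguments hinge on the nonnegativity of the stacked matrix $\hat A$ forcing the nonincreasing part in the level-wise representation (\ref{POSNEG}) to vanish, so that the lower and upper endpoints decouple and each inherits the crisp synchronized evolution (\ref{resultt}). The only (cosmetic) difference is that you apply the decoupling to the one-step dynamics before passing to the limit, while the paper applies it directly to the limiting operator $R\otimes A^k$; your ordering is arguably the cleaner of the two, and your remark about the $A^k\otimes r^T$ versus $r\otimes A^k$ ordering correctly identifies a notational inconsistency in the paper rather than a mathematical gap.
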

\paragraph{{ Proof}}
Eq. (\ref{resultt}) can be restated for the stacked systems as
 \begin{equation}
 { z}^*=(R\otimes H)  { z}_0
 \end{equation}
 where $R=1_p\otimes r$, $ { z}^*=[z^*_1,\ldots,z^*_p]^T$ (note that $z^*_i$ represents the synchronized evolution, and are all the same) and $ { z}_0=[z_1(0),\ldots,z_p(0)]^T$. 
 
 Since crisp and fuzzy systems are linear, when switching to the fuzzy variables, above equation becomes ${ x}^*=(R\otimes H ){ x}_0$.
 
 Furthermore, by Theorem \ref{lwise} we have that the array of fuzzy systems synchronize levelwise; therefore, for each $\alpha$-level, the synchronized evolution becomes:
  \begin{equation}
 \begin{bmatrix}\underline { x}^{\alpha *}(k)\\ \bar { x}^{\alpha *}(k)\end{bmatrix} 
 =
 \begin{bmatrix}
R\otimes (A^+)^k & R\otimes (A^-)^k\\
R\otimes (A^-)^k & R\otimes (A^+)^k
 \end{bmatrix}
  \begin{bmatrix}\underline { x}^{\alpha }(0)\\ \bar { x}^{\alpha}(0)\end{bmatrix} 
 \end{equation}
since the stacked dynamic matrix has nonnegative entries it follows that $A^-=0$ and the theorem is proved.  \vspace{3mm}

\begin{corollary}
Crisp synchronization coincides with fuzzy level-wise synchronization for $\alpha=1$.
\end{corollary}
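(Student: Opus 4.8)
The plan is to mirror the argument already used for the analogous fuzzy-consensus corollary, reducing the claim to a direct substitution into the level-wise synchronized evolution of Eq.~(\ref{synchroalfa}). The essential fact to invoke is that, at $\alpha=1$, the $\alpha$-level set of each fuzzy initial condition collapses to a single point: for a triangular fuzzy number $[\mu]^1=\{\mu_c\}$, so that $\underline{x}_i^1(0)=\bar{x}_i^1(0)$ for every system $i=1,\ldots,p$. I would state this at the outset, restricting as needed to TFN initial conditions so that the core is genuinely a singleton, consistently with the level-wise results invoked throughout this section.

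Next I would substitute $\alpha=1$ into the two expressions of Eq.~(\ref{synchroalfa}). Since the lower- and upper-endpoint evolutions are obtained by applying the \emph{same} linear operator $[r\otimes A^k]$ to the stacked vectors of lower and upper endpoints, respectively, and since these two stacked vectors coincide at $\alpha=1$, I would conclude immediately that
\begin{equation*}
\underline{x}^{1*}(k)=[r\otimes A^k]\begin{bmatrix}\underline{x}_1^1(0)\\\vdots\\\underline{x}_p^1(0)\end{bmatrix}=[r\otimes A^k]\begin{bmatrix}\bar{x}_1^1(0)\\\vdots\\\bar{x}_p^1(0)\end{bmatrix}=\bar{x}^{1*}(k)
\end{equation*}
for all $k\geq 0$, so that the synchronized $\alpha=1$ interval degenerates into a single crisp value at every time step.

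Finally I would identify this degenerate value with the crisp synchronized evolution. Taking $z_{i0}$ equal to the common endpoint $\underline{x}_i^1(0)=\bar{x}_i^1(0)$, the collapsed interval coincides exactly with the crisp shared state ${ z}^*(k)$ of Eq.~(\ref{resultt}), since both are the image of the same initial data under $[r\otimes A^k]$ (equivalently under $(A^k\otimes r^T)$ acting on the stacked crisp initial conditions). This establishes that fuzzy level-wise synchronization at $\alpha=1$ reproduces crisp synchronization.

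I do not expect any genuine obstacle here: the statement is a substitution corollary of the preceding one, and its proof is essentially the synchronization counterpart of the $\alpha=1$ consensus corollary. The only point requiring care is the collapse of the core at $\alpha=1$, which holds for triangular fuzzy numbers but fails for fuzzy numbers with a plateau (e.g.\ trapezoidal ones), where $[\mu]^1$ is a nondegenerate interval; I would therefore flag that the statement is to be read under the TFN assumption already adopted for the level-wise convergence results.
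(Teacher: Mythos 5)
Your argument is the same as the paper's: observe that at $\alpha=1$ the level set collapses to a single point, substitute into Eq.~(\ref{synchroalfa}), and conclude $\underline x^{1*}(k)=\bar x^{1*}(k)$ for all $k\geq 0$, identifying the result with the crisp evolution of Eq.~(\ref{resultt}). Your added caveat that the collapse genuinely requires TFN-type (non-plateau) initial conditions is a reasonable precision the paper leaves implicit, but it does not change the route.
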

\paragraph{{ Proof}}
For $\alpha=1$ it follows that $\underline x^1(0)=\bar x^1(0)$ (i.e. the interval collapses into a single point).
Substituting inside Eq. (\ref{synchroalfa}), it follows that $\underline x^1(k)=\bar x^1(k)$, for all $k\geq 0$, proving the statement.  \vspace{3mm}

Note that the maximum level of uncertainty and vagueness of a given variable $x_{ij}$ corresponds to the width ${ \xi}_{ij}$ of its support (i.e.,  $\alpha=0$); hence, for each variable, ${ \xi}_{ij}$ is a useful index, which can be adopted to measure the level of uncertainty, once the synchronization is reached. Such an evolution is given by:
\begin{equation}
\xi(k)=\bar x^0(k)-\underline x^0(k)=[r\otimes A^k] \begin{bmatrix}\underline x^\alpha_1(0) - \bar x^\alpha_1(0)\\\vdots \\ \underline x^\alpha_p(0) -\bar x^\alpha_p(0) \end{bmatrix}
\end{equation} 

\section{Consensus and Synchronization in Critical Infrastructure Protection}
\label{application}

In this section the application of fuzzy distributed consensus and synchronization to critical infrastructure protection will be discussed.
\subsection{Consensus}

Consider a scenario composed of $p$ highly interconnected infrastructures or lands, and let an operator or a team for each infrastructure or land be in charge to determine the local effects of an adverse event, such as terroristic attack, natural disaster or a distributed technological failure in the absence of a central coordination authority. 

In the following, two cases will be considered: in the first one the operators have to reach consensus on the actual severity of the failure affecting the whole scenario, on the base of their own partial observations; in the second case they have also to determine the expected evolution of the phenomena.

Each expert expresses a linguistic measurement of the perceived severity of the failure (and of its expected growth ratio) affecting his/her infrastructure or land using the expressions reported in Table 1 (and Table 2) providing, also, an estimate  about his/her confidence on the provided data, in accordance with the confidence scale of Table 3. The values are then encoded into TFNs according to the last column of Tables 1, 2 and 3. Specifically, Table 1 encodes the actual level of failure perceived by the operator, and Table 2 the expected growth/reduction rate (which can as well be negative); these numbers can be regarded as the central values of the TFNs, while the left and right endpoints are obtained by applying the confidence scale reported in Table 3.

Hence, being each operator aware only of its own domain, they need an instrument as that reported in this paper in order to reach a distributed consensus.

\begin{table}
\begin{center}
\label{tab:impact}
\scalebox{0.6}{
\begin{tabular}{|p{3cm}|p{7cm}|c|}
\hline
{\bf Perceived $\mbox{Ê}$Severity}& {\bf Description} & \small{Value} \\ \hline
\small{nothing } & \small{the event does not induce any effect on the infrastructure/land} & 0\\ \hline
\small{negligible} & \small{the event induces some very limited and geographically bounded consequences that have no direct impact on the infrastructure's or land's operativeness} &  0.025\\ \hline
\small{ very limited} & \small{the event induces some geographically bounded consequences that have no direct impact on the infrastructure's or land's operativeness} &   0.05\\ \hline
\small{limited} & \small{the event induces consequences only on subsystems/zones that have no direct impact on the infrastructure's or land's operativeness} &   0.1\\ \hline
\small{circumscribed degradation} & \small{the event induces geographically bounded consequences } &   0.2\\ \hline
\small{significant degradation} &\small{ the event significantly degrades the operativeness of the infrastructure/land}&  0.30\\ \hline
\small{severe degradation}  & \small{the impact  on the infrastructure/land is severe} & 0.500\\ \hline
\small{quite complete stop} & \small{ the impact  is quite catastrophic} & 0.700\\ \hline
\small{stop} & \small{total disruption} & 1\\
\hline
\end{tabular}
}
\end{center}
\caption{Perceived Severity estimation table.}
\end{table}

\begin{table}[htpb]
\begin{center}
\label{tab:growth}
\scalebox{0.6}{
\begin{tabular}{|p{3.3cm}|p{7cm}|c|}
\hline
{\bf Expected Growth/Reduction}&{\bf Description} & \small{\bf Value} \\ \hline
\small{steady }  &\small{the severity of the event is expected to remain constant.} & 0\\ \hline
\small{negligible} & \small{the severity of the event is expected to have a very limited growth/reduction.} &  $\pm 0.0001$\\ \hline
\small{very slow} & \small{the severity of the event is expected to grow/reduce only in the long term.} &  $\pm 0.001$\\ \hline
\small{slow} & \small{the severity of the event is expected to grow/reduce in the long term and eventually in the mid-term.} &  $\pm 0.03$\\ \hline
\small{quite slow} & \small{the severity of the event is expected to grow/reduce in the mid-term} &  $\pm 0.005$\\ \hline
\small{Not so slow} & \small{the severity of the event is expected to grow/reduce in the mid-term and eventually in the short term} &  $\pm 0.010$\\ \hline
\small{Quite Fast}  & \small{the severity of the event is expected to grow/reduce in the short-term} & $\pm 0.05$\\ \hline
\small{Fast} & \small{ the severity of the event is expected to grow/reduce significantly in the short-term} & $\pm  0.07$\\ \hline
\small{Very Fast} & \small{the severity of the event is expected to grow/reduce dramatically in the short-term} & $\pm  0.1$\\
\hline
\end{tabular}
}
\end{center}
\caption{Expected Growth estimation table.}
\end{table}

\begin{table}[htpb]
\begin{center}
\label{tab:mod}
\scalebox{0.6}{
\begin{tabular}{|p{2cm}|p{5cm}|c|c|}
\hline
{\bf Confidence}& {\bf Description} & \small{\bf Value} (severity)&\small{\bf Value} (growth)\\ \hline
{\bf * } & \small{Perfect Knowledge (no uncertainty)} & 0& 0\\ \hline
{\bf * *} & \small{Excellent confidence} &  $ \pm 0.005$&  $ \pm 0.0005$ \\ \hline
{\bf * * *} & \small{Good confidence} &  $\pm 0.050$&  $\pm 0.0050$ \\ \hline
{\bf * * * *} & \small{Relative Confidence} & $ \pm  0.100$ & $ \pm  0.0100$\\ \hline
{\bf * * * * *} & \small{Uncertain} & $ \pm  0.200 $& $ \pm  0.0200$\\ \hline
\end{tabular}
}
\end{center}
\caption{Confidence estimation  scale.}
\end{table}

Let us consider a scenario composed of 5 infrastructures and assume that their topology is a bipartite graph (see Figure \ref{fig:topo}.(a)). Such a topology may represent a scenario where some infrastructures are not able to communicate directly (e.g., due to physical ore commercial constrains)

Assuming unitary weight the corresponding laplacian $L_a$ is 
\begin{footnotesize}
\begin{eqnarray}
L_a=
\begin{bmatrix}
2 &0 &0 &-1 &-1\\
       0 &2 &0 &-1& -1\\
       0 &0 &2 &-1 &-1\\
       -1 &-1 &-1& 3 &0\\
       -1 &-1 &-1 &0 &3
\end{bmatrix}
\end{eqnarray}
\end{footnotesize}

Since $l_a^*=3$, in order to respect the condition required by Theorem (\ref{syncfz}), we have that for bipartite topology $\tau_1^*=\frac{1}{3} [s]$ and $\tau_2^*=\frac{1}{4} [s]$; however for the sake of uniformity, we chose $\tau=\frac{1}{4} [s]$,  for both single and double integrator cases. 

It is immediate to recognize that, due to the choice of $\tau$, the dynamic matrices of both single and double integrator case are composed by non-negative entries.  

Table 4 shows the initial conditions for both perceived severity and expected growth, each with the associated confidence, as well as the corresponding TFN.

Figure \ref{fig:first_startstop} shows the initial conditions and final synchronized state  in the case of single integrators.
 More specifically in a situation where only one operator observes a very bad situation (i.e., operator n. 3 sees a "quite complete stop") while all the others have no direct perception of the crisis (i.e., they estimate the event ranging from "nothing" to "circumscribed"), they distributedly agree on a circumscribed degradation crisis condition with a good confidence. Hence, the agents are able to share vague and ambiguous information in a distributed way and they reach a consensus obtaining a consistent qualification of the actual crisis. Note that the consensus is obtained both in terms of expected severity (e.g., the central endpoint of the triangle) and in terms of confidence on the estimation (i.e., the width of the base of the triangle).
 
 In the case of double integrator models , Figure \ref{fig:second_start} shows the initial conditions for the expected growth and the consensus reached.
 In this case, it is more evident that even in the presence of very different local perception of which should be the evolution of the phenomena (quite all the operators express no overlapping estimations, both in terms of magnitude and sign,
and two operators also with a strong credibility) they reach a distributed consensus or a common understanding of the effective growth of the evolution of the phenomenon, again, both in terms of magnitude and confidence. Note that the criticality in this latter framework is assumed to be growing with a constant rate, and the agents reach an agreement also on this varying quantity, both in terms of magnitude and confidence.

Notice that the reached consensus does not depend on the peculiar topology adopted. Any strongly connected and balanced topology with the proposed protocol allows to reach the same consensus. For example let us consider the ring topology of Figure (Figure \ref{fig:topo}.(b)), where each agent is able to communicate only with its nearest neighbors. In this case the Laplacian $L_b$ is 
\begin{footnotesize}
\begin{eqnarray}
L_b=
\begin{bmatrix}
1 &-1 &0 &0 &0\\
       0 &1 &-1 &0& 0\\
       0 &0 &1 &-1 &0\\
       0 &0 &0& 1 &-1\\
       -1 &0 &0 &0 &1
\end{bmatrix}
\end{eqnarray}
\end{footnotesize}

and we have that $\tau_1^*= 1[s]$ and $\tau_2^*=\frac{1}{2}[s]$. Hence also in this case $\tau=\frac{1}{4}$ satisfies the conditions of Theorem (\ref{syncfz}).

Let us initialize both the single and double integrator models with the same initial conditions used for the bipartite topology (Table IV).

Obviously the two topologies are not completely equivalent, because greater is the communication capability of the agents, faster is the consensus is reached.

This can be immediately recognized looking at the time evolution of the state variables of the different agents, as reported in Figures \ref{fig:first_alfa} and \ref{fig:first_alfa2} for the single integrator model with reference to the level-wise representation for $\alpha=0.3$ (i.e., the left and right extrema).

While in the bipartite graph topology the consensus is achieved after $9$ iterations, with the ring topology we need more than 30 iterations.

Analogously, in the case of double integrator model, Figures \ref{fig:second_alpha} and \ref{fig:second_alpha_chain} report the state variables of all the agents for the crisis severity estimation and  expected growth with reference to the level-wise representation for $\alpha=0.5$.

\begin{figure}
\begin{center}
\includegraphics[width=3in]{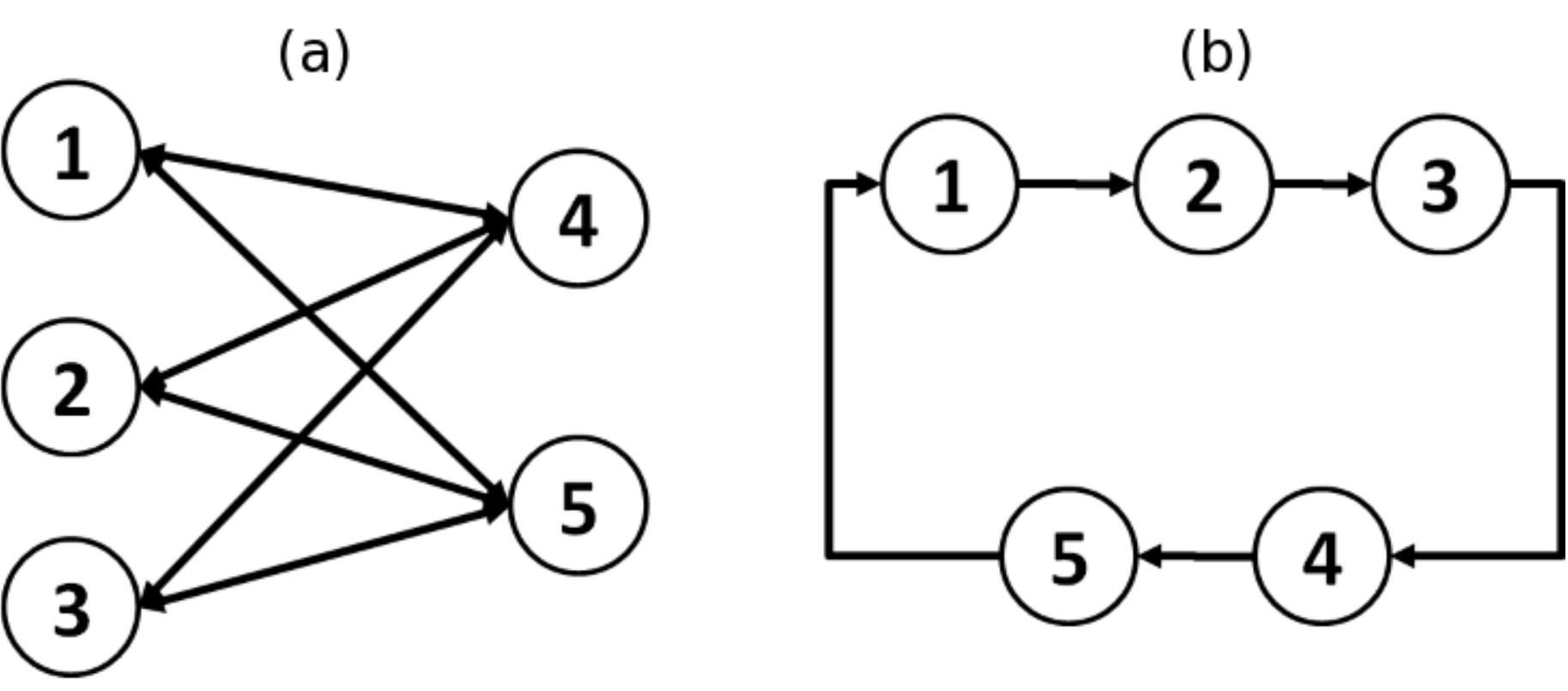}
\caption{Topologies chosen for simulations: bipartite graph (a) and chain (b). All the edges have unitary weight.}
 \label{fig:topo}
\end{center}
\end{figure}

\begin{table}[htpb]
\begin{center}
\label{tab:par}
\scalebox{0.6}{
\begin{tabular}{|c|c|l|c|c|l|c|}
\hline
n.&{\bf Severity}& {\bf Confidence} &{\bf TFN}\\ \hline
1&Nothing & {\bf * } & $[0, 0, 0]$\\ \hline
2&Limited & {\bf * * * *} & $[0, 0.1, 0.2]$\\ \hline
3&Quite Complete stop & {\bf * * } & $[0.695, 0.7, 0.705]$\\ \hline
4&Circumscribed degradation& {\bf * * } & $[0.195, 0.2, 0.205]$\\ \hline
5&Significant degradations & {\bf * * * * } & $[0.2, 0.3, 0.4]$\\ \hline
\end{tabular}
}
\end{center}

\begin{center}
\label{tab:par}
\scalebox{0.6}{
\begin{tabular}{|c|c|c|l|c|c|l|c|}
\hline
n.&{\bf Expected Growth}& {\bf Growth/Reduction}&{\bf Confidence} &{\bf TFN}  \\ \hline
1&Steady & Growth & {\bf * * * * *} &[-0.2, 0, 0.2]\\ \hline
2&Quite Fast &Growth& {\bf * *} &[0.0495, 0.05, 0.0505]\\ \hline
3& Slow &Reduction& {\bf * * *} &[-0.035, -0.03, -0.025]\\ \hline
4&Very Fast &Reduction& {\bf * * *} &[-0.105, -0.1, -0.095]\\ \hline
5&Fast &Growth& {\bf *} &[0.07, 0.07, 0.07]\\ \hline
\end{tabular}
}
\end{center}

\caption{Initial Conditions for the case study}
\end{table}
 
\begin{figure}[!ht]
\begin{center}
\includegraphics[width=3in]{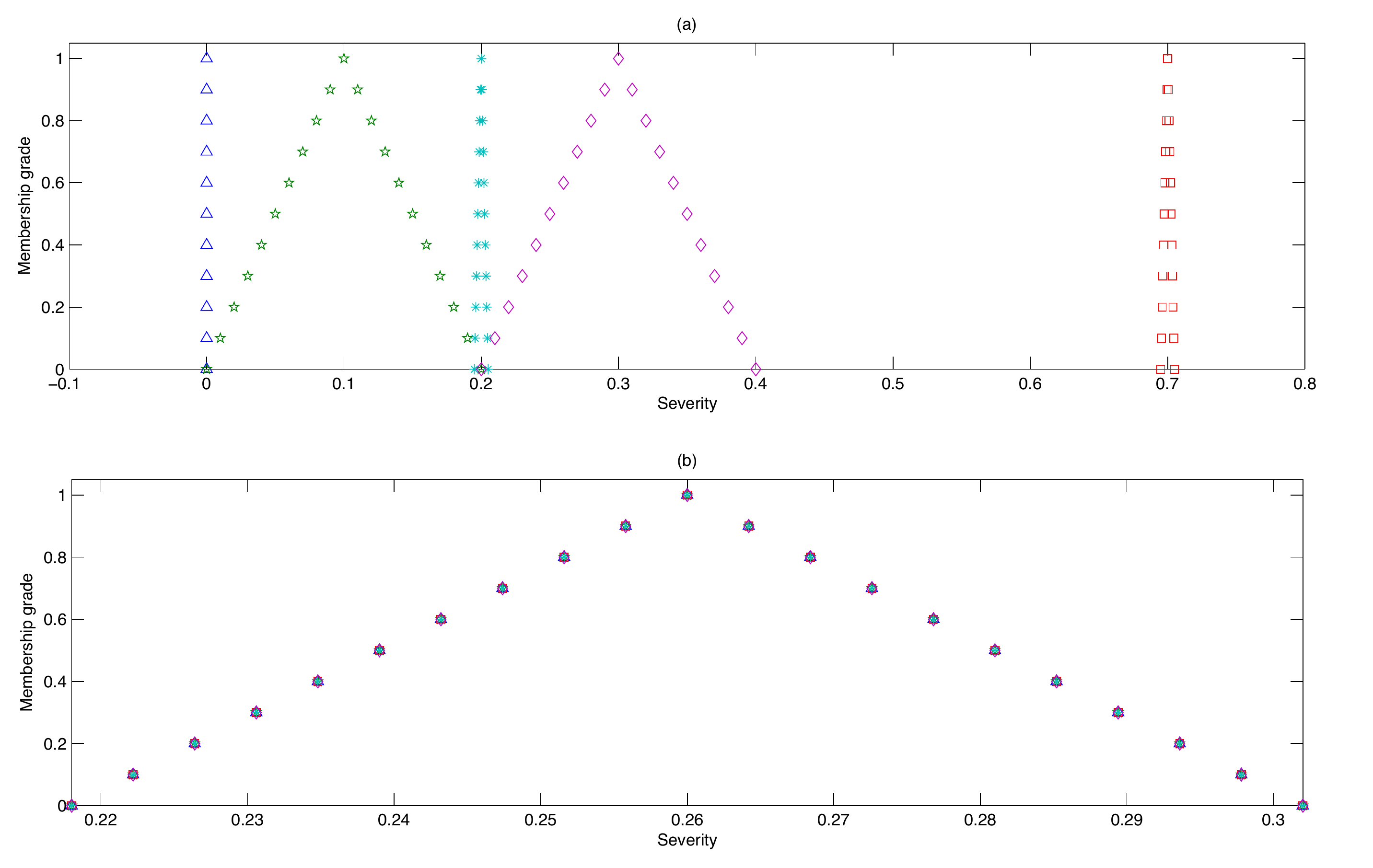}
\caption{Initial conditions (a) and synchronized state (b) for 5 discrete time single integrators. The result is the same for both topologies; however the consensus is reached after 10 steps for bipartite topology, while for chain topology 32 steps are required.}
 \label{fig:first_startstop}
\end{center}
\end{figure}

\begin{figure}[!ht]
\begin{center}
\includegraphics[width=3in]{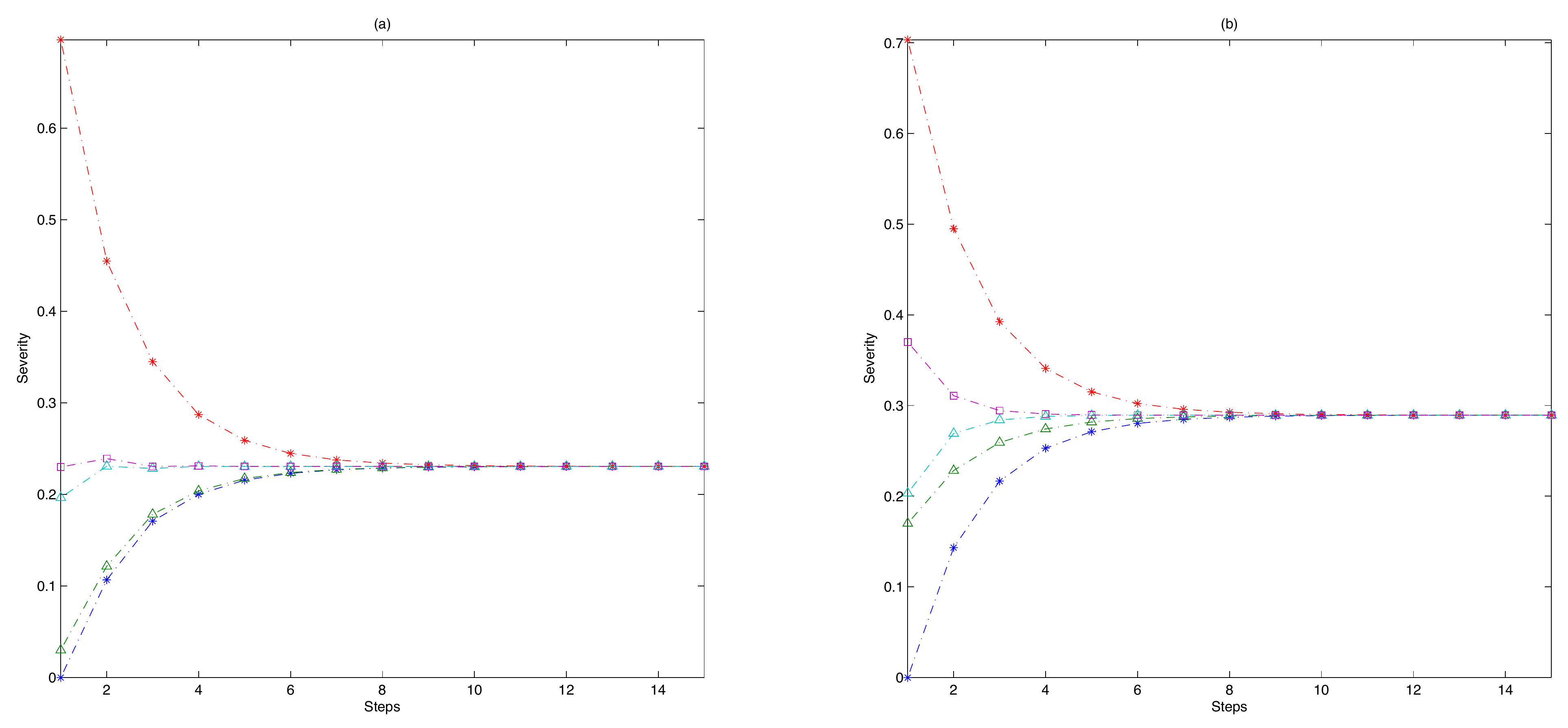}
\caption{Synchronization of left (a) and right (b) extrema of an $\alpha$-level of for 5 discrete time single integrators connected by the bipartite topology, for $\alpha=0.3$.}
 \label{fig:first_alfa}
\end{center}
\end{figure}

\begin{figure}[!ht]
\begin{center}
\includegraphics[width=3in]{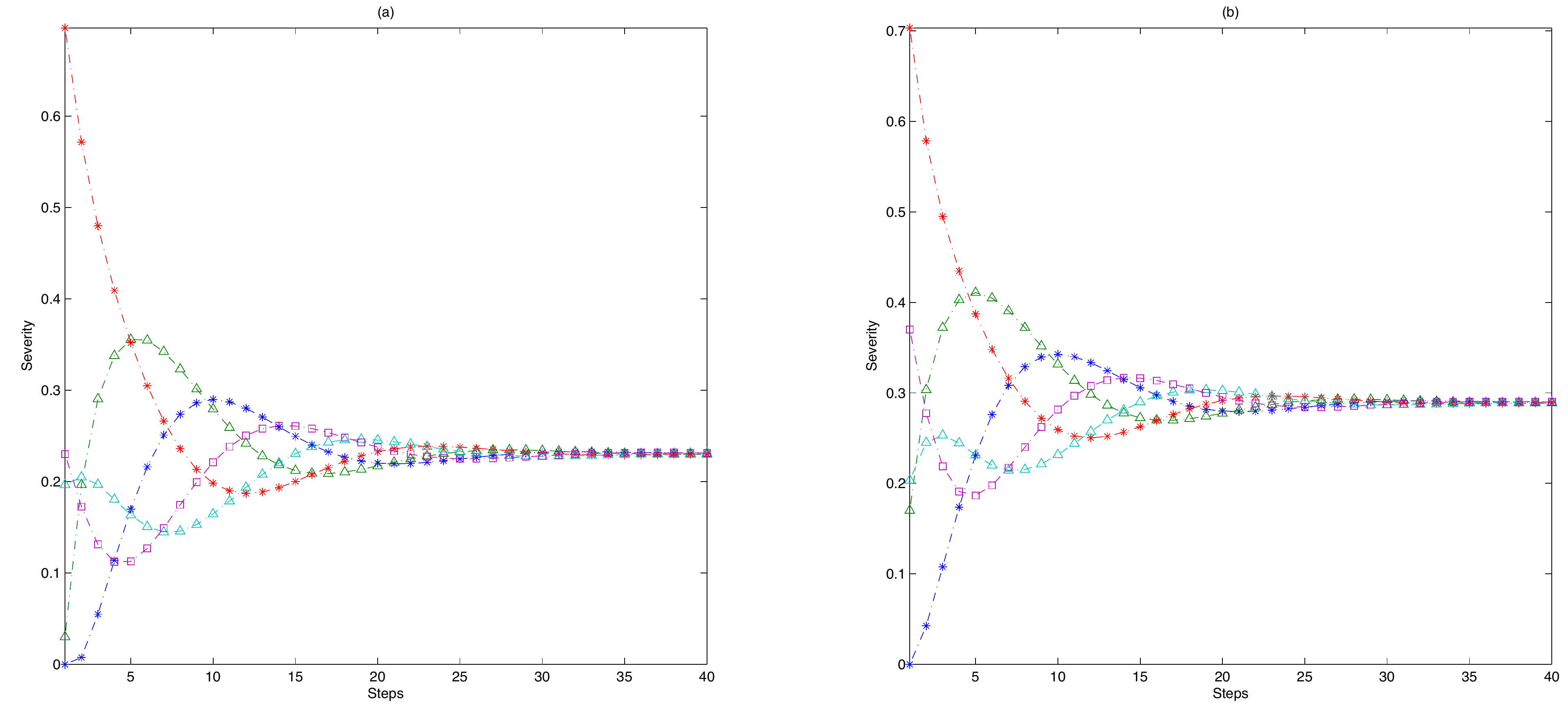}
\caption{Synchronization of left (a) and right (b) extrema of an $\alpha$-level of for 5 discrete time single integrators connected by the chain topology, for $\alpha=0.3$.}
 \label{fig:first_alfa2}
\end{center}
\end{figure}

 \begin{figure}[!ht]
\begin{center}
\includegraphics[width=3in]{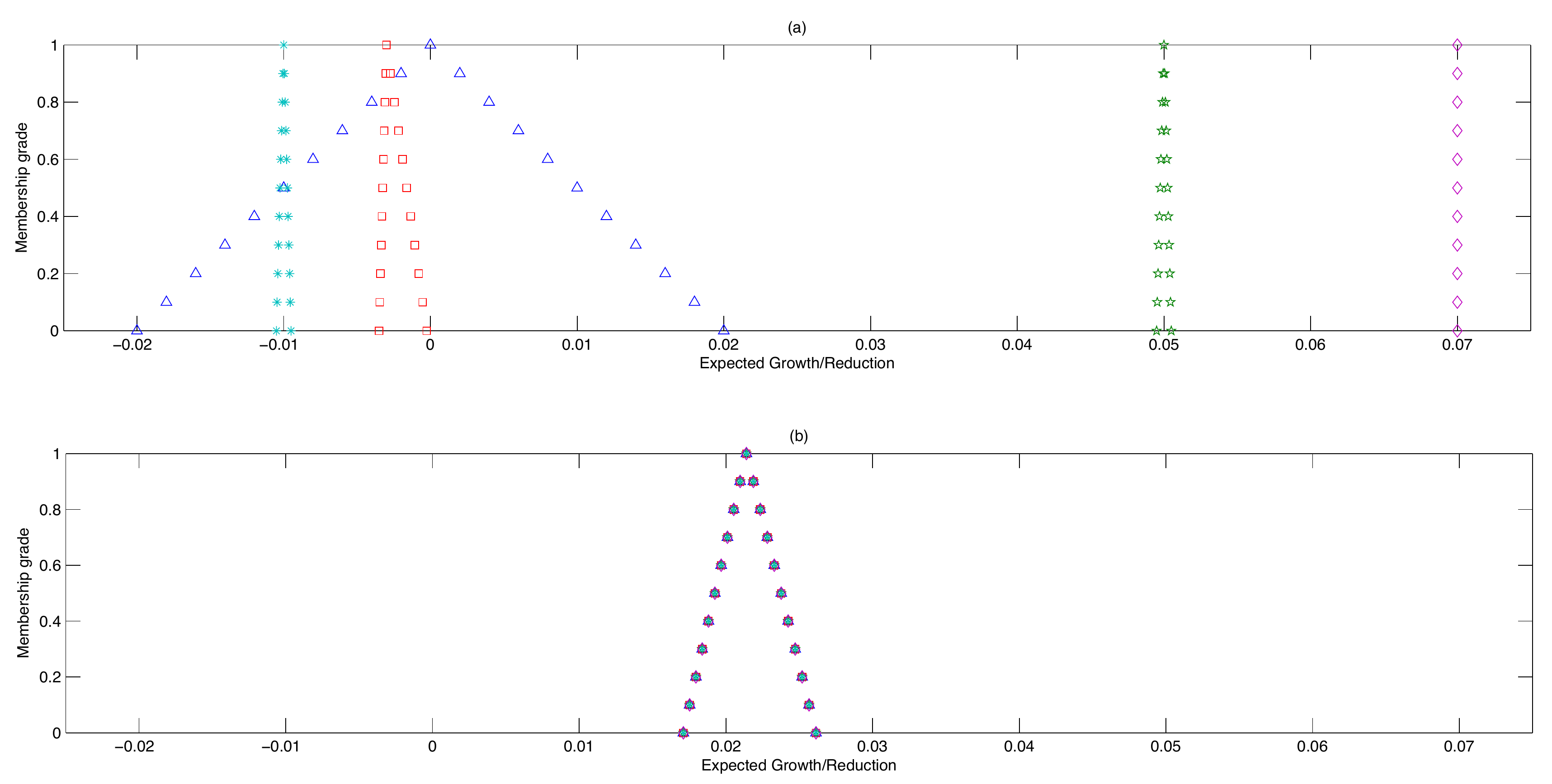}
\caption{Initial conditions for expected growth (a) and consensus reached (b) for 5 discrete time double-integrators; the consensus is reached after 8 steps for bipartite topology, and after 29 steps for chain topology.}
 \label{fig:second_start}
\end{center}
\end{figure}

\begin{figure}[!ht]
\begin{center}
\includegraphics[width=3in]{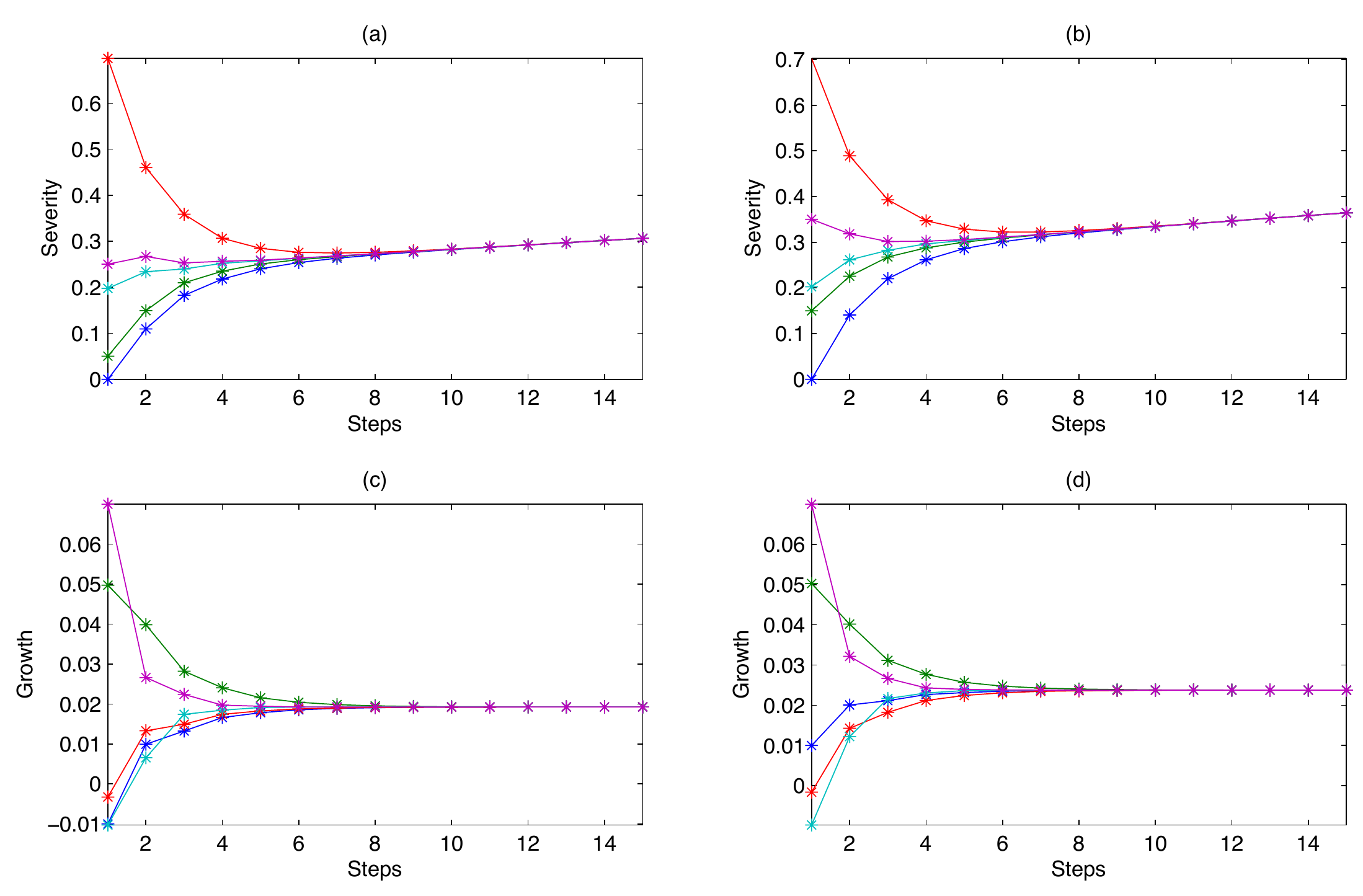}
\caption{Synchronization of an $\alpha$-level of 5 discrete time double integrators connected by the topology of Figure \ref{fig:topo}.(a) for $\alpha=0.5$: left extrema of severity (a); right extrema of severity (b); left extrema of growth (c); right extrema of growth (d).}
 \label{fig:second_alpha}
\end{center}
\end{figure}

\begin{figure}[!ht]
\begin{center}
\includegraphics[width=3in]{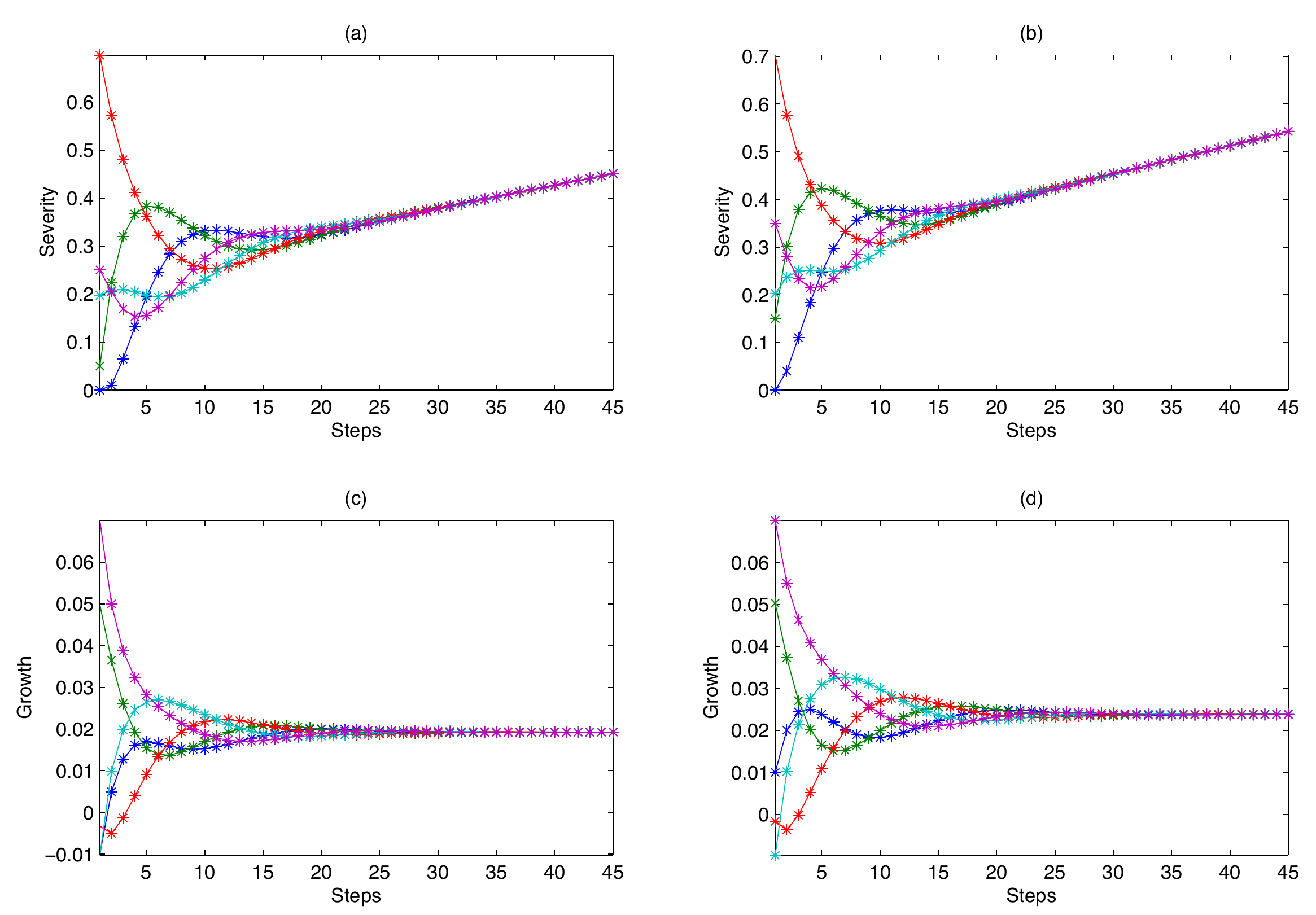}
\caption{Synchronization of an $\alpha$-level of 5 discrete time double integrators connected by the topology of Figure \ref{fig:topo}.(b) for $\alpha=0.5$: left extrema of severity (a); right extrema of severity (b); left extrema of growth (c); right extrema of growth (d).}
 \label{fig:second_alpha_chain}
\end{center}
\end{figure}

In order to better understand the influence of the sampling rate $\tau$, Figure \ref{fig:tau} shows the time required for the consensus for a given topology (i.e., $\tau k^*$, where $k^*$ is the number of steps required for consensus) for $\tau \in (0,1]$. Note that such a time is almost constant for small values of $\tau$, while it diverges for $\tau$ that reaches the stability boundaries. Finally note that the system may reaches consensus even for values of $\tau$ greater than $\tau_1^*$ or $\tau_2^*$.

 \begin{figure}[!ht]
\begin{center}
\includegraphics[width=3in]{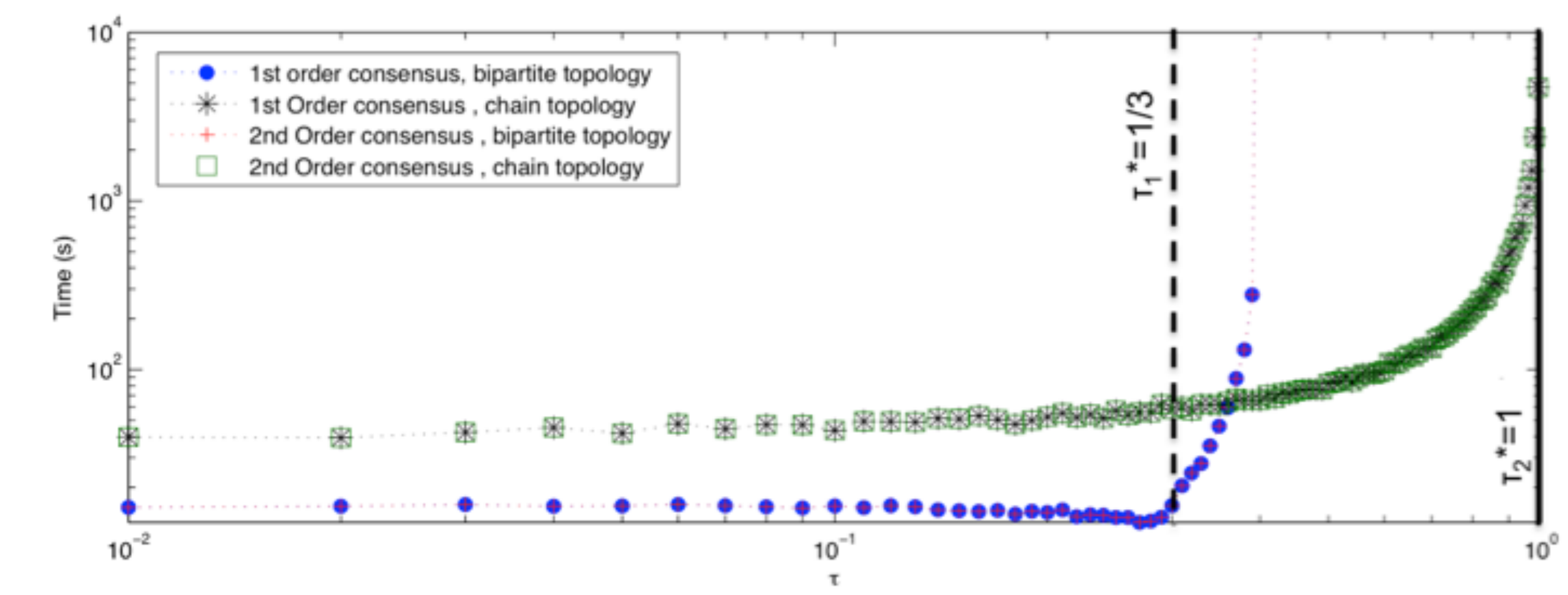}
\caption{Logarithmic plot of time required for consensus depending on the choice of $\tau$ in the interval $(0,1]$. The maximum theoretic limit $\tau_1^*$ for bipartite topology (dotted vertical line) and $\tau_2^*$ for chain topology (solid vertical line)  are highlighted.}
 \label{fig:tau}
\end{center}
\end{figure}

\subsection{Synchronization}

In the field of Critical Infrastructure Protection a crucial aspect is the capability to identify possible risks induced by cascading failures.
Unfortunately, critical infrastructures operators are very reluctant to share detailed information (i.e., from the field data) about their infrastructure, because this data is considered as {\em sensible} (that is to say, the disclosure of such an information would potentially have dramatic consequences on the safety and security of the infrastructures, as well as business and commercial negative effects).

To overcome such a difficulty and provide the operators with a useful tool, in the EU project MICIE \cite{MICIE} we proposed an approach based on a distributed architecture that implements an online risk predictor.
The risk predictor was implemented and tested with respect to a real case study composed of a power grid and a telecommunication infrastructure (see  \cite{MICIE} for a more detailed discussion).

 The main idea of the MICIE system is that the control room of each infrastructure (3 in the proposed case study) is equipped with an identical copy of an abstract and high-level dynamic model able to capture the most relevant domino effects. 
Each model is supplied with data provided by ``its" field (i.e., the data of the infrastructure where the tool is attested), and the different copies have to synchronize exchanging only information about their abstract state.

Specifically, in the proposed case study, we adopted an IIM formulation \cite{Haimes:2001,Haimes:2005a}, as better illustrated in the following.
Each instance of the model, attested in a given infrastructures' control room, acquires as inputs information coming from its own field; in other terms the copy in the control room of the first infrastructure receives as inputs the severity of failure affecting the first infrastructure, the control room of second infrastructure receives as inputs those related to the second infrastructure, and so on.

Allowing to synchronize the different copies, the system is able to provide to the different operators a coherent picture of the global situation, without exchanging sensible information.
Indeed as illustrated in Figure \ref{fig:graph} the different copies do not exchange their complete state; in fact they share a reduced and modified state vector, which depends on the structure of the output matrix $C$. As stated above, while it is desirable to exchange only data  generated within the models, without disclosing field data, this generates nontrivial issues for the choice of the feedback matrix. A more feasible approach is to provide a combination of the two kinds of information; in this way the state exchanged is reduced in dimension and the sensible information is masked.
Finally, due to the level of ambiguity and vagueness that characterizes such model we were forced to consider a DFS formulation. 
\subsection{Input-Output Inoperability Model}
In the literature many interdependency models have been developed, however the Input-Output Inoperability Model (IIM)  \cite{Haimes:2001,Haimes:2005a} gained large attention, because of its simplicity and because of the ability to model cascading effects and indirect dependencies.
In order to provide an indicator of the state of each infrastructure, the \emph{inoperability} $q$ is introduced, as the  inability (in percentage) for an infrastructure to correctly operate.
The IIM model for a scenario composed of $n$ infrastructures is given by:
\begin{eqnarray}
q(k+1)=Aq(k)+ Bc
\label{eq:inoperability_model}
\end{eqnarray} 
where $q,c\in \mathbb{R}^n$; the entires $a_{ij}$ of the $n\times n$ matrix $A$ represent the influence of the inoperability of $j-th$ infrastructure on $i-th$ one. 
The vector $c$ represents external, induced inoperability (it can be seen as a {\it perturbation} generated by an adverse or malicious event), and its effect on the model is mediated by the $n\times n$ matrix $B$.

Notice that, if matrix $A$ is stable, the IIM model reaches an equilibrium \cite{Haimes:2001,Haimes:2005a} $q_{eq}$ given by
 \begin{eqnarray}
q_{eq}=(I_n-A)^{-1}Bc
\label{qeq}
\end{eqnarray} 
such an equilibrium, represents the steady inoperability reached by the infrastructures after considering cascading effects.
\subsection{Simulation Results}
Consider a scenario composed of 3 critical infrastructures, each infrastructure equipped with the same IIM interdependency model, described by the following matrix:
\begin{equation}
A=\begin{bmatrix}
0.1&0.1&0.3\\
0.2&0.1&0.1\\
0.2&0.1&0.2
\end{bmatrix}
\end{equation}
Let us consider the case where the first infrastructure wants to estimate its state and the state of the others based on local information; specifically, let us assume a fuzzy perturbation $c^T=[c_1,0,0]^T$ where $c_1=\{0.05,0.1,0.15\}$ is a triangular fuzzy number and a fuzzy initial condition $q(0)^T=[q_1(0),0,0)]^T$ where $q_1(0)=\{0.05,0.1,0.15\}$ is a triangular fuzzy number.

Note that $\sigma(A)=\{0.4791, -0.1, 0.0209\}$, hence the matrix is stable and has non-negative entries.

Since perturbation $c$ is stationary, it is possible to consider the following extended system, treating $c$ as state variables:

\begin{equation}
w(k+1)=
\begin{bmatrix}
q(k+1)\\
c(k+1)
\end{bmatrix}
=
\begin{bmatrix}
A& B \\ 0 & I_3
\end{bmatrix}
w(k)=\tilde A w(k)
\end{equation}
Note that the above matrix is block triangular, therefore it is stable, since $A$ and $I$ are stable, and has non-negative entires.
Hence conditions required by Theorem \ref{teolin} are satisfied and the fuzzy system characterized by matrix $A$ is stable.

Figure \ref{fig:stability} shows the stable evolution of crisp IIM model (black line) and the stable evolution of the fuzzy IIM model for $\alpha=0$ (stars) and $\alpha=0.5$ (boxes), where crisp initial conditions and perturbations coincide with the central value of triangular fuzzy numbers adopted for $c$ and $q$. Note that, since the infrastructure has only local information, the foreseen inoperability for the other infrastructures is almost zero (i.e., the model only highlights the effect of the inoperability of the first infrastructure on the others).
 \begin{figure}[!ht]
\begin{center}
\includegraphics[width=3.2in]{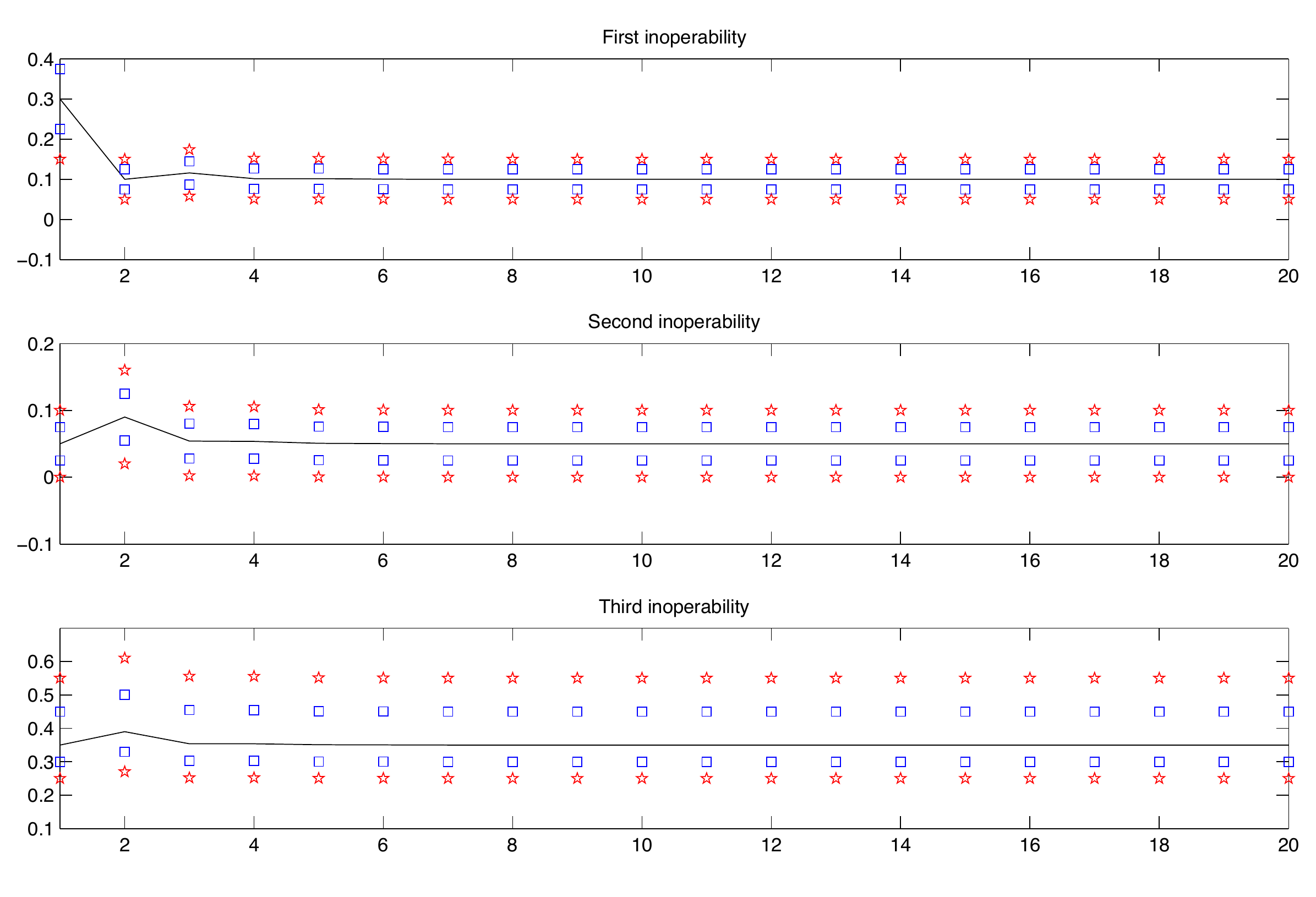}
\caption{Evolution of crisp IIM model (black line) and evolution of the DFS version attested in the first infrastructure for $\alpha=0$ (stars) and $\alpha=0.5$ (boxes). For each $\alpha$-level chosen, the evolution of both left and right extrema of the interval is plotted.}
 \label{fig:stability}
\end{center}
\end{figure}
 
\subsection{Networked systems with total information sharing}
Let us now consider the case where an array of $3$ fuzzy IIM models are interconnected in order to achieve synchronization. To this end let us suppose that each system only receives data coming from its field; in other words the $i$-th system receives an exogenous ``disturbance" $d^i$ that represents the magnitude of the induced perturbation on the $i$-th infrastructure. Therefore each of the $3$ systems has a perturbation $c^i$ where
 
 \begin{equation}
c^1=\begin{bmatrix}
d^1\\ 0 \\ 0
\end{bmatrix}; \quad 
c^2=\begin{bmatrix}
0\\ d^2\\ 0
\end{bmatrix}; \quad 
c^3=\begin{bmatrix}
 0 \\ 0 \\d^3
\end{bmatrix}
\end{equation}
 
Analogously, each of the three systems has its own initial condition $q^i(0)$ where 
 \begin{equation}
q^1(0)=\begin{bmatrix}
q_1^1(0)\\ 0 \\ 0
\end{bmatrix}; \quad 
q^2(0)=\begin{bmatrix}
0\\ q^2_2(0)\\ 0
\end{bmatrix}; \quad 
q^3(0)=\begin{bmatrix}
 0 \\ 0 \\q^3_3(0)
\end{bmatrix}
\end{equation}

In order to achieve synchronization, there is the need to suitably choose the data to be exchanged among systems, i.e., the structure and dimension of matrix $C$; according to Theorem (\ref{teosynchro}), if $(C^TC)$ is non-singular, the choice of the control matrix $\Omega$ is extremely simplified.

Let us first suppose that the systems share their complete information, i.e., $y_i(k)=w_i(k)$ and then $C=I_{2n}$. In this case it is sufficient to set $\Omega=K$ with $K$ diagonal and positive to achieve synchronization.

Let us consider the topology represented  in Figure \ref{fig:graph} (the weights of $\Gamma$ are reported in the figure); as a consequence,  the resulting Laplacian matrix $L$ is given by:
\begin{equation}
\begin{matrix}
L = 
\begin{bmatrix}
1 & -1 & 0\\ -1 & 3 & -2\\0 & -2 &2
\end{bmatrix}
\end{matrix}
\end{equation}

  \begin{figure}[!ht]
\begin{center}
\includegraphics[width=3.2in]{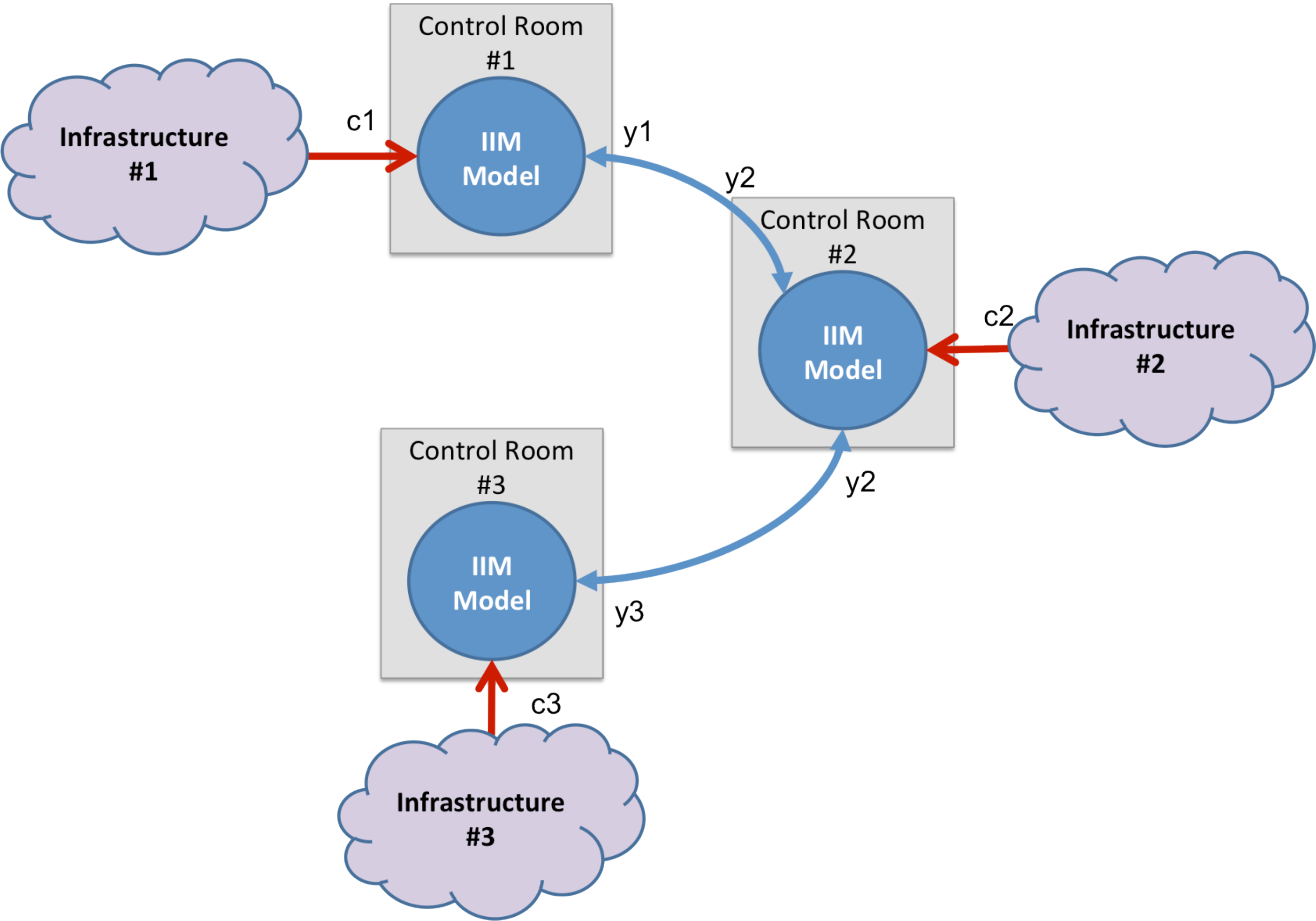}
\caption{In the proposed case study three systems, each equipped with an IIM model, are interconnected by means of weighted edges. Moreover, each system only receives inputs originated within its field.}
 \label{fig:graph}
\end{center}
\end{figure}

Note that $l_m=1$ (i.e., the minimum diagonal element of $L$); then since the minimum diagonal entry of matrix $A$ is $0.1$, according to theorem (\ref{syncfz}) the entries of the diagonal matrix $K$ should be $k_{ii}\leq 0.1$, in order to grant synchronization.
Therefore in this case $\Omega=K$, with $K$ diagonal and such that its diagonal entries $k_{ii}=0.03$ for all $i$.

As stated before, unfortunately, a complete information sharing approach, although effective, represents an unfeasible solution, since the operators are reluctant to completely disclose the infrastructure state. 
\subsection{Networked systems with total information sharing}

To overcome the above drawback, there is the need to provide an output matrix $C$ of reduced dimensions, in order to disclose as little information as possible.
To this end, let us introduce the \emph{expected inoperability ratio} ($EIR$) at time step $k$  for $i$-th system as follows:
 \begin{eqnarray}
EIR_i(k)=(I-A)^{-1}Bc_i(k)-q_i(k)
\label{eir}
\end{eqnarray} 
Recall from Eq. (\ref{qeq}) that the equilibrium reached by the isolated IIM system is given by $(I-A)^{-1}Bc$.
Hence (\ref{eir}) represents the difference between the steady state inoperability foreseen for the $i$-th copy and the actual degree of inoperability estimated by such a IIM copy, i.e., $q_i(k)$.

Due to its characteristics, this index provides only very general information, without disclosing the actual data.
Notice that the case $EIR_{ij}(k)=0$ represents two completely different situations; the case in which the $j$-th infrastructure is completely working, and the case in which the expected equilibrium coincides with the actual inoperability. This ambiguity emphasizes that the $EIR$ index alone does not guarantee the observability of the system. Indeed the corresponding $n\times 2n$ matrix $C$ is in the form $C=\begin{bmatrix}-I_n & (I-A)^{-1}B\end{bmatrix}$, and does not satisfy the condition on $C^TC$.
There is therefore the need to consider also at least another variable to be exchanged among systems; still, in order to avoid disclosure of sensible data, we chose to let each system $i$ exchange the average of its inoperability vector $q_i$, i.e., $\hat q_i(k)=\frac{1}{n}\sum_{j=1}^n q_{ij}(k)$. 
With this choice the resulting $(n+1)\times 2n$ matrix $C$ becomes:
\begin{equation}
\label{pppo}
C=\begin{bmatrix}-I_n & (I-A)^{-1}B\\ \frac{1}{n}\cdots \frac{1}{n}&0 \cdots 0\end{bmatrix}
\end{equation}
Let $H=(I-A)^{-1}B$; $C^TC$ is given by
\begin{equation}
\label{pppo}
C^TC=\begin{bmatrix}(1+\frac{1}{n^2})I_n & -H\\ -H^T&H^TH\end{bmatrix}
\end{equation}
It is a standard result that if $P, Q, R, S$ are square matrix of the same dimensions and $P$ is invertible then
$$
det(\begin{bmatrix}P & Q \\ R & S\end{bmatrix})=det(P)det(S-RP^{-1}Q)
$$
therefore
$$det(C^TC)=det( (1+\frac{1}{n^2})I_n)det(\frac{1}{n^2}H^TH)=2\frac{n^2+1}{n^4}det(H)$$
which is nonzero, since det(H) is non zero; therefore $C^TC$ is nonsingular.

The particular $C$ matrix chosen therefore, allows synchronization without disclosing sensible information, since only aggregate data is shared.
Note that, setting $\Omega=KC^\dag$, we have that $\Omega C= K$, therefore using exactly the same matrix $K$ defined above, the $3$ systems have exactly the same evolution with respect to the total information case.

Let us now provide a simulation example.
In our simulation the following values were considered for the disturbances:
\begin{equation}
\begin{matrix}
d^1&=&\{0.05,0.1,0.15\}\\ d^2&=&\{0,0.05,0.1\}\\ d^3&=&\{0.25,0.35,0.55\} 
\end{matrix}
\end{equation}
moreover, we chose to set $q_1^1(0)=\{0.1,0.2,0.3\}$, while the other systems had zero initial conditions. 

As stated above, in order to reconstruct the evolution of the whole system, there is the need to adopt for each system a modified initial condition $\hat w_i(0)= 3 w_i(0)$.
Note that the width of the resulting fuzzy number is greater than the initial one; this means that uncertainty increases during the evolution of systems, just as expected.

In Figure \ref{fig:levelwise} the synchronization of left and right extrema of the $\alpha$-levels of $q_1$ and $c_1$ are plotted for each system in the case $\alpha=0.9$ and $\alpha=0.5$.

\begin{figure}[!ht]
\begin{center}
\includegraphics[width=3.2in]{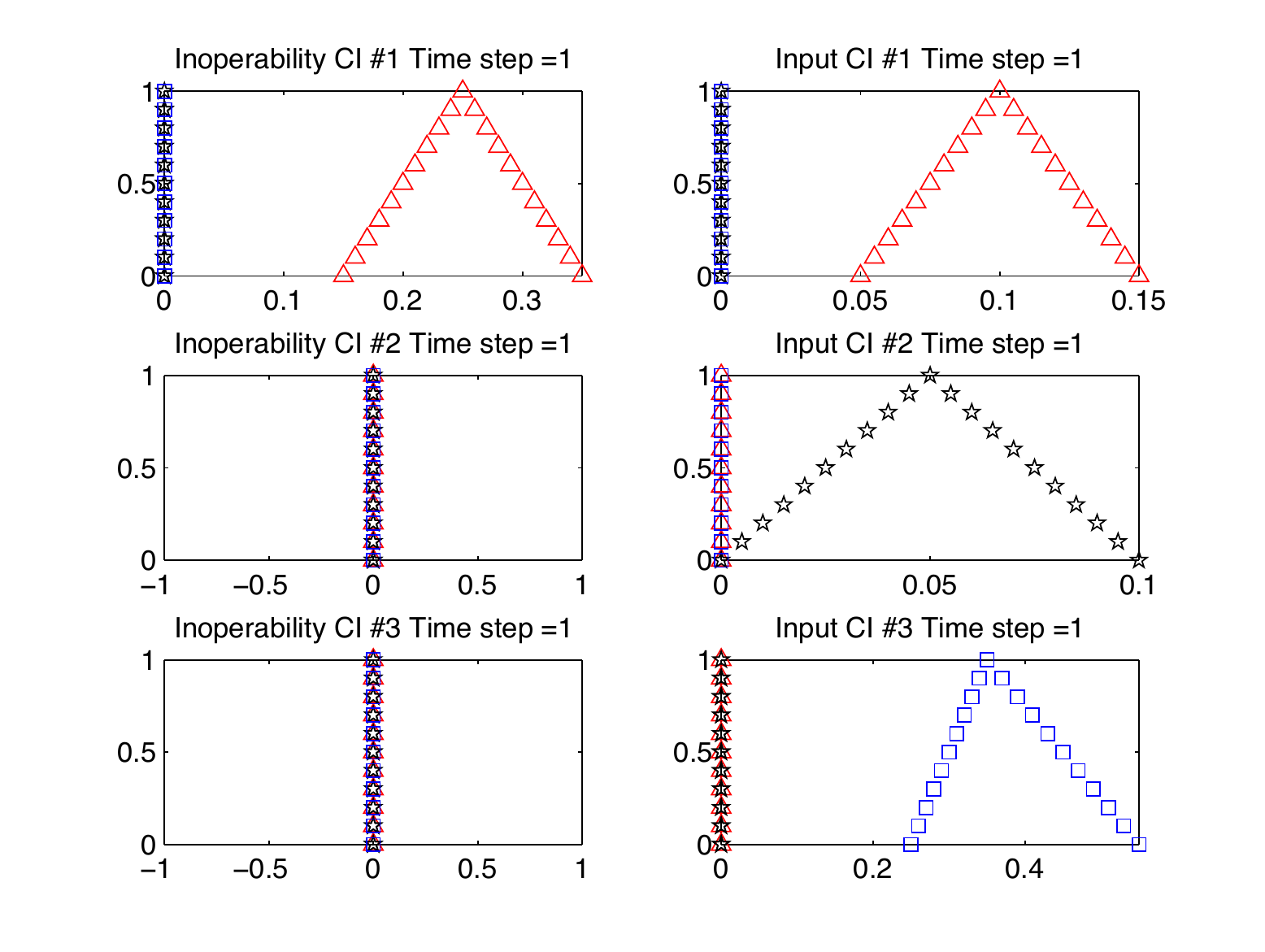}
\caption{Initial state for each state variable and each system (triangles, stars and boxes represent system $1$, $2$ and $3$, respectively).}
 \label{fig:allstart}
\end{center}
\end{figure}

\begin{figure}[!ht]
\begin{center}
\includegraphics[width=3.27in]{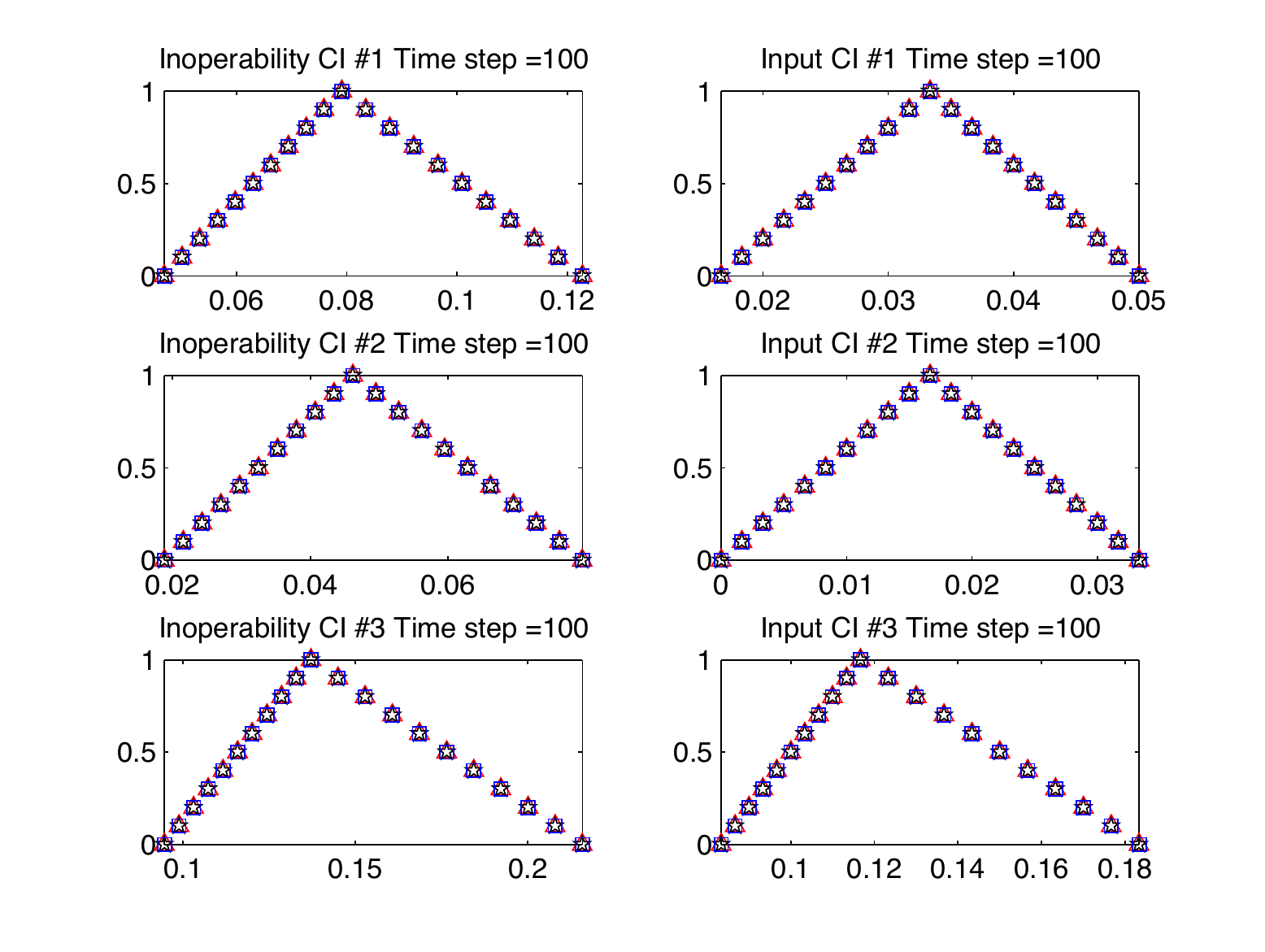}
\caption{Final synchronized state for each state variable and system.}
 \label{fig:allstop}
\end{center}
\end{figure}

\begin{figure}[!ht]
\begin{center}
\includegraphics[width=3.2in]{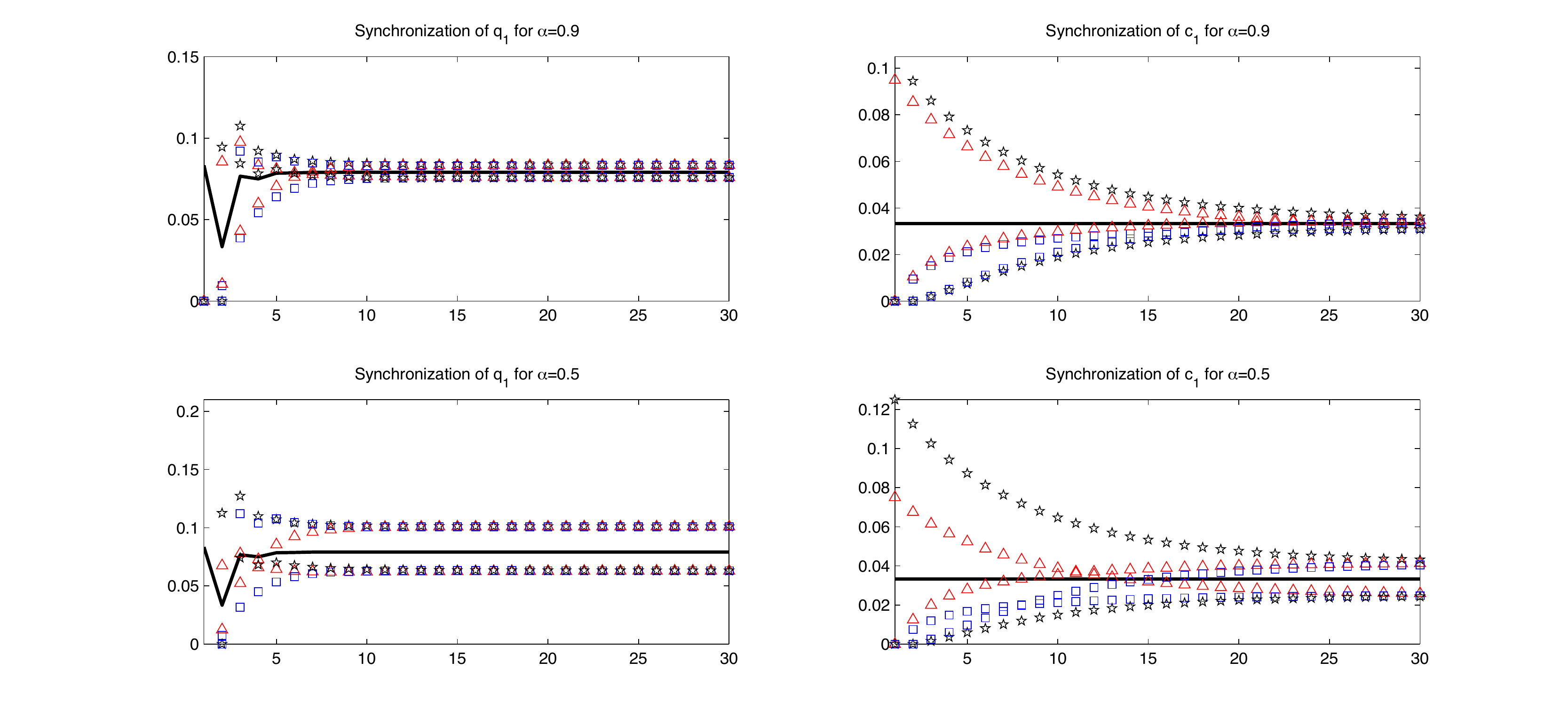}
\caption{ Synchronization of $q_1$ and $c_1$ for the three systems (triangles, stars and boxes represent system $1$, $2$ and $3$, respectively) with $\alpha=0.9$ and $\alpha=0.5$. The evolution of the crisp system with the composition of the initial conditions and the composition of the perturbations (the central values of the fuzzy numbers) is also plotted with bold line.}
 \label{fig:levelwise}
\end{center}
\end{figure}

To conclude this Section, note that the distributed approach is indeed able to capture the evolution of all the infrastructures (e.g.,  in the distributed approach the third infrastructure has a steady fuzzy inoperability whose peak value is $0.55$, while in the isolated example of Section 6.2 the peak value of the third infrastructures' inoperability was about $0.025$). This is particularly true considering the first infrastructure; in fact the peak value of the steady inoperability reached in the distributed case is $0.31$ while in the isolated case is about $0.08$, since in the distributed case the actual domino effects are taken into account. 

\section{Conclusions}
\label{conclusions}

In this paper the distributed synchronization and consensus problems have been extended in the fuzzy fashion, in order to manage uncertainty and vague information, with the aim to provide a distributed tool for the analysis of the state of critical infrastructures, when human operators and actors are directly involved, by means of partial state observations, as well as a framework for the synchronization of arrays of fuzzy interdependency models.

Future work will be devoted to extend the framework, in order to take into account distributed systems with uncertain dynamics.

\appendix
\section{Proofs}
\label{proofs}

\noindent {\bf Proof of Theorem \ref{lem:tfn}}

\noindent It is sufficient to show that the set of triangular fuzzy numbers is a subset of $\Psi$.
Let a triangular fuzzy number $\mu_t=\{\mu_l, \mu_c, \mu_r\}$ and consider $p,q \in \mathbb{R}, p\leq q$; let $h=\phi p + (1-\phi)q, \phi \in [0,1]$.
First of all consider the case in which $p\leq q \leq \mu_c$ or $\mu_c \leq p \leq q$. The value of $\mu_t(h)$ is given by:
\begin{equation}
\label{eq:rect}
\mu_t(h)=\frac{\mu_t(q)-\mu_t(p)}{q-p}(h-p)+\mu_t(p)
\end{equation}
substituting $h$ with $\phi p + (1-\phi)q$, condition (\ref{eq:psieq}) is satisfied as an equivalence.
Consider the case in which $p\leq \mu_c \leq q$ and $\mu_t(p)\leq \mu_t(q)$. In this case there exists a $q^*$ such that $p \leq h \leq q^*$ implies condition (\ref{eq:psieq}) is satisfied as an equivalence, while the case $q^* < h \leq q$ implies condition (\ref{eq:psieq}) is satisfied.
A similar result holds in the case in which $p\leq \mu_c \leq q$ and $\mu_t(p) > \mu_t(q)$, proving the statement.

\noindent {\bf Proof of Theorem \ref{lem:stab3}}

\noindent First, there is the need to prove that, under the hypotheses
\begin{equation}
\label{implpippo} 
V(x(0),0)\leq z(0) \Rightarrow V(x(k+1),k+1)\leq z(k+1); \quad  \forall k\geq 0
\end{equation}
In \cite{Laksh:2003}, Theorem 5.2.1, it is proven that, given a scalar function $g(r,k)$ nondecreasing in $r$ for each $k\geq 0, k\in \mathbb{N}^+$ and given two sequences of real numbers $\{c_k\}, \{d_k\}$ such that $c_0\leq d_0$ and such that the following inequality holds for all $k\geq 0$
\begin{eqnarray}
\label{xxx}
c_{k+1}\leq g(c_k,k)\\
\label{yyy}
d_{k+1}\geq g(d_k,k)
\end{eqnarray}
then $c_k\leq d_k$, for all $k\geq 0$. 

Such a Theorem trivially extends to the vectorial case, therefore, considering a vectorial $G(r,k):\mathbb{R}^N\times \mathbb{N}^+\rightarrow \mathbb{R}^N$ and setting $\{c_k\}$ as the sequence of defuzzyfied values $\{V(x(k),k)\}$ and $\{d_k\}$ equal to the sequence of values assumed by system (\ref{fuzzysystem}) $\{z(k)\}$, implication (\ref{xxx}) holds becuse of implication (\ref{dpiuineq}) and implication (\ref{yyy}) is true due to the monotonicity of $G(\cdot,\cdot)$; therefore implication (\ref{implpippo}) is proved.

Suppose that the trivial solution of (\ref{eq:crispdiffsist}) is stable. Then, for each $a(\epsilon)>0$ there exists a positive $\delta_1(\epsilon)$ such that
\begin{equation}
\label{fimpl}
\begin{matrix}
d_{\mathbb{R}^N}[z(0),0_N]=\sum_{j=1}^N z_j(0)<\delta_1(\epsilon) \Rightarrow \\
 \Rightarrow d_{\mathbb{R}^N}[z(k+1),0_N]=\sum_{j=1}^N z_j(k+1)<a(\epsilon)
\end{matrix}
\end{equation}
where $d_{\mathbb{R}^N}[\cdot,\cdot]$ is the distance in $\mathbb{R}^N$.

To prove the stability of Sytem (\ref{fuzzysystem}) there is the need to show that, for any $\epsilon\geq 0$ there exists a  positive $\delta(\epsilon)$ such that if $d_{\mathbb{E}^N}[x(0),\hat 0]\leq \delta(\epsilon)$ then $d_{\mathbb{E}^N}[x(k),\hat 0]<\epsilon$, for each $k\geq 0$.

Since $z(0)= V(x(0),0)$, on the base of the Implication (\ref{implpippo}), it follows that $V_0(x(k+1),k+1)\leq \sum_{j=1}^Nz(k+1)$; therefore, according to Inequality (\ref{bound_condition}):

\begin{eqnarray}
\begin{matrix}
a(d_{\mathbb{E}^N}[x(k+1),\hat 0])\leq V_0(x(k+1),k+1)\leq \\
\leq \sum_{j=1}^N z_j(k+1)< a(\epsilon)
\end{matrix}
\end{eqnarray}
Due to the continuity and monotonicity of $a(\cdot)$, it follows that 
\begin{eqnarray}
a(d_{\mathbb{E}^N}[x(k+1),\hat 0])< a(\epsilon) \Rightarrow d_{\mathbb{E}^N}[x(k+1),\hat 0] < \epsilon
\end{eqnarray}
and the stability of system (\ref{fuzzysystem}) is proved.

For asymptotic stability note that 
\begin{eqnarray*}
0\leq a(d_{\mathbb{E}^N}[x(k+1),\hat 0])\leq V_0(x(k+1),k+1)\leq \\
\leq \sum_{j=1}^N z_j(k+1)
\end{eqnarray*}
If System (\ref{eq:crispdiffsist}) is asymptotically stable, then $z_j(k+1)\rightarrow 0$ as $k\rightarrow \infty$, for each $j=1,\ldots,m$, and it follows that $d_{\mathbb{E}^N}[x(k+1),\hat 0]\rightarrow 0$ as $k\rightarrow \infty$, too. 

\noindent {\bf Proof of Corollary \ref{teolin}}

\noindent It will be shown that, choosing $V_i({ x}(k),k)=d_{\mathbb{E}}(x_i(k),\hat 0_i)$ the conditions required by Theorem \ref{lem:stab3} are verified.

Inequality (\ref{bound_condition}) is satisfied if it is true component-wise, that is, if:
\begin{equation}
a(d_{\mathbb{E}^n}({ x}(k),\hat { 0}))\leq \sum_{i=1}^n d_{\mathbb{E}}(x_i(k),\hat0_i) =d_{\mathbb{E}^n}({ x}(k),\hat { 0})
\end{equation}
Since $d_{\mathbb{E}^n}({ x}(k),\hat { 0})\geq 0$, the inequality is verified choosing $a(r)=\psi r$, with $\psi \in [0,1]$. In this case $a(\cdot)$ is continuous and monotone non-decreasing.
It remains to prove inequality (\ref{dpiuineq}).

Note that $$V(x(k),k)=[d_{\mathbb{E}}(x_1(k),\hat 0_1), \ldots, d_{\mathbb{E}}(x_n(k),\hat 0_n)]^T$$ There is the need to prove that, for all $i=1, \ldots, n$
\begin{equation}
\label{TBPved}
d_{\mathbb{E}}(f_i({ x}(k),k),\hat 0_i) \leq f_i(V({ x}(k),k),k)
\end{equation}
Note that $[\hat 0_i]^\alpha=\{0\}$ for all $\alpha$, hence:
\begin{equation}
\begin{matrix}
{ d}_{\mathbb{E}}(w,\hat { 0}_i)=\sup_{\alpha >0}\{\rho_{d_\mathbb{R},\mathbb{R}}([w]^\alpha,\{0\}) \}=\\ \\
=\max\{d_{\mathbb{R}}([w]^0,\{0\})\}=\max\{||z|| : z \in [w]^0\}
\end{matrix}
\end{equation}
that is to say that the distance of a given fuzzy set is the maximum value of the norm of its support (i.e., the absolute value of one of the endpoints of the support).
Analogously we have that
\begin{equation}
V_i({ x}(k),k)=\max\{||z|| : z \in [x_i(k)]^0\}
\end{equation}

Note that ${ f}_i(\cdot,k)$ is monotone nondecreasing and that in the left term of inequality \eqref{TBPved} the $\max \{||\cdot||\}$ are taken component-wise; therefore inequality \eqref{TBPved} is always verified.

Hence the requirements of Theorem \ref{lem:stab3} are fulfilled and the proof is complete.

\bibliography{biblio}{}
\bibliographystyle{plain}

\end{document}